\newcommand{\red}{\textcolor{red}}
\definecolor{colorone}{rgb}{1,0.36,0.03}
\definecolor{colortwo}{rgb}{0.54,0.71,0.03}
\definecolor{colorthree}{rgb}{0.01,0.51,0.93}
\definecolor{colorfour}{rgb}{0.47,0.26,0.58}
\def\a{\alpha}
\newcommand*{\rom}[1]{\expandafter\@slowromancap\romannumeral #1@}
\newcounter{remark}
\newenvironment{remark}[1][]{\refstepcounter{remark}\par\medskip\noindent%
\textbf{Remark~\theremark #1} }{\medskip}
\def\grd@save@target#1{%
  \def\grd@target{#1}}
\def\grd@save@start#1{%
  \def\grd@start{#1}}
\tikzset{
  grid with coordinates/.style={
    to path={%
      \pgfextra{%
        \edef\grd@@target{(\tikztotarget)}%
        \tikz@scan@one@point\grd@save@target\grd@@target\relax
        \edef\grd@@start{(\tikztostart)}%
        \tikz@scan@one@point\grd@save@start\grd@@start\relax
        \draw[minor help lines,magenta] (\tikztostart) grid (\tikztotarget);
        \draw[major help lines] (\tikztostart) grid (\tikztotarget);
        \grd@start
        \pgfmathsetmacro{\grd@xa}{\the\pgf@x/1cm}
        \pgfmathsetmacro{\grd@ya}{\the\pgf@y/1cm}
        \grd@target
        \pgfmathsetmacro{\grd@xb}{\the\pgf@x/1cm}
        \pgfmathsetmacro{\grd@yb}{\the\pgf@y/1cm}
        \pgfmathsetmacro{\grd@xc}{\grd@xa + \pgfkeysvalueof{/tikz/grid with coordinates/major step}}
        \pgfmathsetmacro{\grd@yc}{\grd@ya + \pgfkeysvalueof{/tikz/grid with coordinates/major step}}
        \foreach \x in {\grd@xa,\grd@xc,...,\grd@xb}
        \node[anchor=north] at (\x,\grd@ya) {\pgfmathprintnumber{\x}};
        \foreach \y in {\grd@ya,\grd@yc,...,\grd@yb}
        \node[anchor=east] at (\grd@xa,\y) {\pgfmathprintnumber{\y}};
      }
    }
  },
  minor help lines/.style={
    help lines,
    step=\pgfkeysvalueof{/tikz/grid with coordinates/minor step}
  },
  major help lines/.style={
    help lines,
    line width=\pgfkeysvalueof{/tikz/grid with coordinates/major line width},
    step=\pgfkeysvalueof{/tikz/grid with coordinates/major step}
  },
  grid with coordinates/.cd,
  minor step/.initial=.2,
  major step/.initial=1,
  major line width/.initial=2pt,
}
\newcommand{\ket}[1]{|#1\rangle}
\newcommand{\bra}[1]{\langle#1|}
\newcommand{\Tr}{\mbox{Tr}}
\def\be{\begin{equation}}
\def\ee{\end{equation}}
\def\ba{\begin{array}}
\def\ea{\end{array}}
\def\Tr{\mathrm{Tr}}
\newcommand{\rr}{\mathbf{r}}
\newcommand{\p}{\mathbf{p}}
\newcommand{\q}{\mathbf{q}}
\newcommand{\s}{\mathbf{s}}
\newcommand{\ttt}{\mathbf{t}}
\newcommand{\x}{\mathbf{x}}
\newcommand{\y}{\mathbf{y}}
\newcommand*\bigcdot{\mathpalette\bigcdot@{.5}}
\newcommand*\bigcdot@[2]{\mathbin{\vcenter{\hbox{\scalebox{#2}{$\m@th#1\bullet$}}}}}
\newcommand{\BigWedge}{\mathord{\adjustbox{valign=C,totalheight=.6\baselineskip}{$\boldsymbol\bigwedge$}}}
\newcommand{\BigVee}{\mathord{\adjustbox{valign=C,totalheight=.6\baselineskip}{$\boldsymbol\bigvee$}}}
\newcommand{\BigCup}{\mathord{\adjustbox{valign=C,totalheight=.7\baselineskip}{$\bigcup$}}}
\newtheorem{thm}{Theorem}
\newtheorem{lem}{Lemma}
\newtheorem{proposition}{Proposition}
\newtheorem{definition}{Definition}
\begin{document}

\title{$\mbox{The Complementary Information Principle of Quantum Mechanics}$}
\author{Yunlong~Xiao}
\email{mathxiao123@gmail.com}
\address{Department of Mathematics and Statistics and Institute for Quantum Science and Technology, University of Calgary, Calgary, Alberta T2N 1N4, Canada}
\author{Kun~Fang}
\email{kf383@cam.ac.uk}
\address{Department of Applied Mathematics and Theoretical Physics, University of Cambridge, Cambridge, CB3 0WA, UK}
\author{Gilad~Gour}
\email{gour@ucalgary.ca}
\address{Department of Mathematics and Statistics and Institute for Quantum Science and Technology, University of Calgary, Calgary, Alberta T2N 1N4, Canada}

\begin{abstract}
The uncertainty principle bounds the uncertainties about incompatible measurements, clearly setting quantum theory apart from the classical world. Its mathematical formulation via uncertainty relations, plays an irreplaceable role in quantum technologies. However, neither the uncertainty principle nor uncertainty relations can fully describe the complementarity between quantum measurements. As an attempt to advance the efforts of complementarity in quantum theories, we formally propose a \emph{complementary information principle}, significantly extending the one introduced by Heisenberg. 
First, we build a framework of black box testing consisting of pre- and post-testing with two incompatible measurements,
introducing a rigorous mathematical expression of complementarity with definite information causality. Second, we provide majorization lower and upper bounds for the complementary information by utilizing the tool of semidefinite programming. In particular, we prove that our bounds are optimal under majorization due to the completeness of the majorization lattice. Finally, as applications to our framework, we present a general method to outer-approximating all uncertainty regions and also establish fundamental limits for all qualified joint uncertainties. 

\end{abstract}

\maketitle

\textbf{\textit{Introduction.---}} 
Quantum mechanics has revolutionized our understanding of the physical world, leaving us a key message that our world is inherently unpredictable.
The well-known \emph{Heisenberg's uncertainty principle}~\cite{Heisenberg1927} states that it is impossible to prepare a state such that its outcome probability distributions from two incompatible measurements are both sharp. This heuristic idea was first formulated by Kennard~\cite{Kennard1927} (see also the work of Weyl \cite{Weyl1927}) into a mathematically rigorous inequality for position and momentum measurements, 
where the uncertainties were quantified through the standard deviation \cite{Robertson1929,Schrodinger1930,Huang2012,Maccone2014,Xiao2016W,Xiao2017I,Guise2018}. Later on, an alternative quantitative formulation based on entropic measures has been introduced by Bia\l ynicki-Birula and Mycielski \cite{Bialynicki1975}. Recently, physicists have also employed majorization to study uncertainty relations \cite{Partovi2011,Gour2013,Rudnicki2013,Rudnicki2014,Rudnicki2018,Yuan2019E}. All of these are known as preparation uncertainty with detailed reviews given in \cite{Busch2007,Wehner2010,Coles2017}.

The uncertainty principle manifests another deeper concept in quantum theory, namely \emph{complementarity}~\cite{Bohr1928}---a given physical attribute can only be revealed at the price of another complementary attribute being suppressed. As a more general concept, complementarity can also be exhibited through ``duality paradoxes'', such as wave-particle duality~\cite{Coles2014E}. The arise of complementarity differentiates quantum theory from its classical counterpart, leading to a plethora of applications such as 
entanglement detection \cite{Horodecki1994,Giovannetti2004,Guhne2004,Guhne2009}, Einstein-Podolsky-Rosen (EPR) steering detection \cite{Xiao2018Q,Rutkowski2017,Riccardi2018,Costa2018E,Costa2018S,Uola2019} as well as quantum key distribution \cite{Hall1995,Cerf2002,Renner2005,Renes2009}.
 
In recent years, uncertainty relations have been used as a standard tool to explore the complementarity nature of quantum measurements. A series of efforts have been devoted to seeking the optimal bounds on uncertainty relations for given specific uncertainty measures, such as Shannon or R\'enyi entropies \cite{Deutsch1983,Partovi1983,Kraus1987,Maassen1988,Ivanovic1992,Sanchez1993,Ballester2007,Wu2009,Berta2010,Li2011,Prevedel2011,Huang2011,Tomamichel2011,Coles2012,Coles2014,Kaniewski2014,Furrer2014,Berta2016,Xiao2016S, Xiao2016QM, Xiao2016U,Xiao2018H,Chen2018,Coles2019,Li2019}. However, most of this enormous body of work focuses on a single entropy measure, whereas the complete information about the physical attributes of incompatible measurements is fully contained in the outcome probability vectors themselves~\cite{Saha2019}. Hence, inevitable losses of information occurs whenever a high dimensional probability vector is projected into its one-dimensional entropic value. To have a full-scale understanding of complementarity between incompatible measurements and to be able to find more practical applications, it is therefore essential to consider a more general framework, in which the full probability vectors are considered.

In this paper, we introduce a new principle which we call the {\it complementary information principle} (CIP), characterizing the information trade-off between incompatible measurements, as characterized with respect to their full outcome probability vectors. This principle is different than the uncertainty principle as it does not rely on an uncertainty quantifier such as entropy or standard deviation. Since the CIP involves pairs of probability vectors (corresponding to the two incompatible measurements), it introduces a new type of partial order between pairs of probability vectors that we call marginal majorization. Based on this order and with the help of semidefinite programming (SDP), we investigate qualitative and quantitative aspects of our CIP, providing insights into the fundamental limits of quantum measurements. As an application to our CIP, we provide tight outer approximation of uncertainty regions with any given uncertainty measures in general \cite{Narasimhachar2016}, improving and generalizing the bounds given in \cite{Abdelkhalek2015}
which are only restricted to specific uncertainty measures, and in~\cite{Li2015,Abbott2016,Schwonnek2017,Busch2019} which are limited to weak measures.
The generality and efficiency of this approach makes it suitable as a benchmark for the forthcoming research on uncertainty regions. Finally, we discuss another application of our framework in bounding general forms of joint uncertainties. 
Indeed, our results outperform the universal uncertainty relations (UUR) \cite{Gour2013} and direct-sum majorization uncertainty relations (DSMUR) \cite{Rudnicki2014,Rudnicki2018} since we make an efficient use of the information gain from the pre-testing. We also showcase that our uncertainty regions are more informative than the best known Maassen and Uffink's (MU) entropic uncertainty relations \cite{Maassen1988}, such as breaking  the limitation of the harmonic condition. We stress that our principle is universal and captures  the essence of complementarity in quantum mechanics as it does not rely on any specific form of uncertainty measures.

 The main results of our study are presented as Theorems, with all the technical details of the proofs delegated to the Supplemental Material~\cite{SM}.
 
\begin{figure}[h]
\centering
\begin{tikzpicture}

  
  \node[] at (-2.7,-0.4) {(a) PUR: $\mathcal{J}(\p,\q) \geqslant c$.};

  \draw[very thick,dashed] (-5,3) rectangle (-0.75,0); 
  \draw[very thick,fill=black!70,black!70] (-4.8,2.5) rectangle (-3.8,1.8) node[pos=0.5,white] {\large $\Gamma_\rho$}; 
  \draw[very thick] (-3,2.5) rectangle (-2,1.8); 

  \draw[very thick,->] (-2.5,1.95) -- (-2.35,2.4);
  \draw[very thick] (-2.15,1.95) arc (10:170:0.35);

  \draw[very thick,fill=black!70,black!70] (-4.8,1.2) rectangle (-3.8,0.5) node[pos=0.5,white] {\large $\Gamma_\rho$}; 
  \draw[very thick,->] (-2.5,0.65) -- (-2.35,1.1);

  \draw[very thick] (-2.15,0.65) arc (10:170:0.35);
  \draw[very thick] (-3,1.2) rectangle (-2,0.5); 

  \node[] at (-2.5,2.7) {$M$};
  \node[] at (-2.5,0.3) {$N$};

  \draw[very thick,->] (-3.8,2.15) -- (-3,2.15);
  \draw[very thick,->] (-3.8,0.8) -- (-3,0.8); 
  
  \draw[very thick,->] (-2,2.15) -- (-1.15,2.15) -- (-1.15,1.75);
  \draw[very thick,->] (-2,0.8) -- (-1.15,0.8) -- (-1.15,1.2); 
  \node[] at (-1.6,2.4) {$\p$};
  \node[] at (-1.6,0.55) {$\q$};
  \node[] at (-1.3,1.475) {$\mathcal{J}(\p,\q)$};

  \node[] at (-3.4,2.4) {$\mathcal H$};
  \node[] at (-3.4,0.55) {$\mathcal H$};

  \begin{scope}[shift={(-0.25,0)}]
   \draw[fill=magenta,opacity=0.2] (-0.2,3) -- (3.7,3) -- (3.7,1.5) -- (-0.2,1.5);
  \draw[fill=cyan,opacity=0.2] (-0.2,1.5) -- (3.7,1.5) -- (3.7,0) -- (-0.2,0);

  \draw[very thick,dashed] (-0.2,3) rectangle (3.7,0); 
  \draw[very thick,fill=black!70,black!70] (0,2.5) rectangle (1,1.8) node[pos=0.5,white] {\large $\Gamma_\rho$}; 
  \draw[very thick] (1.8,2.5) rectangle (2.8,1.8); 

  \draw[very thick,->] (2.3,1.95) -- (2.45,2.4);
  \draw[very thick] (2.65,1.95) arc (10:170:0.35);

  \draw[very thick,fill=red!70,black!70] (0,1.2) rectangle (1,0.5) node[pos=0.5,white] {\large $\Gamma_\rho$}; 
  \draw[very thick] (1.8,1.2) rectangle (2.8,0.5);

  \draw[very thick,->] (2.3,0.65) -- (2.45,1.1);
  \draw[very thick] (2.65,0.65) arc (10:170:0.35);

  \draw[very thick,->] (1,2.15) -- (1.8,2.15) node[pos=0.5,shift={(0,0.22)}] {$\mathcal H$}; 
  \draw[very thick,->] (1,0.8) -- (1.8,0.8)  
         node[pos=0.5,shift={(0,-0.22)}] {$\mathcal H$};


  \draw[thick, dashed] (-0.2,1.5) -- (3.7,1.5);

  \node[] at (0.55,2.75) {\footnotesize \textbf{Pre-testing}};
  \node[] at (0.6,0.2) {\footnotesize \textbf{Post-testing}};

  \node[] at (2.3,2.7) {$M$};
  \node[] at (2.3,0.3) {$N$};
  
  \draw[very thick,->] (2.8,2.15) -- (3.3,2.15);
  \node[] at (3.48,2.15) {$\p$};
  \draw[very thick,->] (2.8,0.8) -- (3.3,0.8);
   \node[] at (3.48,0.8) {$\q$};

  \node[] at (2,-0.4) {(b) CIR: $\rr \prec \q \prec \ttt$.};
  \end{scope}
\end{tikzpicture}
\caption{(color online) Scenarios of preparational uncertainty relation (PUR) (a) and complementary information relation (CIR) (b). PUR gives a state-independent bound, namely $c$, and conditioned on information gain $\p$, CIR provides the optimal bounds for $\q$ under majorization ``$\prec$''; that is $\rr \prec \q \prec \ttt$.}
\label{bb}
\end{figure}
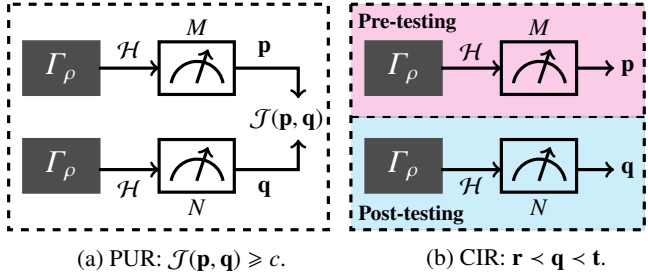

\textbf{\textit{Framework of Black Box Testing.---}} 
The basic task of black box testing is shown in Fig.~\ref{bb}. An unknown black box prepares two independent and identically distributed resources $\rho$, which are being tested by incompatible measurements $M$ and $N$ chronologically. Without loss of generality, we assume that the test with $M$ (a.k.a. \emph{pre-testing}) is performed first. After that, we do the test with $N$ (a.k.a. \emph{post-testing}). 
Each test is performed repeatedly, returning us an outcome probability distribution.
Prior to the test, the knowledge associated with the outcomes is associated with the ``uncertainty'' of the quantum measurement. Once the test is completed and its outcome is physically observed, this uncertainty turns into an ``information gain''. In this paper we focus on the complementarity between the information gain from the pre-testing and the uncertainty associated with post-testing (before the quantum measurement is performed). 

Formally, consider a preparation channel $\mathrm{\Gamma}_{\rho}$ inside the black box generating quantum state $\rho$ \cite{Kraus1983}.
Then the outcome probability distribution of the pre-testing with (basis) measurement $M= \{ |u_{j}\rangle\langle u_j| \}_{j=1}^n$ is given by the Born's rule
%
$\p :=(\langle u_{j}|\,\rho\,|u_{j}\rangle)_{j=1}^n$.
Since the pre-testing is repeated with the same measurement, $M$, rather than a complete tomography process, there exist many density matrices leading to the same probability distribution of the measurement outcomes. If the pre-testing outcome occurs according to the probability distribution $\p = (c_j)_{j=1}^n$, we collect all candidates of $\rho$ compatible with this outcome as
\begin{align}
S(M,\p):=\{\,\rho \geqslant 0\;|\; \langle u_{j}|\,\rho\,|u_{j}\rangle = c_j, \forall j = 1,\cdots,n\,\}.
\end{align} 
This set of quantum states forms the information gain from the pre-testing, manifesting the physical attribute of $M$ via its classical outcome $\p$, and narrowing down the possible range of $\rho$ being tested. Since the post-testing will be performed over the same quantum state $\rho$, its outcome is necessarily confined by the information gain.  For a fixed post-testing measurement $N=\{ |v_{\ell}\rangle\langle v_\ell| \}_{\ell=1}^n$, we denote the set of all possible probability vectors $\q :=(\langle v_{\ell}|\,\rho\,|v_{\ell}\rangle)_{\ell=1}^n$ by 
\begin{align}
\mathcal{Q}(N|M,\p):=\left\{\big(\langle v_{\ell}|\,\rho\,|v_{\ell}\rangle\big)_{\ell=1}^n ~|~ \rho \in S(M,\p) \right\}.
\end{align}
We will also abbreviate $\mathcal{Q}(N|M,\p)$ as $\mathcal{Q}$ for simplicity. 
 Since the tuple $\{\p,\mathcal Q\}$ captures fully the complementarity between the measurements $M$ and $N$, we will work with this set directly, and quantify the uncertainty of $\mathcal Q$ based on the information gain $\p$.

\vspace{0.2cm}
\textbf{\textit{Complementary Information Principle.---}} 
Following the observation given in~\cite{Gour2013}, the uncertainty associated with a probability vector cannot be larger than the uncertainty associated with its randomly relabelled version. That is, a probability vector $\x \in \mathbb{R}^n$ is more uncertain than $\y \in \mathbb{R}^n$ if and only if $\x$ is majorized  by $\y$, $\x \prec \y$, i.e. $\sum_{j=1}^{k} x^{\downarrow}_{j} \leqslant \sum_{j=1}^{k} y^{\downarrow}_{j}$ for all $1\leqslant k \leqslant n-1$ \cite{Majorization}. Here the down-arrow indicates that the components of the corresponding vectors are arranged in a non-increasing order. 
Since we compare probability vectors in $\mathcal Q$ with the same information gain $\p$, we define the \emph{right-majorization} as $\left( \p_{1} ,  \q_{1} \right) \prec_{R}\left( \p_{2} ,  \q_{2} \right)$ if $\p_{1} = \p_{2}$ and  $\q_{1} \prec \q_{2}$,
emphasizing that two sets of the black box testing admit the same pre-testing distribution, while the first post-testing outcome is more uncertain than the second one. 
The \emph{left-majorization} can be similarly defined as $\left( \p_{1} ,  \q_{1} \right) \prec_{L}\left( \p_{2} ,  \q_{2} \right)$ if $\q_{1} = \q_{2}$ and  $\p_{1} \prec \p_{2}$. Both $\prec_{R}$ and $\prec_{L}$ are called here \emph{marginal majorizations}, which is different from the relative majorization discussed in \cite{Renes2016}.

Based on the notion of marginal majorizations, we are now in a position to state our main results.  Due to the fact that a probability simplex with majorization forms a complete lattice \cite{Rapat1991,Bondar1994,Bosyk2019}, then any possible probability vector $\q \in \mathcal{Q}$ must be confined within two unique probability vectors $\rr$, $\ttt$. The following theorems establish an explicit construction of the optimal choices of $\rr$ and $\ttt$.

\begin{thm}\label{main thm1}
Let $M= \{ |u_{j}\rangle\langle u_j| \}_{j=1}^n$ and $N=\left\{ |v_{\ell}\rangle\langle v_\ell| \right\}_{\ell=1}^n$ be the measurements of pre- and post-testing respectively. If the outcome probability distribution from $M$ is given by $\p = (c_j)_{j=1}^n$, then any outcome probability $\q$ from $N$ is bounded as $\left( \p ,  \rr \right) \prec_{R} \left( \p ,  \q \right) \prec_{R} \left( \p ,  \s \right)$ with
\begin{align}
\rr &= \left(R_{1}, R_{2}, \ldots, R_{n} \right) :=\left(r_{1}, r_{2}-r_{1}, \ldots, r_{n}-r_{n-1} \right),\label{eq: mj bound r}\\
\s &= \left(S_{1}, S_{2}, \ldots, S_{n} \right) :=\left(s_{1}, s_{2}-s_{1}, \ldots, s_{n}-s_{n-1} \right).
\end{align}
Each elements $r_{k}$, $s_{k}$ can be efficiently computed via the following SDPs respectively, 
\begin{align}
r_{k} & = \min \Big\{x \; \big|\; x \geqslant \Tr \left( N_{\mathcal{I}_k}\,\rho \right), \;\forall \mathcal{I}_{k}\subset [n],\; \rho \in S(M,\p)\Big\}\\
s_{k} & = \max \Big\{\Tr \left( N_{\mathcal{I}_k}\,\rho \right)\;\big|\;\forall \mathcal{I}_{k}\subset [n],\; \rho \in S(M,\p)\Big\},
\end{align}
where $\mathcal{I}_k$ denotes a subset of $[n]\equiv\{1,\cdots,n\}$ with cardinality $k$ and $N_{\mathcal{I}_k}:= \sum_{\ell \in \mathcal{I}_{k}} | v_\ell \rangle\langle v_\ell |$ is a partial sum with index $\mathcal I_k$.
\end{thm}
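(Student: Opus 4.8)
The plan is to translate both SDP quantities into the language of partial sums of sorted probability vectors and then read off the two majorization relations directly. The bridge is the elementary variational identity that, for any $\rho\in S(M,\p)$ with post-testing vector $\q=(\langle v_\ell|\rho|v_\ell\rangle)_\ell$, the sum of the $k$ largest entries of $\q$ equals $\max_{|\mathcal I_k|=k}\Tr(N_{\mathcal I_k}\rho)$, because $\Tr(N_{\mathcal I_k}\rho)=\sum_{\ell\in\mathcal I_k}q_\ell$ and maximizing over index sets of cardinality $k$ simply selects the $k$ largest coordinates. First I would use this to identify the SDP values as $s_k=\max_{\q\in\mathcal Q}\sum_{j=1}^k q_j^\downarrow$ and $r_k=\min_{\q\in\mathcal Q}\sum_{j=1}^k q_j^\downarrow$; for $r_k$ this requires reading the SDP as an epigraph reformulation, in which the auxiliary variable $x$ upper-bounds $\Tr(N_{\mathcal I_k}\rho)$ over all $\mathcal I_k$ for the optimized $\rho$, so that minimizing $x$ jointly with $\rho$ realizes the inner maximum over $\mathcal I_k$ followed by the outer minimum over $\rho$. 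Since $S(M,\p)$ is a nonempty compact spectrahedron and the objectives are linear in $\rho$, these are genuine SDPs whose optima are attained.

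For the upper bound I would argue directly. Fix any $\q\in\mathcal Q$. By the identity above, $\sum_{j=1}^k q_j^\downarrow\leqslant s_k$ for every $k$, and by the telescoping definition of $\s$ one has $s_k=\sum_{j=1}^k S_j$. Since the sum of the first $k$ entries of any vector is at most the sum of its $k$ largest entries, $\sum_{j=1}^k S_j\leqslant\sum_{j=1}^k S_j^\downarrow$; chaining the two inequalities gives $\sum_{j=1}^k q_j^\downarrow\leqslant\sum_{j=1}^k S_j^\downarrow$ for all $k$, i.e. $\q\prec\s$. Notably this step needs nothing about the ordering of $\s$. Normalization $s_n=1$ together with monotonicity $s_k\geqslant s_{k-1}$ (hence $S_k\geqslant0$) then confirms that $\s$ is a probability vector, so $(\p,\q)\prec_R(\p,\s)$ is well posed.

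The lower bound is where the real work lies, and it is the step I expect to be the main obstacle, because the naive mirror argument fails: from $\sum_{j=1}^k q_j^\downarrow\geqslant r_k=\sum_{j=1}^k R_j$ one only obtains a comparison with the \emph{unsorted} partial sums of $\rr$, whereas majorization demands $\sum_{j=1}^k R_j^\downarrow\leqslant\sum_{j=1}^k q_j^\downarrow$, and in general $\sum_{j=1}^k R_j^\downarrow\geqslant\sum_{j=1}^k R_j$. The crux is therefore to show that $\rr$ is already arranged in non-increasing order, so that its sorted and unsorted partial sums coincide. I would establish this from concavity of $k\mapsto r_k$: each map $k\mapsto\sum_{j=1}^k q_j^\downarrow$ has non-increasing increments and is hence concave, and a pointwise infimum of concave functions is concave, so $r_k=\min_{\q\in\mathcal Q}\sum_{j=1}^k q_j^\downarrow$ is concave in $k$. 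Concavity forces the increments $R_k=r_k-r_{k-1}$ to be non-increasing, while monotonicity of $r_k$ forces them to be nonnegative; thus $\rr$ is a sorted probability vector (with $r_n=1$) and $\sum_{j=1}^k R_j^\downarrow=r_k$. Combining with $\sum_{j=1}^k q_j^\downarrow\geqslant r_k$ yields $\rr\prec\q$ for every $\q\in\mathcal Q$, which together with the previous paragraph establishes $(\p,\rr)\prec_R(\p,\q)\prec_R(\p,\s)$.
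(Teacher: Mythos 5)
Your proof is correct and takes essentially the same route as the paper's: the same key identity $\sum_{j=1}^k q_j^{\downarrow} = \max_{\mathcal I_k}\Tr(N_{\mathcal I_k}\rho)$ identifying the SDP values with extremal sorted partial sums, the same one-line upper-bound argument that needs no ordering of $\s$, and the same crucial step of showing $\rr$ is already non-increasing. Your appeal to ``a pointwise infimum of concave functions is concave'' is precisely the paper's explicit computation $2r_k \geqslant r_{k-1}+r_{k+1}$ carried out at the optimizing state, so the two arguments coincide.
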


An intuitive understanding of this result can be illustrated in terms of Lorenz curves \cite{Majorization}. For any probability vector $\x=(x_i)_{i=1}^n$ in an non-increasing order, its associated Lorenz curve $\mathcal{L}(\x)$ is defined as the linear interpolation of the points $\big\{\big(k, \sum_{i=1}^k x_{i}\big)\big\}_{k=0}^{n}$ with the convention $(0,0)$ for $k=0$. Then the majorization relation $\x\prec \y$ can be geometrically interpreted as $\mathcal L(\x)$ laying everywhere below $\mathcal L(\y)$. Accordingly, to find the optimal $\rr$ and $\ttt$ such that $\rr \prec \q \prec \ttt$ for all $\q \in \mathcal Q$ is equivalent to find the tightest Lorenz curves bounding $\mathcal{L}(\q)$ from below and above for all $\q \in \mathcal Q$, respectively. For this purpose, the curves $\left\{\left(k, r_{k}\right)\right\}_{k=0}^{n}$ and $\left\{\left(k, s_{k}\right)\right\}_{k=0}^{n}$ with $r_0 = s_0 = 0$ from Theorem~\ref{main thm1} are  taken as the minimum and maximum of all the Lorenz curves  generating from the set $\mathcal{Q}$. Note that any Lorenz curve is concave by definition, and the minimum of a set of concave functions remains concave. Therefore, $\left\{\left(k, r_{k}\right)\right\}_{k=0}^{n}$ gives us a well-defined Lorenz curve, indicating the optimality of the lower bound $\rr$. However, this is not the case for $\left\{\left(k, s_{k}\right)\right\}_{k=0}^{n}$ since it is not necessarily concave in general. An explicit example is given in the Supplemental Material~\cite{SM}.

To obtain the optimal upper bound $\ttt$, it suffices to perform an additional \emph{flatness process} \cite{Cicalese2002}, constructing the tightest Lorenz curve from $\left\{\left(k, s_{k}\right)\right\}_{k=0}^{n}$.
Specifically, let $j$ be the smallest integer in $\left\{2, \ldots, n\right\}$ such that $S_{j}>S_{j-1}$, and $i$ be the greatest integer in $\left\{1, \ldots, j-1\right\}$ such that $S_{i-1} \geqslant (\sum_{k=i}^{j} S_{k})/(j-i+1):=a$. Define the vector $\bm t$ as
\begin{align}\label{eq; mj bound t}
\ttt := \left(T_{1}, \ldots, T_{n}\right) \quad \text{with} \quad T_{k} = 
     \begin{cases}
       a & \text{for}\quad k = i, \ldots, j \\
       S_{k} & \text{otherwise.} \\ 
     \end{cases}
\end{align}
Finally, the bounds $\rr$ in~\eqref{eq: mj bound r} and $\ttt$ in~\eqref{eq; mj bound t} can be ensured as optimal for the set $\mathcal Q$ under the order of majorization.

\begin{thm}\label{main thm2}
Based on the same settings as in Theorem~\ref{main thm1}, for any probability vectors $\x$ and $\y$ such that $\x \prec \q \prec \y$ for all $\q \in \mathcal{Q}$, it holds $\x \prec \rr \prec \q \prec \ttt \prec \y$.
\end{thm}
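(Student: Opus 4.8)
The plan is to recast the extremal quantities $r_k$ and $s_k$ from Theorem~\ref{main thm1} as pointwise extrema of Lorenz-curve heights over $\mathcal{Q}$, and then to invoke the lattice structure of majorization to certify optimality. The starting observation is that for any $\rho \in S(M,\p)$ producing $\q \in \mathcal{Q}$, the largest value of $\Tr(N_{\mathcal{I}_k}\rho)$ over all cardinality-$k$ index sets equals the $k$-th partial sum $\sum_{i=1}^k q_i^\downarrow$, since $\Tr(N_{\mathcal{I}_k}\rho)=\sum_{\ell\in\mathcal{I}_k} q_\ell$ is maximized by choosing the $k$ largest components. Substituting this into the two SDPs yields the clean identities $r_k = \min_{\q \in \mathcal{Q}} \sum_{i=1}^k q_i^\downarrow$ and $s_k = \max_{\q \in \mathcal{Q}} \sum_{i=1}^k q_i^\downarrow$; that is, the broken-line curves $\{(k,r_k)\}$ and $\{(k,s_k)\}$ are respectively the pointwise infimum and supremum of the family $\{\mathcal{L}(\q)\}_{\q \in \mathcal{Q}}$.

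With this reformulation the lower-bound half is immediate. First I would note that $\rr \prec \q$ for every $\q$, since $r_k \le \sum_{i=1}^k q_i^\downarrow$ by construction and $\sum_{i=1}^k R_i^\downarrow = r_k$ (the curve $\{(k,r_k)\}$ is an infimum of concave functions, hence concave, so $\rr$ is already non-increasing). For optimality, suppose $\x \prec \q$ for all $\q \in \mathcal{Q}$; then $\sum_{i=1}^k x_i^\downarrow \le \sum_{i=1}^k q_i^\downarrow$ for every $\q$ and every $k$, and minimizing the right-hand side over $\q$ gives $\sum_{i=1}^k x_i^\downarrow \le r_k$, whence $\x \prec \rr$.

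The upper-bound half proceeds dually but requires the flatness step. Any upper bound $\y$ satisfies $\sum_{i=1}^k q_i^\downarrow \le \sum_{i=1}^k y_i^\downarrow$ for all $\q$; maximizing over $\q$ gives $s_k \le \sum_{i=1}^k y_i^\downarrow$, so the concave curve $\mathcal{L}(\y)$ lies above every point $(k,s_k)$. The decisive step is to show that the cumulative curve $\{(k,\,\sum_{i=1}^k T_i)\}$ produced by the flatness process is exactly the \emph{least concave majorant} of $\{(k,s_k)\}$. Granting this, concavity gives $\q \prec \ttt$ (the curve dominates $\{s_k\}$, which dominates every $\mathcal{L}(\q)$), while minimality gives $\sum_{i=1}^k T_i \le \sum_{i=1}^k y_i^\downarrow$ for the arbitrary concave majorant $\mathcal{L}(\y)$, i.e. $\ttt \prec \y$. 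Concatenating with the lower-bound chain yields $\x \prec \rr \prec \q \prec \ttt \prec \y$, and one checks $\sum_k T_k = s_n = 1$, so $\ttt$ is a genuine probability vector.

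I expect the main obstacle to be the final claim that the flatness construction outputs precisely the least concave majorant. Three things must be verified: that replacing $S_i,\dots,S_j$ by their average $a$ produces a non-increasing increment sequence (so $\ttt$ is a legitimate, concave Lorenz curve), that the flattened curve still dominates all $s_k$, and that it is the pointwise-smallest concave curve doing so. The interval $[i,j]$ is chosen exactly so that the concavity defect at the first violation $S_j>S_{j-1}$ is erased while the condition $S_{i-1}\ge a$ keeps the left endpoint concave; arguing that a single such flattening, iterated until no violation remains, converges to the concave hull is the technical heart, and is where I would appeal to the ``pool-adjacent-violators''/least-concave-majorant characterization of~\cite{Cicalese2002}.
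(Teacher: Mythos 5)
Your proposal is correct and follows essentially the same route as the paper: both identify $r_k$ and $s_k$ as the pointwise minimum and maximum of the partial sums $\sum_{i=1}^k q_i^{\downarrow}$ over $\mathcal{Q}$, obtain the lower-bound optimality by minimizing over $\q$ (your ``pointwise min of concave functions is concave'' observation replaces the paper's explicit verification that $2r_k \geqslant r_{k-1}+r_{k+1}$), and reduce the upper-bound optimality to the claim that the flatness process outputs the least concave majorant of $\left\{(k,s_k)\right\}_{k=0}^{n}$. The only substantive difference is that you defer that last claim to the Cicalese--Vaccaro characterization, whereas the paper proves it explicitly via the case analysis $l<i$, $l\geqslant j$, and $i\leqslant l\leqslant j-1$ in its contradiction argument --- exactly the ``technical heart'' you flagged.
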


Since majorization only forms a partial order, the optimal bounds $\rr$ and $\ttt$ not necessarily belong to $\mathcal Q$ in general. That is, these bounds may not be achieved by any quantum state $\rho \in S(M,\p)$ except for the qubit case where majorization is a total order.  Moreover, the flatness process is required in general except for the qubit and qutrit cases where the linear interpolation of $\left\{\left(k, s_{k}\right)\right\}_{k=0}^{n}$ is already concave. In terms of the computational complexity of our results, we need to solve one SDP of size $n\times n$ with $n=\dim(\rho)$ for every lower bound element $r_{k}$ and solve $\binom{n}{k}$ SDPs of size $n \times n$ for every upper bound element $s_{k}$. Since each SDPs are independent for different index $\mathcal I_k$, the element $s_{k}$ can still be solved efficiently via parallel computations even for large dimension $n$. Moreover, we would like to mention that by applying the flatness process directly, we could tighter the bound of UUR and obtain the optimal bound of DSMUR.

An explicit example of the above theorems is given in Fig.~\ref{ill} where we choose the pre-testing with measurement $M = \{|0\rangle\langle 0|, |1\rangle\langle 1|, |2\rangle\langle 2|\}$ with outcome probability distribution $\p = (1/6,1/6,2/3)$ and the post-testing with measurement $N$ given by $|v_0\rangle=(1/2,0,\sqrt{3}/2)$, $|v_1\rangle=(0,1,0)$ and $|v_2\rangle=(-\sqrt{3}/2,0,1/2)$. Each solid color line is the Lorenz curve of a probability vector $\q \in \mathcal Q$. The optimal lower bound $\rr$ and the optimal upper bound $\ttt$ are depicted as thick black lines. These two boundaries fully characterize the uncertainty of the set $\mathcal Q$ based on the information gain from the pre-testing.

Our results inspire a new form of ``information causality''~\cite{Pawlowski2009}: the information that an observer can gain from a state in the past confines the uncertainty associated with the same state in the future. 
In other words, the physical attribute of the uncertainty associated with $N$ can only be exhibited at the expense of the information gained from $M$. {One extreme case is that without any information gain from the pre-testing, we can only infer that $S(M,\p)$ is the set of all quantum states, and hence obtaining the trivial bounds for the post-testing, i.e. $(1/n,\ldots,1/n)\prec\q\prec(1,0,\ldots,0)$. Another extreme case is the Heisenberg's uncertainty principle, which corresponds to the situation where we obtain an outcome from pre-testing with certainty. That is, the outcome probability vector $\p$ has one entry equal to $1$ and $0$ otherwise. This implies that the information gain $S(M,\p)=\{ |u_{j}\rangle\langle u_{j}| \}_{j=1}^n$.  Direct calculations gives the first element $T_{1}$ in upper bound $\ttt$ equal to the maximal overlap between measurements $M$ and $N$, namely $\max_{jk}| \langle u_{j} | v_{k} \rangle |^{2}$. Hence, we have $\q \prec \ttt \neq (1,0,\ldots, 0)$, indicating a non-zero uncertainty of the measurement $N$, whenever the $M$ and $N$ do not have any common eigenvectors.

\begin{figure}[t]
\centering
\begin{tikzpicture}
  \node[inner sep=0pt] at (0,0) {\includegraphics[width=6cm]{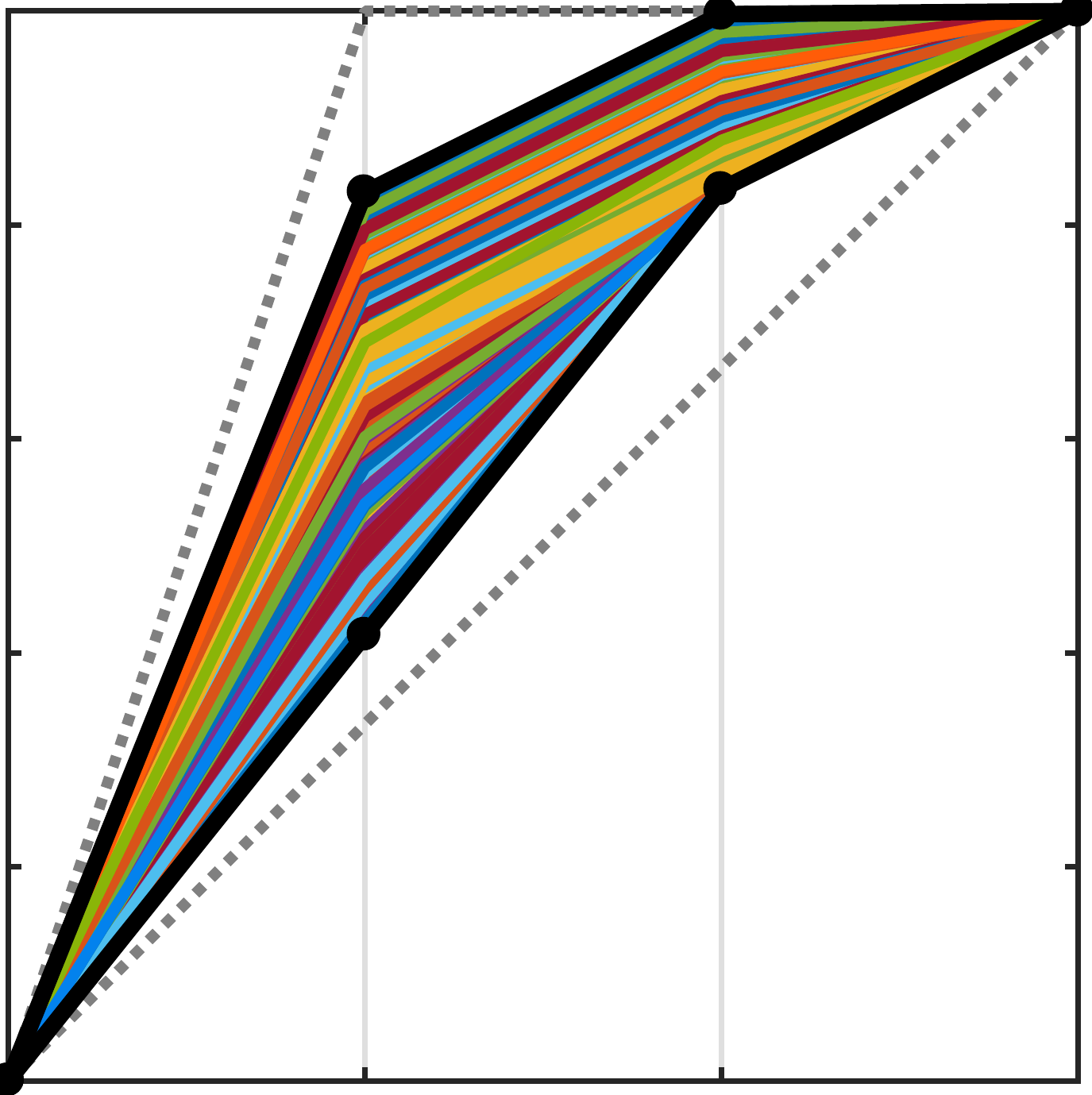}};

  \node at (-3.3,-3) {\footnotesize $0$};
  \node at (-3.3,-1.75) {\footnotesize $0.2$};
  \node at (-3.3,-0.58) {\footnotesize $0.4$};
  \node at (-3.3,0.6) {\footnotesize $0.6$};
  \node at (-3.3,1.78) {\footnotesize $0.8$};
  \node at (-3.3,2.9) {\footnotesize $1$};
  
  \node at (-1,-3.2) {\footnotesize $1$};
  \node at (1,-3.2) {\footnotesize $2$};
  \node at (3.,-3.2) {\footnotesize $3$};

  \draw[->, thick] (-1,-0.5) -- (-0.15,-1.2) node[right,shift={(-0.1,-0.05)}] {\footnotesize $r_{1}$}; 
  \draw[->, thick] (1,1.95) -- (2.05,1.2) node[right,shift={(-0.1,-0.05)}] {\footnotesize $r_{2}$}; 

  \draw[->, thick] (-1,2) -- (-2.3,1) node[left,shift={(0.1,0)}] {\footnotesize $t_{1}$}; 
  \draw[->, thick] (1,2.95) -- (-2.2,2.6) node[left,shift={(0.1,0)}] {\footnotesize $t_{2}$};

  \draw[dashed,->,line width=0.4mm] (-0.8,-0.28) -- (0.25,-1.15) node[right,shift={(-0.1,-0.05)}] {\footnotesize $\rr$}; 
  \draw[dashed,->,line width=0.4mm,magenta] (0,1.4) -- (1.45,0.2) node[right,shift={(-0.1,-0.05)}] {\footnotesize $\q \in \mathcal{Q}$};
   \draw[dashed,->,line width=0.4mm,gray] (0,0) -- (1,-0.8) node[right,shift={(-0.1,-0.05)}] {\footnotesize $(1/3,1/3,1/3)$};  
   \draw[dashed,->,line width=0.4mm,gray] (-2,0) -- (-0.15,-1.6) node[right,shift={(-0.1,-0.05)}] {\footnotesize $(1,0,0)$};  
   \draw[dashed,->,line width=0.4mm] (0,2.5) -- (-2,2) node[right,shift={(-0.3,0)}] {\footnotesize $\ttt$}; 
   \filldraw [black] (-0.8,-0.28) circle [radius=1.5pt];
   \filldraw [magenta] (0,1.4) circle [radius=1.5pt];
   \filldraw [gray] (0,0) circle [radius=1.5pt];
   \filldraw [gray] (-2,0) circle [radius=1.5pt];
   \filldraw [black] (0,2.5) circle [radius=1.5pt]; 

\end{tikzpicture}
\caption{(color online) An example of Theorem~\ref{main thm1} and \ref{main thm2}. The dotted line are the Lorenz curves of trivial bounds $(1/3,1/3,1/3)$, $(1,0,0)$. The thick black lines are Lorenz curves of the optimal lower and upper bounds given by $\rr$ and $\ttt$, respectively. Each solid color line is the  Lorenz curve of a probability vector $\q$ from $\mathcal{Q}$.}
\label{ill}
\end{figure}

\textbf{\textit{Universal Uncertainty Regions.---}}
Since the pioneering work of Deutsch \cite{Deutsch1983}, much has been done in the direction of lower-bounding the joint uncertainties $f \left( \p(\rho) \right) + g \left( \q(\rho) \right) \geqslant b$, where $\p(\rho)$ and $\q(\rho)$ are outcome probability distributions of the pre- and post-testing on state $\rho$ respectively, and $f$, $g$ are valid uncertainty measures \cite{Partovi1983,Kraus1987,Maassen1988,Ivanovic1992,Sanchez1993,Ballester2007,Wu2009,Berta2010,Li2011,Prevedel2011,Huang2011,Tomamichel2011,Coles2012,Coles2014,Kaniewski2014,Furrer2014,Berta2016,Xiao2016S, Xiao2016QM, Xiao2016U,Xiao2018H,Chen2018,Coles2019,Li2019}, i.e., non-negative Schur-concave functions, including Shannon entropy, R\'enyi entropy, elementary symmetric functions and so forth. In the case of \emph{uncertainty region} 
\begin{align}
\mathcal R(f,g):= \underset{\rho}{\BigCup}\, \big\{\left( f \left( \p(\rho) \right), g \left(\q(\rho) \right) \right)\big\},
\end{align}
the relation $f \left( \p(\rho) \right) + g \left( \q(\rho) \right)$ is nothing but a straight line with slope $-1$ in the coordinate plane of $(f(\p),g(\q))$, and its optimal lower bound $b = \min_{\rho} \left\{f \left( \p(\rho) \right) + g \left( \q(\rho) \right)\right\}$ is then achieved at the tangent line to the bottom left of $\mathcal R(f,g)$ as shown in Fig.~\ref{geo}. Namely the description of uncertainty regions \red{is} much more informative than uncertainty relations. However, no efficient method is known to characterize the region $\mathcal{R} \left( f , g \right)$ in general.
As an application of the majorization bounds in Theorem~\ref{main thm1} and \ref{main thm2}, we can provide a general approach for tight outer-approximations. 

Consider the statistics set $\mathcal R := {\BigCup}_\rho\, \big\{ (\p(\rho),\q(\rho))\big\}$ in $\mathbb R^n \times \mathbb R^n$ by collecting all compatible pairs of pre- and post-testing outcomes. Note that the set of all quantum states $\rho$ can be divided into equivalent classes $\{S(M,\p):\p\in \mathsf S_n\}$ based on the outcome probability distribution $\p$ from the pre-testing, where $\mathsf S_n := \{\,\x \in \mathbb R^n: \sum_i x_i = 1, x_i \geq 0, \forall i\,\}$ is the probability simplex of dimension $n$. Then $\mathcal R$ can be fine-grained as 
\begin{align}
  \mathcal R = \underset{\p \in \mathsf S_n}{\BigCup} \mathcal R_\p,\quad \text{with}\quad \mathcal R_{\p}:= \big\{\,(\p,\q): \q \in Q(M,N,\p)\,\big\}.
\end{align} 
For any fixed $\p$, the majorization bounds $\rr$, $\ttt$ set a boundary for the set $\mathcal Q(M,N,\p)$. Thus we have the relaxation 
\begin{align}
  \mathcal R_{\p} \subseteq \widetilde{\mathcal R}_{\p},\quad \text{with}\quad \widetilde{\mathcal R}_{\p}:= \big\{\,(\p,\q): \rr \prec \q\prec \ttt\,\big\}.
\end{align}
As a consequence, by taking the union with respect to $\p$, we have the relaxation of the whole region,
\begin{align}
  \mathcal R \subseteq \widetilde{\mathcal R}  \quad \text{with}\quad \widetilde{\mathcal R}:= \underset{\p \in \mathsf S_n}{\BigCup} \widetilde{\mathcal R}_\p.
\end{align} 
Finally, any uncertainty region $\mathcal R(f,g)$ can be retrieved by projecting $\mathcal R$ from $\mathbb R^n \times \mathbb R^n$ to $\mathbb R \times \mathbb R$ via the uncertainty measures $f$, $g$, and the projection of $\widetilde{\mathcal R}$ will give us an outer-approximation of  $\mathcal R(f,g)$ accordingly. Since $\widetilde{\mathcal R}$ can be used to generate an approximation of uncertainty region with any measures, we thus name it a \emph{universal uncertainty region}.  We emphasize that this universal region $\widetilde{\mathcal R}$ as well as its projections can be explictly depicted by running $\p$ over the probability simplex, which is significantly more tractable than characterizing $\mathcal{R}$ by taking $\rho$ over the set of all quantum states \cite{Geometry}.

\begin{thm}
Let $M= \{ |u_{j}\rangle\langle u_j| \}_{j=1}^n$ and $N=\left\{ |v_{\ell}\rangle\langle v_\ell| \right\}_{\ell=1}^n$ be the measurements of pre- and post-testing respectively. For any uncertainty measures $f$ for $M$ and $g$ for $N$, their uncertainty region is outer-approximated as $\mathcal{R} \left( f , g \right) \subseteq \widetilde{\mathcal{R}} \left( f , g \right)$
with $\widetilde{\mathcal{R}} \left( f , g \right):= \big\{\,(f(\p),g(\q)): (\p,\q) \in \widetilde{\mathcal R}\,\big\}$. In particular, the outer-approximation is optimal $\mathcal{R} \left( f , g \right) = \widetilde{\mathcal{R}} \left( f , g \right)$ when $n=2$.
\end{thm}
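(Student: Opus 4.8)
The plan is to read both claims as statements about the image of the already-established set relations under the coordinate-wise map $\Phi(\p,\q):=(f(\p),g(\q))$. By the very definitions one has $\mathcal R(f,g)=\Phi(\mathcal R)$ and $\widetilde{\mathcal R}(f,g)=\Phi(\widetilde{\mathcal R})$. The inclusion $\mathcal R(f,g)\subseteq\widetilde{\mathcal R}(f,g)$ for arbitrary $n$ then follows at once from the set inclusion $\mathcal R\subseteq\widetilde{\mathcal R}$ built above from the fibre inclusions $\mathcal R_\p\subseteq\widetilde{\mathcal R}_\p$ of Theorems~\ref{main thm1}--\ref{main thm2}: images preserve inclusions, so $\Phi(\mathcal R)\subseteq\Phi(\widetilde{\mathcal R})$. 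Nothing further is needed for the general bound.

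All the content lies in the optimality claim for $n=2$, i.e.\ the reverse inclusion $\widetilde{\mathcal R}(f,g)\subseteq\mathcal R(f,g)$. I would first reduce it to a fibrewise statement. Writing $\mathcal Q_\p:=\mathcal Q(N|M,\p)$ and $\widetilde{\mathcal Q}_\p:=\{\q:\rr\prec\q\prec\ttt\}$, one has $\mathcal R(f,g)=\bigcup_{\p\in\mathsf S_2}\{f(\p)\}\times g(\mathcal Q_\p)$ and $\widetilde{\mathcal R}(f,g)=\bigcup_{\p\in\mathsf S_2}\{f(\p)\}\times g(\widetilde{\mathcal Q}_\p)$, so it suffices to prove $g(\mathcal Q_\p)=g(\widetilde{\mathcal Q}_\p)$ for each fixed $\p$, after which taking unions preserves the equality. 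The inclusion $g(\mathcal Q_\p)\subseteq g(\widetilde{\mathcal Q}_\p)$ is the fibrewise form of the first part, so only the reverse remains.

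For the reverse fibrewise inclusion I would use that in dimension two majorization is a \emph{total} order, encoded entirely by the largest component $m(\q):=q^{\downarrow}_1=\max(q_1,q_2)\in[\tfrac12,1]$: here $\rr\prec\q\prec\ttt$ is equivalent to $r_1=m(\rr)\le m(\q)\le m(\ttt)=s_1$, so $\widetilde{\mathcal Q}_\p=\{\q\in\mathsf S_2: r_1\le m(\q)\le s_1\}$. Two facts then close the argument. First, $\mathcal Q_\p$ is connected, being the continuous image of the convex, hence connected, set $S(M,\p)$; thus $m(\mathcal Q_\p)$ is an interval, and since $r_1,s_1$ are by definition its minimum and maximum it equals $[r_1,s_1]=m(\widetilde{\mathcal Q}_\p)$. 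Second, as an uncertainty measure $g$ is Schur-concave, in particular permutation-symmetric, so $g(\q)$ depends on $\q$ only through $m(\q)$, i.e.\ $g(\q)=h(m(\q))$ for a single-variable $h$. Combining, $g(\mathcal Q_\p)=h\big(m(\mathcal Q_\p)\big)=h\big([r_1,s_1]\big)=h\big(m(\widetilde{\mathcal Q}_\p)\big)=g(\widetilde{\mathcal Q}_\p)$.

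The step I expect to be the crux is this reverse fibrewise inclusion, and the subtlety worth flagging is that, \emph{as sets of probability vectors}, $\mathcal Q_\p$ need not equal $\widetilde{\mathcal Q}_\p$: a given majorization value $m\in[r_1,s_1]$ is realized by the two permuted vectors $(m,1-m)$ and $(1-m,m)$, and $\mathcal Q_\p$ will in general contain only one orientation, since its range of $q_1$-values is typically an interval not symmetric about $\tfrac12$. The equality survives only after projecting through $g$, precisely because the permutation symmetry of the uncertainty measure collapses the two orientations to a single value—recognizing that it is this symmetry, rather than genuine set equality, that is needed is the heart of the proof. A routine point to record separately is that every $\p\in\mathsf S_2$ is attained by the diagonal state $\rho=\sum_j c_j|u_j\rangle\langle u_j|$, so each fibre $S(M,\p)$ is nonempty and the union over $\mathsf S_2$ is exactly over the achievable pre-testing distributions.
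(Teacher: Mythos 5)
Your proposal is correct. The inclusion $\mathcal{R}(f,g)\subseteq\widetilde{\mathcal{R}}(f,g)$ is argued exactly as in the paper's main text: project the set inclusion $\mathcal R\subseteq\widetilde{\mathcal R}$ (built from the fibre relaxations $\mathcal R_{\p}\subseteq\widetilde{\mathcal R}_{\p}$ of Theorems~1 and~2) through the map $(\p,\q)\mapsto(f(\p),g(\q))$. For the $n=2$ optimality the paper offers no explicit proof---only the remarks that majorization is a total order in the qubit case and that the bounds $\rr,\ttt$ are then attained---so your fibrewise argument (the order encoded by the largest component $m(\q)$, connectedness and compactness of $\mathcal Q_{\p}$ forcing $m(\mathcal Q_{\p})=[r_1,s_1]$, and permutation symmetry of the Schur-concave $g$ collapsing the two orientations $(m,1-m)$ and $(1-m,m)$ to one value) is a complete and correct filling-in of that gap; in particular your observation that $\mathcal Q_{\p}\neq\widetilde{\mathcal Q}_{\p}$ as sets, and that equality survives only after projecting through the symmetric measure $g$, is precisely the point the paper glosses over.
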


Note that for any given pre-testing outcome $\p$, the majorization bounds $\rr$ and $\ttt$ can be explicitly computed. Combining the Schur-concavity of the uncertainty measure $g$, the fine-grained outer-approximation can be simplified as 
\begin{align}
\widetilde{\mathcal{R}}_{\p} \left( f , g \right) = 
\left\{ \left( f \left( \p \right) , y \right) ~|~  g ( \ttt ) \leqslant y \leqslant g ( \rr ) \right\}.
\end{align} 
By running $\p$ over the probability simplex, we can explicitly depict the whole region $\widetilde{\mathcal R}(f,g) = \BigCup_{\p\in\mathsf S_n} \widetilde{\mathcal{R}}_{\p}(f,g)$. A schematic diagram is given in Fig. \ref{geo}, which legibly explains how our approximation method works. It is also worth mentioning that by swapping the role of pre- and post-testing, we can get another outer-approximation $\mathcal R \subseteq \widetilde{\mathcal W}$. Taking the intersection $\widetilde{\mathcal R} \cap \widetilde{\mathcal W}$ will lead to a potentially tighter result $\mathcal R(f,g) \subseteq \widetilde{\mathcal R}(f,g) \cap \widetilde{\mathcal W}(f,g)$.

The extension of our results to multiple measurements is also straightforward. For instance, let us consider a black box testing process consisting of one pre-testing with $M$ and $m$ post-testings with $N_{i}$ ($i \in \left\{1, \ldots, m\right\}$). If $M$ indicates a probability vector $\p$, then the outcome probability distribution $\q_{i}$ of $N_{i}$ are bounded by
$\rr_{i} \prec \q_{i} \prec \ttt_{i}$,
where $\rr_{i}$ and $\ttt_{i}$ are obtained from Theorem~\ref{main thm1} and \ref{main thm2} by replacing $N$ with $N_{i}$. With this strategy, we can delineate an outer-approximation for the high-dimensional uncertainty region ${\BigCup}_\rho \big\{\left( f \left( \p(\rho) \right),g_1 (\q_1(\rho)), \cdots, g_{m} \left( \q_{m}(\rho) \right) \right)\big\}$ with any uncertainty measures $f$ and $g_1,\cdots,g_m$.

Due to the generality of our approach, the approximation is not guaranteed to work well for every uncertainty measures. But we should stress that our approximation can be computed explicitly and is valid for any eligible uncertainty measures, liberating us from specific forms of uncertainty relations. {More importantly,  this is the first efficient method to approximating uncertainty regions in general, which can be used as a benchmark for future works.}

\begin{figure}[t]
\centering
\begin{tikzpicture}
\draw[very thick,->] (-3,-3) -- (3.2,-3);
\draw[very thick,->] (-3,-3) -- (-3,3.2);
\draw[very thick] (-3,3) -- (3,3);
\draw[very thick] (3,-3) -- (3,3);

\draw[line width=0.4mm,magenta] (-3,-0.5) -- (-3,1);
\draw[line width=0.4mm,magenta] (-3,1) .. controls (-2.65,1.8) and (-2.95,2.5) .. (-1,3);
\draw[line width=0.4mm,magenta] (-1,3) -- (2,3);
\draw[line width=0.4mm,magenta] (2,3) .. controls (3,2.4) and (2.8,2.5) .. (3,1.6);
\draw[line width=0.4mm,magenta] (3,1.6) -- (3,-1);
\draw[line width=0.4mm,magenta] (3,-1) .. controls (2,-2.9) and (2.5,-2.2) .. (1.5,-3);
\draw[line width=0.4mm,magenta] (1.5,-3) -- (-1,-3);
\draw[line width=0.4mm,magenta] (-1,-3) .. controls (-1.75,-2.675) and (-2.35,-2.65) .. (-3,-0.5);
\draw[fill=magenta,opacity=0.2] (-3,-0.5) -- (-3,1) -- (-2.75,1.8) -- (-2.65,2) -- (-2.45,2.3) -- (-2.3,2.45) -- (-2,2.62) -- (-1.75,2.75) -- (-1,3) -- (2,3) -- (2.75,2.5) -- (2.85,2.3) -- (3,1.6) -- (3,-1) -- (2.65,-1.7) -- (2.32,-2.3) -- (2.29,-2.39) -- (1.5,-3) -- (-1,-3) -- (-1.6,-2.75) -- (-2,-2.5) -- (-2.335,-2.1) -- (-2.62,-1.5) -- (-2.75,-1.2) -- (-3,-0.5) -- (-3,-0.5);

\draw[line width=0.4mm,magenta] (-1.5,-3) -- (-3,-1.5);

\draw[line width=0.3mm,dashed,cyan] (-3,-1) -- (-3,1.5);
\draw[line width=0.3mm,dashed,cyan] (-3,1.5) .. controls (-2.55,2.4) and (-2.95,2.7) .. (-1.5,3);
\draw[line width=0.3mm,dashed,cyan] (-1.5,3) -- (2.2,3);
\draw[line width=0.3mm,dashed,cyan] (2.2,3) .. controls (3,2.6) and (2.8,2.6) .. (3,2);
\draw[line width=0.3mm,dashed,cyan] (3,2) -- (3,-1.3);
\draw[line width=0.3mm,dashed,cyan] (3,-1.3) .. controls (2.5,-2.7) and (2.5,-2.2) .. (1.8,-3);
\draw[line width=0.3mm,dashed,cyan] (1.8,-3) -- (-0.8,-3);
\draw[line width=0.3mm,dashed,cyan](-0.8,-3) .. controls (-2,-2.975) and (-2.35,-2.65) .. (-3,-1);
\draw[fill=cyan,opacity=0.2] (-3,1) -- (-2.75,1.8) -- (-2.65,2) -- (-2.45,2.3) -- (-2.3,2.45) -- (-2,2.62) -- (-1.75,2.75) -- (-1,3) -- (-1.53,3) -- (-2.2,2.81) -- (-2.6,2.55) -- (-3,1.45) -- (-3,1);
\draw[fill=cyan,opacity=0.2] (2,3) -- (2.75,2.5) -- (2.85,2.3) -- (3,1.6) -- (3,2) -- (2.9,2.45) -- (2.8,2.63) -- (2.7,2.71) -- (2.5,2.84) -- (2.2,3) -- (2,3);
\draw[fill=cyan,opacity=0.2] (3,-1) -- (2.65,-1.7) -- (2.32,-2.3) -- (2.29,-2.39) -- (1.5,-3) -- (1.8,-3) -- (2,-2.78) -- (2.5,-2.36) -- (2.7,-2.05) -- (2.85,-1.7) -- (2.92,-1.5) -- (3,-1.3) -- (3,-1);
\draw[fill=cyan,opacity=0.2] (-1,-3) -- (-1.6,-2.75) -- (-2,-2.5) -- (-2.335,-2.1) -- (-2.62,-1.5) -- (-2.75,-1.2) -- (-3,-0.5) -- (-3,-1) -- (-2.8,-1.47) -- (-2.5,-2.1) -- (-2.1,-2.65) -- (-1.5,-2.94) -- (-1,-3);

\draw[line width=0.3mm,cyan] (-1.75,-3) -- (-3,-1.75);

\draw[very thick,orange] (-2.62,2.5) -- (2.88,2.5);

 \node at (-3,3.4) {\footnotesize $f(\p)$};
 \node at (3.5,-3) {\footnotesize $g(\q)$};
 \node at (-3.2,-3) {\footnotesize $0$};
 
 \filldraw [gray] (-1.5,-3) circle [radius=1.5pt];
 \filldraw [gray] (-3,-1.5) circle [radius=1.5pt];
 \node at (-3.2,-1.45) {\footnotesize $b$};
 \node at (-1.5,-3.3) {\footnotesize $\min\limits_{\rho}\{f(\p)+g(\q)\}(:=b)$};
 
 \draw[->, thick] (-2.45,-2.05) -- (-2.15,-1.75) node[right,shift={(-0.1,0)}] {\footnotesize $f(\p)+g(\q)\geqslant b$};
 
  \filldraw [gray] (-2.62,2.5) circle [radius=1.5pt];
   \draw[->, thick] (-2.62,2.5) -- (-2.15,2) node[right,shift={(-0.18,-0.15)}] {\footnotesize $g(\rr)$};
  \filldraw [gray] (2.88,2.5) circle [radius=1.5pt];
  \draw[->, thick] (2.88,2.5) -- (2.7,1.2) node[right,shift={(-0.4,-0.17)}] {\footnotesize $g(\ttt)$};
   \draw[->, thick] (-2.3,2.8) -- (-1,1.5) node[right,shift={(0,0)}] {\footnotesize outer-approximation $\widetilde{\mathcal{R}} \left( f , g \right)$};
    \draw[->, thick] (-1.5,2.5) -- (-1.2,2.2) node[right,shift={(0.1,-0.2)}] {\footnotesize outer-approximation $\widetilde{\mathcal{R}}_{\p} \left( f , g \right)$};
    \node at (-0.3,2.2) {\footnotesize fine-grained};
    \draw[->, thick] (-2.65,2) -- (-1.8,1.15) node[right,shift={(-0.15,-0.2)}] {\footnotesize uncertainty region $\mathcal{R} \left( f , g \right)$};
  
\end{tikzpicture}
\caption{(color online) A schematic diagram depicts the uncertainty region $\mathcal{R} \left( f , g \right)$ (magenta) with its outer-approximation $\widetilde{\mathcal{R}} \left( f , g \right)$ (cyan), and the fined-grained outer-approximation $\widetilde{\mathcal{R}}_{\p} \left( f , g \right)$ (orange). The optimal uncertainty relations $f(\p)+g(\q)\geqslant \min_{\rho}\{f(\p)+g(\q)\}$ is tangent to the left lower boundaries of $\mathcal{R} \left( f , g \right)$.}
\label{geo}
\end{figure}
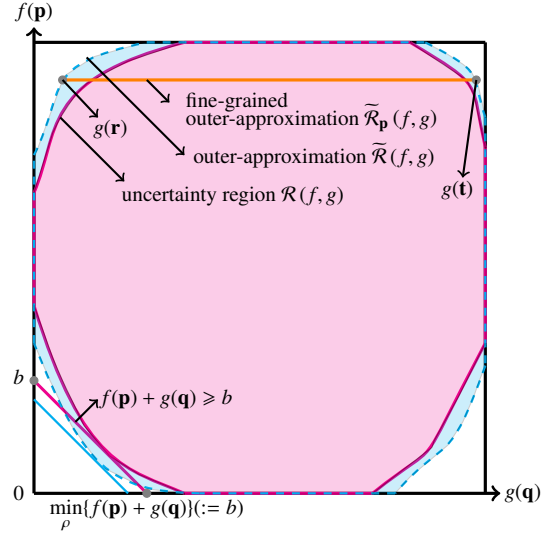

Detailed comparisons of our results with previously known results, such as MU entropic uncertainty relations \cite{Maassen1988}, UUR \cite{Gour2013}, and DSMUR \cite{Rudnicki2014,Rudnicki2018}, are given in the Supplemental Material \cite{SM}. In terms of the qubit case, we find that MU bound is tight only when one of the orders of the R\'enyi entropies is infinite, i.e., $(\alpha,\beta)=(1/2,\infty)$ or $(\infty,1/2)$. Otherwise, MU bound can be improved. We also give an explicit example showing that our result can outperform the bound given in \cite{Rudnicki2018} despite the spectrum of density operator is utilized when constructing their bound. Compared to the universal uncertainty relations in \cite{Gour2013,Rudnicki2013,Rudnicki2014,Rudnicki2018} which can be used to provide lower bounds for $f\left(\p\otimes\q\right)$ and $f\left(\p\oplus\q\right)$, our universal uncertainty region works more generally for any arbitrary qualified measure of joint uncertainties.

\textbf{\textit{Fundamental Limits for Joint Uncertainties.---}}
As another illustration of the generality of our framework, we study the joint uncertainties given by the most general measure $\mathcal{J}: \mathbb R^n \times \mathbb R^n \rightarrow \mathbb R$ for a pair of probability vectors $(\p,\q) \in \mathcal R$~\cite{Narasimhachar2016}. Such a measure includes the usual forms $f(\p)+g(\q)$, $f(\p)g(\q)$, $f\left(\p\otimes\q\right)$ and $f\left(\p\oplus\q\right)$ as special cases. To capture the essential properties of a measure of joint uncertainties, it has been argued in~\cite{Narasimhachar2016} that $\mathcal{J}$ should meet the following postulates~: (\romannumeral 1) Non-negativity: $\mathcal{J}(\p,\q)\geqslant 0$; (\romannumeral 2) Monotonicity under randomly relabelling: $\mathcal{J}(D_{1}\p,D_{2}\q) \geqslant \mathcal{J}(\p,\q)$ for all doubly stochastic matrices $D_{1}$ and $D_{2}$. The characterization of the joint uncertainties $\mathcal{J}(\p,\q)$ is crucial in the study of quantum information and quantum measurements, leading to a plethora of applications~\cite{Horodecki1994,Giovannetti2004,Guhne2004,Guhne2009,Xiao2018Q,Rutkowski2017,Riccardi2018,Costa2018E,Costa2018S,Uola2019}. In particular, any state-independent lower bound $b$ of $\mathcal{J}(\p,\q)$ leads to a uncertainty relation $\mathcal{J}(\p,\q)\geqslant b$ while any state-independent upper bound $a$ of $\mathcal{J}(\p,\q)$ gives us a reverse uncertainty relation $a\geqslant\mathcal{J}(\p,\q)$. A natural question is to ask how to find $a$ and $b$ for joint uncertainties $\mathcal{J}(\p,\q)$ in general. 
Let us now apply our results to provide an answer. 

First, by associating \emph{Hardy-Littlewood-P\'olya theorem} \cite{Hardy1929}, which states that two probability vectors $\x \prec \y$ if and only if $\x = D \y$ for some doubly stochastic matrix $D$, with our right-majorization relation $\left( \p ,  \rr \right) \prec_{R} \left( \p ,  \q \right) \prec_{R} \left( \p ,  \ttt \right)$, we obtain $\mathcal{J}(\p,\rr) \geqslant \mathcal{J}(\p,\q) \geqslant \mathcal{J}(\p,\ttt)$ for each $\p$. Again let $\p$ run over all probability simplex, the joint uncertainties must be confined as $\max_{\p \in \mathsf S_n}\mathcal{J}(\p,\rr) \geqslant \mathcal{J}(\p,\q) \geqslant \min_{\p \in \mathsf S_n}\mathcal{J}(\p,\ttt)$. Swapping the role of pre- and post-testings, we can drive a left-majorization relation $\left( \bm{u} ,  \q \right) \prec_{L} \left( \p ,  \q \right) \prec_{L} \left( \bm{v} ,  \q \right)$ for each $\q$. Then it follows that $\max_{\q \in \mathsf S_n}\mathcal{J}(\bm{u},\q) \geqslant \mathcal{J}(\p,\q) \geqslant \min_{\q \in \mathsf S_n}\mathcal{J}(\bm{v},\q)$. Finally, taking the intersections, we have the state-independent bounds $a\geqslant\mathcal{J}(\p,\q)\geqslant b$ with
\begin{align}
&a:=\min\left\{\max_{\p \in \mathsf S_n}\mathcal{J}(\p,\rr), \max_{\q \in \mathsf S_n}\mathcal{J}(\bm{u},\q)\right\},\notag\\
&b:=\max\left\{\min_{\p \in \mathsf S_n}\mathcal{J}(\p,\ttt), \min_{\q \in \mathsf S_n}\mathcal{J}(\bm{v},\q)\right\}.
\end{align}
In particular, the bounds $a$ and $b$ are tight for the qubit case. More remarkably, our method works for all quantified joint uncertainty measures and provides strong supports for finding their fundamental limits. 

\textbf{\textit{Conclusions.---}}
We have proposed a new information principle, fully characterizing the complementarity between quantum measurements. Compared with the standard uncertainty principle, the CIP does not depend on a single quantifier of uncertainty, but instead contain all the information in the outcome probability distributions. Hence, the CIP is more informative and captures the essence of complementarity between measurements in quantum physics. This principle provides us an effective method in determining the boundary of the uncertainty region with any non-negative Schur-concave functions and establishes the limits for all joint uncertainties. Our method can, in principle, be applied to positive operator valued measures (POVMs) without modification.

Conventional studies on general quantum resource theories (QRTs) \cite{Chitambar2019} consider quantum state transformation with the assumption that we have full knowledge of the density operator of the resource state, for which a complete tomography is required \cite{Sugiyama2013}. In our Supplemental Material \cite{SM}, we further discuss an application of our result in a more practical scenario where only partial knowledge of the resource state is needed.

The connection between the results presented here and other areas, including entanglement \cite{Horodecki1994,Giovannetti2004,Guhne2004,Guhne2009} and EPR steering detection \cite{Xiao2018Q,Rutkowski2017,Riccardi2018,Costa2018E,Costa2018S,Uola2019}, moderate deviation of majorization-based resource interconversion \cite{Chubb2019} and quantum key distribution \cite{Cerf2002,Renner2005,Renes2009}, are particularly important in the context of quantum information. We left these directions for future work. The experimental demonstration of this paper, performed in a photonic system, is currently in progress. The experimental data and our theoretical results fit well \cite{Yuan2019}.

\textbf{\textit{Acknowledgments.---}}We would like to thank Xin Wang, Yuan Yuan, Guo-Yong Xiang, Nengkun Yu, Jie Xie, Lijian Zhang, Zhihao Ma, Gaurav Saxena, Carlo Maria Scandolo, and Barry C. Sanders for fruitful discussions. Y. X., and G. G. acknowledge financial support from the Natural Sciences and Engineering Research Council of Canada (NSERC). K.F. was supported by the University of Cambridge Isaac Newton Trust Early Career grant RG74916.

\clearpage

\onecolumngrid
\begin{center}
\vspace*{.5\baselineskip}
{\textbf{\large Supplemental Material: \\[3pt] Complementary Information Principle and Universal Uncertainty Regions Regions}}\\[1pt] \quad \\
\end{center}

\setcounter{equation}{0}
\setcounter{figure}{0}
\setcounter{table}{0}
\setcounter{section}{0}
\setcounter{thm}{0}

This supplemental material provides a more detailed analysis and proofs of the results in the main text. We may reiterate some of the steps in the main text to make the supplemental material more explicit and self-contained.

\section{Proof of Theorem 1}

Before proceeding, it is worth noting that our majorization inequalities $\rr \prec \q \prec \bm{s}$ is based on the following restricted conditions
\begin{align}
\langle u_{j}|\,\rho\,|u_{j}\rangle = c_{j}, \quad\forall j \in \left\{1, 2, \ldots, n\right\}.
\end{align}
This means we are considering the possible range (under majorization) of probability vector $\q$, generated from performing the quantum measurement $N$ on a state $\rho$ with $\rho \in S(M,\p):=\{\,\rho \geqslant 0\ |\, \langle u_{j}|\,\rho\,|u_{j}\rangle = c_j, \forall j\,\}$. 

The following lemma gives a useful characterization of the partial sum of the $k$ largest values in a probability vector.
\begin{lem}\label{majorization lemma}
  For any probability vector $\p$ with non-increasing order $(p_j^{\downarrow})_{j=1}^n$, we have the partial sum
  $\sum_{j=1}^{k} p^{\downarrow}_{j} = \max_{\mathcal I_k} \sum_{s\in \mathcal I_k}p_s$ where the maximization on the r.h.s. is taken over all possible subsets $\mathcal I_k$ of $\{1,\cdots, n\}$ with cardinality $k$.
\end{lem}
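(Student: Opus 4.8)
The plan is to establish the identity as a two-sided inequality. One direction is immediate: the partial sum $\sum_{j=1}^{k} p^{\downarrow}_{j}$ is itself the sum of $\p$ over one particular index set, namely the set $\mathcal I_k^\star$ collecting the indices of the $k$ largest components. Hence $\sum_{j=1}^{k} p^{\downarrow}_{j} = \sum_{s \in \mathcal I_k^\star} p_s \leqslant \max_{\mathcal I_k} \sum_{s \in \mathcal I_k} p_s$, since $\mathcal I_k^\star$ is a feasible choice in the maximization on the right.

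For the reverse inequality I would fix an arbitrary subset $\mathcal I_k \subset [n]$ with $|\mathcal I_k| = k$ and show $\sum_{s \in \mathcal I_k} p_s \leqslant \sum_{j=1}^{k} p^{\downarrow}_{j}$. The cleanest route is an exchange argument: if $\mathcal I_k \neq \mathcal I_k^\star$, then there exist indices $a \in \mathcal I_k \setminus \mathcal I_k^\star$ and $b \in \mathcal I_k^\star \setminus \mathcal I_k$; because $b$ sits among the top $k$ components while $a$ does not, we have $p_b \geqslant p_a$, so replacing $a$ by $b$ in $\mathcal I_k$ does not decrease its sum. Iterating such swaps converts $\mathcal I_k$ into $\mathcal I_k^\star$ while weakly increasing the partial sum at each step, which gives $\sum_{s \in \mathcal I_k} p_s \leqslant \sum_{s \in \mathcal I_k^\star} p_s = \sum_{j=1}^{k} p^{\downarrow}_{j}$. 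Taking the maximum over all $\mathcal I_k$ yields the nontrivial bound, and combining the two directions closes the proof.

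Equivalently, one can phrase the same content as a linear-programming observation without an explicit swap: after relabelling so that $p_1 \geqslant \cdots \geqslant p_n$, every size-$k$ subset sum reads $\sum_{s \in \mathcal I_k} p_s = \sum_{j=1}^{n} \mathbf{1}_{\{j \in \mathcal I_k\}}\, p_j$, i.e. an inner product of the non-increasing vector $\p$ with a zero-one weight vector carrying exactly $k$ ones. Since $\p$ is sorted, this linear functional is maximized by concentrating all the ones on the smallest indices, which recovers $\sum_{j=1}^{k} p_j = \sum_{j=1}^{k} p^{\downarrow}_{j}$.

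There is no serious obstacle here; the statement is a standard fact about majorization. The only points deserving minor care are verifying that the exchange procedure terminates (each swap strictly reduces $|\mathcal I_k \setminus \mathcal I_k^\star|$, so finitely many suffice) and that a valid pair $a,b$ always exists when the two sets differ, with $p_b \geqslant p_a$ guaranteed by the definition of the top-$k$ set. I record the lemma separately only because this partial-sum characterization is the bridge connecting the SDPs for $r_k$ and $s_k$ to the Lorenz-curve formulation underlying Theorem~\ref{main thm1}.
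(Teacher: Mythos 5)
Your argument is correct and is simply the standard elementary verification of this fact; the paper itself offers no argument beyond declaring the lemma ``straightforward from the definition,'' so your exchange/rearrangement write-up is the natural expansion of exactly that observation. No gaps: the feasibility of the top-$k$ index set gives one inequality, and the swap (or sorted inner-product) argument gives the other.
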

\begin{proof}
  The result is straightforward from the definition of $(p_j^{\downarrow})_{j=1}^n$.
\end{proof}

\begin{thm}\label{thm1}
Let $M= \{ |u_{j}\rangle \}_{j=1}^n$ and $N=\left\{ |v_{\ell}\rangle \right\}_{\ell=1}^n$ be the measurements of pre- and post-testing respectively. If the outcome probability distribution from $M$ is given by $\p = (c_j)_{j=1}^n$, then any outcome $\q$ from $N$ is bounded as $\left( \p ,  \rr \right) \prec_{R} \left( \p ,  \q \right) \prec_{R} \left( \p ,  \bm{s} \right)$ with
\begin{align}
\bm{s} &= \left(S_{1}, S_{2}, \ldots, S_{n} \right) :=\left(s_{1}, s_{2}-s_{1}, \ldots, s_{n}-s_{n-1} \right)
\quad \text{and}\quad 
s_{k} = \max\limits_{\mathcal{I}_{k}} \max\limits_{\rho}\left\{\,\Tr N_{\mathcal{I}_k}\,\rho \,\big|\ \rho \in S(M,\p)\,\right\},\\
\rr & = \left(R_{1}, R_{2}, \ldots, R_{n} \right) :=\left(r_{1}, r_{2}-r_{1}, \ldots, r_{n}-r_{n-1} \right)
\quad \text{and} \quad 
r_{k} = \min\limits_{\rho} \left\{\,\gamma \, \big|\ \gamma \geqslant \Tr N_{\mathcal{I}_k}\,\rho, \forall \mathcal{I}_{k},\ \rho \in S(M,\p)\,\right\},
\end{align}
where $\mathcal{I}_k$ denotes a subset of $\{1,\cdots,n\}$ with cardinality $k$ and $N_{\mathcal{I}_k}:= \sum_{\ell \in \mathcal{I}_{k}} | v_\ell \rangle\langle v_\ell |$ is a partial sum of $N$ with index $\mathcal I_k$.
\end{thm}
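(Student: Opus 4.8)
The plan is to reduce both majorization claims to inequalities between partial sums, using Lemma~\ref{majorization lemma} as the bridge, and then treat the upper and lower bounds by genuinely different arguments. Note first that since the pre-testing distribution $\p$ is held fixed, the relation $(\p,\rr)\prec_R(\p,\q)\prec_R(\p,\bm s)$ is by definition just $\rr\prec\q\prec\bm s$. For any $\rho\in S(M,\p)$ with post-testing vector $\q=(\langle v_\ell|\rho|v_\ell\rangle)_\ell$, one has $\sum_{s\in\mathcal{I}_k}q_s=\Tr(N_{\mathcal{I}_k}\rho)$, so Lemma~\ref{majorization lemma} yields the key identity $\sum_{j=1}^k q_j^\downarrow=\max_{\mathcal{I}_k}\Tr(N_{\mathcal{I}_k}\rho)$. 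Because majorization between probability vectors is characterized by cumulative sums of the sorted rearrangements (with automatic equality at $k=n$: the subset $[n]$ is unique and $N_{[n]}=I$ forces $r_n=s_n=1$), the whole proof comes down to comparing these partial sums with $r_k=\sum_{j=1}^k R_j$ and $s_k=\sum_{j=1}^k S_j$.

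For the upper bound I would argue directly from the definition: for every fixed $\rho$, $\sum_{j=1}^k q_j^\downarrow=\max_{\mathcal{I}_k}\Tr(N_{\mathcal{I}_k}\rho)\leqslant \max_{\mathcal{I}_k}\max_\rho\Tr(N_{\mathcal{I}_k}\rho)=s_k$. Then, since the first $k$ entries of a vector never exceed its $k$ largest entries, $s_k=\sum_{j=1}^k S_j\leqslant\sum_{j=1}^k S_j^\downarrow$. Chaining these gives $\sum_{j=1}^k q_j^\downarrow\leqslant\sum_{j=1}^k S_j^\downarrow$ for all $k$, i.e.\ $\q\prec\bm s$. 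I would stress that this step uses no concavity of the sequence $(s_k)$; the slack in the second inequality is precisely why $\bm s$ is generally not the \emph{optimal} upper bound, deferring the tightening to the separate flatness process.

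For the lower bound the SDP defining $r_k$ must first be identified as $r_k=\min_{\rho\in S(M,\p)}\max_{\mathcal{I}_k}\Tr(N_{\mathcal{I}_k}\rho)=\min_{\rho\in S(M,\p)}\sum_{j=1}^k q_j^\downarrow$: for a fixed $\rho$ the constraint $\gamma\geqslant\Tr(N_{\mathcal{I}_k}\rho)$ for all $\mathcal{I}_k$ is equivalent to $\gamma\geqslant\max_{\mathcal{I}_k}\Tr(N_{\mathcal{I}_k}\rho)$, whose least feasible value is that maximum. Hence $\sum_{j=1}^k q_j^\downarrow\geqslant r_k$ for every $\q$. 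To promote this elementwise bound to $\rr\prec\q$ I would invoke the concavity argument sketched in the main text: each map $k\mapsto\sum_{j=1}^k q_j^\downarrow$ is concave (being a Lorenz curve), and a pointwise minimum of concave functions is concave, so $k\mapsto r_k$ is concave. Concavity forces the increments $R_k=r_k-r_{k-1}$ to be non-increasing, so $\rr$ is already sorted and $\sum_{j=1}^k R_j^\downarrow=\sum_{j=1}^k R_j=r_k\leqslant\sum_{j=1}^k q_j^\downarrow$, giving $\rr\prec\q$.

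The main obstacle is exactly this last passage from $\sum_{j=1}^k q_j^\downarrow\geqslant r_k$ to a true majorization relation: unlike the upper bound, the implication fails without the concavity of the lower Lorenz curve $(k,r_k)$, since only concavity guarantees that $r_k$ is the $k$-th partial sum of the \emph{sorted} vector $\rr$ rather than merely a lower envelope lying below it. The remaining facts—linearity of $\Tr(N_{\mathcal{I}_k}\rho)$ in $\rho$, convexity of $S(M,\p)$, the epigraph reformulation of the inner maximum that turns the $r_k$ program into a bona fide SDP, and the monotonicity of $s_k,r_k$ in $k$ that makes $\rr,\bm s$ genuine probability vectors—are routine and I would confine them to a short remark.
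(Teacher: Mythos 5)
Your proposal is correct and follows essentially the same route as the paper's proof: the identity $\sum_{j=1}^k q_j^{\downarrow}=\max_{\mathcal I_k}\Tr(N_{\mathcal I_k}\rho)$ from Lemma~\ref{majorization lemma}, the direct relaxation $\sum_{j=1}^k q_j^\downarrow\leqslant s_k\leqslant\sum_{j=1}^k S_j^\downarrow$ for the upper bound, and the epigraph identification of $r_k$ with $\min_{\rho}\max_{\mathcal I_k}\Tr(N_{\mathcal I_k}\rho)$ for the lower bound. Your appeal to ``a pointwise minimum of concave functions is concave'' to sort $\rr$ is exactly the paper's explicit computation $2r_k\geqslant r_{k-1}+r_{k+1}$ evaluated at the minimizer $\widetilde\rho$, so the two arguments coincide in substance.
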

\begin{proof}
Suppose $\q :=(q_k)_{k=1}^n = (\langle v_{k}|\,\rho\,|v_{k}\rangle)_{k=1}^n$ is an outcome probability generated from the measurement $N$ on a quantum state $\rho \in S(M,\p)$. Denote the non-increasing order of $\q$ as $(q_j^{\downarrow})_{j=1}^n$.  By Lemma~\ref{majorization lemma} and the definitions of $q_s$ and $N_{\mathcal I_k}$, we have
\begin{align}\label{tmp4}
  \sum_{j=1}^k q_j^{\downarrow} = \max_{\mathcal I_k} \sum_{s\in \mathcal I_k} q_s = \max_{\mathcal I_k} \sum_{s\in \mathcal I_k} \Tr |v_s\rangle\langle v_s| \, \rho = \max_{\mathcal I_k} \Tr N_{\mathcal I_k} \, \rho.
\end{align}
Relaxing $\rho$ to all possible $\rho\in S(M,\p)$, we have
\begin{align}\label{up bound}
   \max_{\mathcal I_k} \Tr N_{\mathcal I_k} \, \rho \leqslant \max\limits_{\rho\in S(M,\p)} \max\limits_{\mathcal{I}_{k}} \Tr N_{\mathcal{I}_k}\,\rho =: s_{k}.
\end{align}
Note that $s_{k} = \sum_{j=1}^k S_{k} \leqslant \sum_{j=1}^k S_{k}^{\downarrow}$. Thus we have $\sum_{j=1}^k q_j^{\downarrow} \leqslant \sum_{j=1}^k S_{k}^{\downarrow}$. By the definition of majorization, we have $\q \prec \bm{s}$.

Similar to Eq.~\eqref{up bound}, we have
\begin{align}\label{low bound}
   \max_{\mathcal I_k} \Tr N_{\mathcal I_k} \, \rho \geqslant \min\limits_{\rho\in S(M,\p)} \max\limits_{\mathcal{I}_{k}} \Tr N_{\mathcal{I}_k}\,\rho =: r_{k}
\end{align}
Now we are going to show that $R_{k}$ is in the nonincreasing order, i.e., $R_{k} \geqslant R_{k+1}$, or equivalently $2r_{k} \geqslant r_{k-1} + r_{k+1}$.
Suppose the minimum of $r_{k}$ is taken at $\widetilde \rho$. Denote $\widetilde \q = (\langle v_{j}|\,\widetilde \rho\,|v_{j}\rangle)_{j=1}^n$ with the nonincreasing ordering $\big(\,\widetilde q_j^{\downarrow}\,\big)_{j=1}^n$. Then we have
\begin{align}\label{tmp1}
  2r_{k} = 2 \max\limits_{\mathcal{I}_{k}} \Tr N_{\mathcal{I}_k}\, \widetilde \rho = 2 \sum_{j=1}^k \widetilde q_j^{\downarrow} = \left(\sum_{j=1}^{k-1} \widetilde q_j^{\downarrow} + \widetilde q_{k}^{\downarrow}\right) + \left(\sum_{j=1}^{k+1} \widetilde q_j^{\downarrow} - \widetilde q_{k+1}^{\downarrow}\right) \geqslant \sum_{j=1}^{k-1} \widetilde q_j^{\downarrow} + \sum_{j=1}^{k+1} \widetilde q_j^{\downarrow},
\end{align}
where the second equality follows from Lemma~\ref{majorization lemma}.  Using Lemma~\ref{majorization lemma} again, we have
\begin{align}
  \sum_{j=1}^{k-1} \widetilde q_j^{\downarrow} = \max\limits_{\mathcal{I}_{k-1}} \Tr N_{\mathcal{I}_{k-1}}\,\widetilde \rho \geqslant \min\limits_{\rho\in S(M,\p)} \max\limits_{\mathcal{I}_{k-1}} \Tr N_{\mathcal{I}_{k-1}}\,\rho = r_{k-1}, \label{tmp2}\\
  \sum_{j=1}^{k+1} \widetilde q_j^{\downarrow} = \max\limits_{\mathcal{I}_{k+1}} \Tr N_{\mathcal{I}_{k+1}}\,\widetilde \rho \geqslant \min\limits_{\rho\in S(M,\p)} \max\limits_{\mathcal{I}_{k+1}} \Tr N_{\mathcal{I}_{k+1}}\,\rho = r_{k+1}.\label{tmp3}
\end{align}
Combining Eqs.~(\ref{tmp1}-\ref{tmp3}), we have $2r_{k} \geqslant r_{k-1} + r_{k+1}$, that is, $R_{k}$ is in the nonincreasing order. Together with Eqs.~\eqref{tmp4} and~\eqref{low bound}, we obtain $\sum_{j=1}^k q_j^{\downarrow} \geqslant r_{k} = \sum_{j=1}^k R_{k} =  \sum_{j=1}^k R_{k}^{\downarrow}$, which implies $ \q \succ \rr$. Finally, we note that 
\begin{align}
   r_{k} = \min\limits_{\rho\in S(M,\p)} \max\limits_{\mathcal{I}_{k}} \Tr N_{\mathcal{I}_k}\,\rho = \min\limits_{\rho\in S(M,\p)} \min\left\{\gamma\,|\, \gamma \geqslant \Tr N_{\mathcal{I}_k}\,\rho, \forall \mathcal I_k\right\} = \min\limits_{\rho} \left\{\,\gamma \, \big|\ \gamma \geqslant \Tr N_{\mathcal{I}_k}\,\rho, \forall \mathcal{I}_{k},\ \rho \in S(M,\p)\,\right\},
\end{align}
where the rightmost is a semidefinite program.
\end{proof}

\begin{remark}
  Note that we have shown $\rr$ is in nonincreasing order. But this is not necessarily the case for $\bm{s}$ since there exist cases such that $2 s_{k} < s_{k-1} + s_{k+1}$. That is why we need an additional flatness procedure for the upper bound in general. However, for the qubit and qutrit case, we can show that the vector $\bm{s}$ is always in nonincreasing order.
\end{remark}

\begin{proposition}\label{p2}
  For $n = 2, 3$, the vector $\bm{s}$ is in nonincreasing order.
\end{proposition}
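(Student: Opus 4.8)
The plan is to recast the monotonicity of $\bm s=(S_1,\ldots,S_n)$ as the \emph{concavity} of the partial-sum sequence $(s_k)_{k=0}^n$, since $S_k\geqslant S_{k+1}$ is precisely $2s_k\geqslant s_{k-1}+s_{k+1}$ (with the convention $s_0=0$). Throughout I would work from the representation $s_k=\max_{\rho\in S(M,\p)}\sum_{j=1}^k q_j^{\downarrow}$ furnished by Lemma~\ref{majorization lemma}, where $q_\ell=\langle v_\ell|\rho|v_\ell\rangle$, together with the two structural facts $\sum_\ell q_\ell=1$ and $N_{[n]}=\sum_\ell|v_\ell\rangle\langle v_\ell|=I$, the latter giving $s_n=1$. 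For $n=2$ the only inequality to check is $2s_1\geqslant s_2$; for $n=3$ one needs both $2s_1\geqslant s_2$ and $2s_2\geqslant s_1+s_3$.

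First I would dispose of $2s_1\geqslant s_2$, which in fact holds for every $n$. Let $\rho_\star\in S(M,\p)$ and a pair $\{i,j\}$ attain $s_2=q_i(\rho_\star)+q_j(\rho_\star)$; then $\max\{q_i(\rho_\star),q_j(\rho_\star)\}\geqslant s_2/2$, and since this is the value of a single outcome of a state in $S(M,\p)$ it is at most $s_1$. Hence $s_1\geqslant s_2/2$, i.e. $S_1\geqslant S_2$. For $n=2$ this already settles the claim, because $s_2=\Tr\rho=1$ and the inequality reads $s_1\geqslant 1/2$.

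The substantive step is $2s_2\geqslant s_1+s_3=s_1+1$ for $n=3$. I would first observe that choosing the best pair out of three amounts to discarding the smallest outcome, so $s_2=\max_\rho\bigl(1-\min_\ell q_\ell\bigr)=1-\min_{\rho\in S(M,\p)}\min_\ell q_\ell$. Now let $\rho_1$ attain $s_1$, so that its largest outcome equals $s_1$ while the remaining two outcomes are nonnegative and sum to $1-s_1$; the smaller of them is therefore at most $(1-s_1)/2$. Consequently $\min_{\rho}\min_\ell q_\ell\leqslant(1-s_1)/2$, which gives $s_2\geqslant 1-(1-s_1)/2=(1+s_1)/2$, i.e. $2s_2\geqslant 1+s_1$. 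Combined with the previous paragraph, $\bm s$ is nonincreasing for $n=3$.

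The point requiring care---and the reason the argument is confined to small $n$---is that $s_1$ and $m:=\min_\rho\min_\ell q_\ell$ are a priori attained at \emph{different} states, so they cannot be compared componentwise on a single fixed $\q$. The device that makes the proof go through is to feed the optimizer for $s_1$ into the definition of $m$, yielding $m\leqslant(1-s_1)/2$, rather than attempting to relate the two optimizers directly. I expect this to be the only genuine obstacle. For $n\geqslant 4$ the same telescoping is unavailable at the intermediate indices $1<k<n-1$, which is exactly where concavity can fail, consistent with the flatness procedure being needed in general.
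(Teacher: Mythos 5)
Your proof is correct and follows essentially the same route as the paper's: both establish $2s_1\geqslant s_2$ by evaluating at the optimizer of $s_2$, and both obtain $2s_2\geqslant s_1+s_3$ for $n=3$ by feeding the optimizer of $s_1$ into the $s_2$ objective and invoking the normalization $\sum_\ell q_\ell=1$ (your reformulation $s_2=1-\min_\rho\min_\ell q_\ell$ is just an algebraic repackaging of the paper's identity $1+q_1=(q_1+q_2)+(q_1+q_3)$ with each pair bounded by $s_2$). Your closing remark about why the telescoping breaks down at intermediate indices for $n\geqslant 4$ is consistent with the paper's explicit $n=4$ counterexample.
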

\begin{proof}
  Recall that $s_{k} = \max\limits_{\rho\in S(M,\p)} \max\limits_{\mathcal{I}_{k}} \Tr N_{\mathcal{I}_k}\,\rho$. We first show that $2s_{1} \geqslant s_{2}$.
Suppose the optimal state of $s_2 = \max\limits_{\rho\in S(M,\p)} \max\limits_{\mathcal{I}_{2}} \Tr N_{\mathcal{I}_2}\,\rho$ is taken at $\widetilde \rho$ and the optimal index set $\mathcal I_2 = \{1,2\}$. We have
\begin{align}
 s_{2} = \Tr (\ket{v_1}\bra{v_1}+\ket{v_2}\bra{v_2})\,
  \widetilde \rho \leqslant 2 \max\limits_{\mathcal{I}_1} \Tr N_{\mathcal{I}_1}\, \widetilde \rho \leqslant 2 \max\limits_{\rho\in S(M,\p)} \max\limits_{\mathcal{I}_1} \Tr N_{\mathcal{I}_1}\, \rho = 2s_{1}.
\end{align}
Furthermore, when $n=3$, we have $s_{3} = 1$. Suppose the optimal state of $s_{1}$ is taken at $\widetilde \rho$ and optimal index is taken at $\mathcal I_1 = \{1\}$. Then we have
\begin{align}
  s_{3} + s_{1} = 1 + \Tr \ket{v_1}\bra{v_1}\, \widetilde \rho & = \Tr (\ket{v_1}\bra{v_1} + \ket{v_2}\bra{v_2}) \, \widetilde \rho + \Tr (\ket{v_1}\bra{v_1} + \ket{v_3}\bra{v_3}) \, \widetilde \rho \leqslant 2 \max\limits_{\rho\in S(M,\p)} \max\limits_{\mathcal{I}_{2}} \Tr N_{\mathcal{I}_2}\,\rho = 2s_{2},
\end{align}
which completes the proof.
\end{proof}

\begin{remark}
  We show that when $n = 4$, there exists an explicit example that $\bm{s}$ is not in nonincreasing order. Consider the measurements
  \begin{align}
    M = \begin{pmatrix}
      -0.4703  & -0.3508  & -0.6040 &  -0.5394\\
    0.8392  &  0.0745 &  -0.4820  & -0.2404\\
    0.0627  &  0.2180 &  0.5452  & -0.8070\\
    0.2657  & -0.9077 &  0.3249  & -0.0051
    \end{pmatrix},\quad 
    N = \begin{pmatrix}
      0.4960  & 0.1166 &  -0.8579  & -0.0654\\
   -0.3299  & -0.2437 &  -0.2898  &  0.8647\\
    0.6759  & -0.6566 &   0.2886  &  0.1696\\
    0.4339  &  0.7042  &  0.3109  &  0.4682
    \end{pmatrix}
  \end{align}
  and the probability vector $\p=(
    1/4, \ 1/4, \ 1/4, \ 1/4)$. By running our SDP algorithm, we obatin the result $\bm{s} = (0.9047,\  0.0424,\  0.0529,\  0.0001)$ where the second element is strictly smaller than the third one.
\end{remark}

\section{Proof of Theorem 2}

In order to prove the tightness of bounds $\rr$ and $\ttt$ we first recall all the related concepts
\begin{definition}[Poset]
A partial order is a binary relation ``$\prec$'' over a set $\mathcal{L}$ satisfying reflexivity, antisymmetry, and transitivity. That is, for all $x$, $y$, and $z$ in $\mathcal{L}$, we have 
\begin{enumerate}[label=(\roman*)]
\item Reflexivity: $x \prec x$,
\item Antisymmetry: If $x \prec y$ and $y \prec x$, then $x=y$,
\item Transitivity: If $x \prec y$ and $y \prec z$, then $x \prec z$.
\end{enumerate}
\end{definition}
\noindent Note that without the antisymmetry, ``$\prec$'' is just a preorder. Let us now define the set of all $n$-dimensional probability vectors as
\begin{align}
\mathcal{P}^{n} = \left\{\p=\left(p_{1}, \ldots, p_{n}\right)~|~
p_{j}\in[0,1], \sum\limits_{j=1}^{n}p_{j}=1, p_{j}\geqslant p_{j+1}
\right\},
\end{align}
with components in non-increasing order. Accordingly, majorization is a partial order over $\mathcal{P}^{n}$, i.e. $\langle\mathcal{P}^{n}, \prec\rangle$ is a poset.

\begin{definition}[Lattice]
A poset $\langle \mathcal{L}, \prec \rangle$ is called a join-semilattice, if for any two elements $x$ and $y$ of $\mathcal{L}$, it has a unique least upper bound (lub,supremum) $x \lor y$ satisfying
\begin{enumerate}[label=(\roman*)]
\item $x \lor y \in \mathcal{L}$,
\item $x \prec x \lor y$ and $y \prec x \lor y$.
\end{enumerate}
On the other hand, $\langle \mathcal{L}, \prec \rangle$ is called a meet-semilattice, if for any two elements $x$ and $y$ of $\mathcal{L}$, it has a unique greatest lower bound (glb,infimum) $x \land y$ satisfying
\begin{enumerate}[label=(\roman*)]
\item $x \land y \in \mathcal{L}$,
\item $x \land y \prec x$ and $x \land y \prec y$.
\end{enumerate}
$\langle \mathcal{L}, \prec \rangle$ is called a lattice if it is both a join-semilattice and a meet-semilattice, and denote it as a quadruple $\langle \mathcal{L}, \prec, \land, \lor \rangle$.
\end{definition}

\begin{definition}[Bounded Lattice]
A lattice $\langle \mathcal{L}, \prec, \land, \lor \rangle$ is called bounded, if it has a top, denoted by $\top$ and a bottom, denoted by $\bot$ which satisfy
\begin{enumerate}[label=(\roman*)]
\item $x \prec \top$ \quad for all $x \in \mathcal{L}$,
\item $\bot \prec x$ \quad for all $x \in \mathcal{L}$.
\end{enumerate}
\end{definition}

\noindent By definition, $\mathcal{P}^{n}$ is bounded, since for any probability vector $\p$ belongs to the set $\mathcal{P}^{n}$ we have $\mathbf{u} \prec \p \prec \mathbf{l}$ where $\mathbf{u} := \left(1/n,\ldots,1/n\right)$ and $\mathbf{l} := \left(1,0,\ldots,0\right)$.

\begin{definition}[Complete Lattice]
A lattice $\langle \mathcal{L}, \prec, \land, \lor \rangle$ is called complete, if for any subset $\mathcal{S}\subset\mathcal{L}$, it has a greatest element, denoted by $\top$ and a least element, denoted by $\bot$ which satisfy 
\begin{enumerate}[label=(\roman*)]
\item $x \prec \top$, \, for all $x\in\mathcal{S}$ and 
$x \prec y$ \, for all $x\in\mathcal{S}$ $\Rightarrow \top \prec y$,
\item $\bot \prec x$, \, for all $x\in\mathcal{S}$ and 
$y \prec x$ \, for all $x\in\mathcal{S}$ $\Rightarrow y \prec \bot$.
\end{enumerate}
\end{definition}

\noindent By embedding the majorization ``$\prec$'', the quadruple $\langle \mathcal{P}^{n}, \prec, \land, \lor \rangle$ forms a complete lattice. We remark that the result of completeness follows directly from the work presented in \cite{Rapat1991SM}, and the algorithm in finding $\p \land \q$ and $\p \lor \q$ (also known as flatness process) was first introduced in \cite{Cicalese2002SM}. Only recently, the completeness of majorization lattice has been applied to the study of optimal common resource \cite{Bosyk2019SM} .

\begin{figure}
\centering
\begin{tikzpicture}
[minimum size=25pt,inner sep=0pt,
layer1/.style={circle,draw=cyan!80,fill=cyan!40,thick},
layer2/.style={circle,draw=green!80,fill=green!40,thick},
layer3/.style={circle,draw=blue!80,fill=blue!40,thick},
layer4/.style={circle,draw=red!80,fill=red!40,thick},
layer5/.style={circle,draw=magenta!80,fill=magenta!40,thick}] 
\node (11) at (-4,0) [layer1] {$\bot$};
\node (21) at (-2,1) [layer2] {$a \land b$};
\node (22) at (-2,-1) [layer2] {$d \land e$};
\node (31) at ( 0,2) [layer3] {a};
\node (32) at ( 0,1) [layer3] {b};
\node (33) at ( 0,0) [layer3] {c};
\node (34) at ( 0,-1) [layer3] {d};
\node (35) at ( 0,-2) [layer3] {e};
\node (41) at ( 2,2) [layer4] {$a \lor b$};
\node (42) at ( 2,0) [layer4] {$b \lor c$};
\node (43) at ( 2,-2) [layer4] {$d \lor e$};
\node (51) at ( 4,0) [layer5] {$\top$};
\draw [->] (21) to (11);
\draw [->] (22) to (11);
\draw [->] (31) to (21);
\draw [->] (33) to (11);
\draw [->] (32) to (21);
\draw [->] (34) to (22);
\draw [->] (35) to (22);
\draw [->] (41) to (31);
\draw [->] (41) to (32);
\draw [->] (42) to (32);
\draw [->] (42) to (33);
\draw [->] (43) to (34);
\draw [->] (43) to (35);
\draw [->] (51) to (41);
\draw [->] (51) to (42);
\draw [->] (51) to (43);
\end{tikzpicture}
\caption{(color online) A schematic diagram of lattice structure $\langle \mathcal{L}, \prec, \land, \lor \rangle$, consisting of a partially order set (poset) $\langle \mathcal{L}, \prec\rangle$ and two binary operations $\land$,  and $\lor$. For any two elements of $\mathcal{L}$, there exists a unique greatest lower bound (glb, $\land$) and a unique least upper bound (lub, $\lor$) under $\prec$. Here $\mathcal{L} := \left\{\bot, a\land b, d\land e, a, b, c, d, e, a\lor b, b\lor c, d\lor e, \top\right\}$, and the notation ``$\leftarrow$'' stands for ``$\prec$''. Clearly, the quadruple $\langle \mathcal{L}, \prec, \land, \lor \rangle$ is a bounded lattice, that is for any $x\in\mathcal{L}$ we have $\bot\prec x\prec\top$. However, this lattice is not complete since by defining the subset $\mathcal{S}$ as $\left\{a\land b,a,b,c,a\lor b, b\lor c\right\}$, its greatest lower bound and the least upper bound do not exist.}
\label{lattice}
\end{figure}
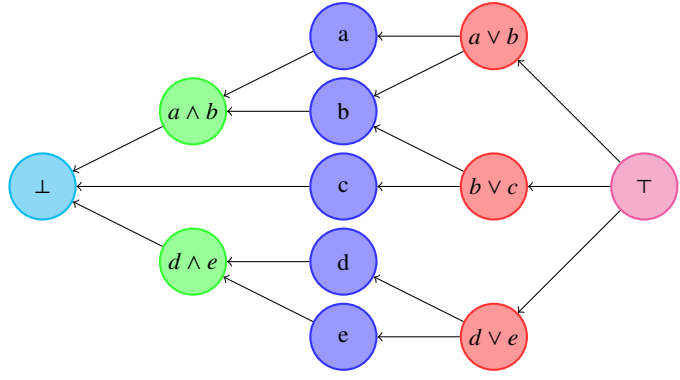

The main progress over the main text is a fine-grained approach in studying the feasible set of probability vector $\q$ conditioned on $\p$; that is
\begin{align}
\mathcal{Q}:=\left\{\q :=\left(\Tr(|v_{k}\rangle\langle v_{k}|\,\rho)\right)_{k} ~|~ \rho \in S(M,\p) \right\},
\end{align}
which forms a convex subset of $\mathcal{P}^{n}$ and our aim is to find its infimum and supremum. By appealing to the definition of lattice, it is straightforward to show that the unique infimum $\BigWedge\mathcal{Q}$ and supremum $\BigVee\mathcal{Q}$ of the set $\mathcal{Q}$ always exist whenever $\mathcal{Q}$ is a finite subset of $\mathcal{P}^{n}$. However, the feasible set $\mathcal{Q}$ will not be a finite subset of $\mathcal{P}^{n}$ in general. A crucial property to solve this problem is the completeness of majorization lattice, and several useful definitions are needed. 
Let $\mathbb{R}^{n}_{+}$ and $\mathbb{R}^{n}_{+.\geqslant}$ be defined as
\begin{align}
\mathbb{R}^{n}_{+} =  
\left\{\p=\left(p_{1}, \ldots, p_{n}\right) \in \mathbb{R}^{n}~|~
p_{j}\geqslant 0\right\},
 \quad\text{and}\quad
 \mathbb{R}^{n}_{+,\geqslant} =  \left\{\p=\left(p_{1}, \ldots, p_{n}\right) \in \mathbb{R}^{n}~|~
p_{j}\geqslant p_{j+1} \geqslant 0
\right\}.
\end{align}
A vector $\x \in \mathbb{R}^n$ is weakly majorized by $\y\in \mathbb{R}^n$, denoted by $x\prec_{w}y$, if $\sum_{j=1}^{k} x^{\downarrow}_{j} \leqslant \sum_{j=1}^{k} y^{\downarrow}_{j}$ for all $1\leqslant k \leqslant n$. Here the down-arrow notation denotes that the components of the corresponding vector are ordered in an nonincreasing order. Based on these definitions, a stronger version of completeness has been proved in \cite{Rapat1991SM}
\begin{lem}[Infimum]
Let $\mathcal{S} \subset \mathbb{R}^{n}_{+}$ be a nonempty set. Then there exists a unique vector $\BigWedge\mathcal{S} \in  \mathbb{R}^{n}_{+,\geqslant}$ such that
\begin{enumerate}[label=(\roman*)]
\item $x\in\mathcal{S} \Rightarrow \BigWedge\mathcal{S}\prec_{w}x$,
\item $y \prec_{w} x \,\,\text{for all} \,\, x\in\mathcal{S} \Rightarrow y \prec_{w} \BigWedge\mathcal{S}$.
\end{enumerate}
\end{lem}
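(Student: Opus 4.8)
The plan is to construct $\BigWedge\mathcal{S}$ explicitly from the partial sums of the elements of $\mathcal{S}$ and then verify the two defining properties together with uniqueness. For each $k \in \{0,1,\ldots,n\}$ I would set $a_k := \inf_{x \in \mathcal{S}} \sum_{j=1}^{k} x_j^{\downarrow}$, with the convention $a_0 = 0$. Because every $x \in \mathcal{S}$ has non-negative entries, each partial-sum sequence is non-negative, so the infimum is well defined and $a_k \in [0,\infty)$; nonemptiness of $\mathcal{S}$ keeps $a_k$ finite. I would then define $\BigWedge\mathcal{S}$ to be the vector whose $k$-th entry is $a_k - a_{k-1}$, so that its own partial sums are exactly the numbers $a_k$.

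The central step, and the one I expect to be the main obstacle, is showing that this candidate actually lies in $\mathbb{R}^{n}_{+,\geqslant}$, i.e. that the increments $a_k - a_{k-1}$ are non-negative and non-increasing in $k$. Non-negativity follows from monotonicity: since $\sum_{j=1}^{k} x_j^{\downarrow} \geqslant \sum_{j=1}^{k-1} x_j^{\downarrow} \geqslant a_{k-1}$ for every $x \in \mathcal{S}$, taking the infimum over $x$ gives $a_k \geqslant a_{k-1}$. For the non-increasing property I would pass to Lorenz curves: associate to each $x$ the piecewise-linear interpolation $L_x$ of the points $\{(k, \sum_{j=1}^{k} x_j^{\downarrow})\}_{k=0}^{n}$, which is concave precisely because the sorted increments $x_j^{\downarrow}$ are non-increasing. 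The pointwise infimum $L := \inf_{x \in \mathcal{S}} L_x$ is again concave, using the super-additivity $\inf_x (A_x + B_x) \geqslant \inf_x A_x + \inf_x B_x$ inside the chord inequality, and $a_k = L(k)$. Concavity of $L$ at integer points is exactly $2a_k \geqslant a_{k-1} + a_{k+1}$, which is the non-increasing-increment condition. Hence $\BigWedge\mathcal{S} \in \mathbb{R}^{n}_{+,\geqslant}$, and since it is already non-increasingly ordered its own sorted partial sums coincide with the $a_k$.

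With the construction in hand the two properties are immediate. For (i), fix $x \in \mathcal{S}$; by definition of the infimum $\sum_{j=1}^{k}(\BigWedge\mathcal{S})_j = a_k \leqslant \sum_{j=1}^{k} x_j^{\downarrow}$ for every $k$, which is exactly $\BigWedge\mathcal{S} \prec_{w} x$. For (ii), suppose $y \prec_{w} x$ for all $x \in \mathcal{S}$; then $\sum_{j=1}^{k} y_j^{\downarrow} \leqslant \sum_{j=1}^{k} x_j^{\downarrow}$ for every $x$ and every $k$, so taking the infimum over $x$ yields $\sum_{j=1}^{k} y_j^{\downarrow} \leqslant a_k = \sum_{j=1}^{k}(\BigWedge\mathcal{S})_j$, i.e. $y \prec_{w} \BigWedge\mathcal{S}$.

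Finally, for uniqueness I would invoke antisymmetry of $\prec_{w}$ on $\mathbb{R}^{n}_{+,\geqslant}$: if $w$ and $w'$ both satisfied (i) and (ii), then applying (ii) for $w$ to the common lower bound $w'$ (which satisfies (i)) gives $w' \prec_{w} w$, and symmetrically $w \prec_{w} w'$; equality of all partial sums, together with the fact that both vectors are already ordered non-increasingly, forces $w = w'$. The only place demanding genuine care is the concavity argument of the second paragraph, since it is what guarantees that the pointwise-minimal partial-sum profile $(a_k)$ is realizable by an honest non-increasing non-negative vector, rather than merely by an abstract sequence of numbers.
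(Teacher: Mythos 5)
Your proof is correct. The paper itself does not prove this lemma---it imports it verbatim from Bapat's work on the weak-majorization lattice (Ref.~[Rapat1991SM])---so your self-contained argument is a genuine addition rather than a restyling of the paper's proof. Your construction, $a_k:=\inf_{x\in\mathcal{S}}\sum_{j=1}^{k}x_j^{\downarrow}$ followed by taking increments, is the standard one, and you correctly isolate the only nontrivial point: that the profile $(a_k)_k$ has non-increasing increments, so that the increments form an honest element of $\mathbb{R}^{n}_{+,\geqslant}$ whose own ordered partial sums are the $a_k$. Your route through Lorenz curves works because the pointwise \emph{infimum} of concave functions is concave (it is the supremum that would fail), and your superadditivity step $\inf_x(A_x+B_x)\geqslant\inf_xA_x+\inf_xB_x$ is exactly what is needed inside the chord inequality. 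It is worth noting that the same estimate can be obtained without Lorenz curves by a near-optimizer argument: pick $x$ with $\sum_{j=1}^{k}x_j^{\downarrow}\leqslant a_k+\epsilon$ and split $2\sum_{j=1}^{k}x_j^{\downarrow}=\bigl(\sum_{j=1}^{k-1}x_j^{\downarrow}+x_k^{\downarrow}\bigr)+\bigl(\sum_{j=1}^{k+1}x_j^{\downarrow}-x_{k+1}^{\downarrow}\bigr)\geqslant a_{k-1}+a_{k+1}$; this is precisely the device the paper uses in its proof of Theorem~1 to show that the lower bound $\rr$ is non-increasing, so your lemma proof and the paper's Theorem~1 proof are two instances of one idea. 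Properties (i), (ii) and uniqueness are handled correctly, including the observation that antisymmetry of $\prec_{w}$ is only available once both candidates are known to be non-increasingly ordered.
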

\begin{lem}[Supremum]
Let the set of upper bounds of $\mathcal{S} \subset \mathbb{R}^{n}_{+}$ be nonempty. Then there exists a unique vector $\BigVee\mathcal{S}$ such that
\begin{enumerate}[label=(\roman*)]
\item $x\in\mathcal{S} \Rightarrow x\prec_{w}\BigVee\mathcal{S}$,
\item $x \prec_{w} y \,\,\text{for all} \,\, x\in\mathcal{S} \Rightarrow \BigVee\mathcal{S} \prec_{w} y$.
\end{enumerate}
\end{lem}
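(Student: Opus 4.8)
The plan is to construct $\BigVee\mathcal{S}$ explicitly through its Lorenz profile and then verify the two defining properties together with uniqueness. For $k=1,\dots,n$ and any $\x\in\mathbb{R}^n_+$ write the partial sum $P_k(\x):=\sum_{j=1}^k x_j^{\downarrow}$, so that weak majorization $\x\prec_w\y$ is exactly the family of scalar inequalities $P_k(\x)\leqslant P_k(\y)$ for all $k$. I would first set $\sigma_0:=0$ and $\sigma_k:=\sup_{\x\in\mathcal{S}}P_k(\x)$. The hypothesis that $\mathcal{S}$ has an upper bound $\y_0$ guarantees $\sigma_k\leqslant P_k(\y_0)<\infty$, so each $\sigma_k$ is a well-defined finite number, and since $P_k(\x)\geqslant P_{k-1}(\x)$ for every $\x$ the sequence $(\sigma_k)_{k=0}^n$ is non-decreasing.

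The subtle point is that $(\sigma_k)$ need not be concave, so the naive choice $S_k=\sigma_k$ would not be the Lorenz profile of any vector in $\mathbb{R}^n_{+,\geqslant}$ (whose partial sums are necessarily concave). Following the flatness process \cite{Cicalese2002SM}, I would let $(g_k)_{k=0}^n$ be the least concave majorant of the points $\{(k,\sigma_k)\}_{k=0}^n$, i.e. the pointwise smallest concave sequence with $g_k\geqslant\sigma_k$. Because $(\sigma_k)$ is non-decreasing with $\sigma_0=0$, its concave hull touches the data at $k=0$ and $k=n$ and interpolates linearly in between with slopes equal to non-negative difference quotients of $\sigma$; hence $(g_k)$ is itself non-decreasing with non-increasing increments. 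Defining $\BigVee\mathcal{S}:=(g_k-g_{k-1})_{k=1}^n$ therefore yields a vector whose components are non-negative and arranged in non-increasing order, i.e. a genuine element of $\mathbb{R}^n_{+,\geqslant}$ with $P_k(\BigVee\mathcal{S})=g_k$.

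Property (i) is then immediate: for every $\x\in\mathcal{S}$ and every $k$ one has $P_k(\x)\leqslant\sigma_k\leqslant g_k=P_k(\BigVee\mathcal{S})$, which is precisely $\x\prec_w\BigVee\mathcal{S}$. For property (ii) I would use the fact that for any upper bound $\y$ the partial-sum sequence $(P_k(\y))$ is automatically concave (its increments $y_k^{\downarrow}$ are non-increasing) and dominates $(\sigma_k)$, since $P_k(\y)\geqslant P_k(\x)$ for all $\x$ forces $P_k(\y)\geqslant\sigma_k$. Thus $(P_k(\y))$ is a concave majorant of $(\sigma_k)$, and by minimality of the least concave majorant $g_k\leqslant P_k(\y)$ for all $k$, that is $\BigVee\mathcal{S}\prec_w\y$. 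Uniqueness follows from antisymmetry of $\prec_w$ on $\mathbb{R}^n_{+,\geqslant}$: any two least upper bounds weakly majorize each other, hence share all partial sums $P_k$ and so coincide.

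The step I expect to be the crux is the passage from $(\sigma_k)$ to its concave hull together with the proof of (ii): one must recognize that the relevant competitors---the partial-sum profiles of upper bounds---are exactly the concave majorants of $(\sigma_k)$, so that the least concave majorant, rather than $(\sigma_k)$ itself, is forced to be the Lorenz profile of the supremum. Verifying that this concave hull lands in $\mathbb{R}^n_{+,\geqslant}$ (non-negativity and monotonicity of its increments) is the only place where the non-decreasing nature of $(\sigma_k)$, inherited from $\sigma_0=0$ and the nesting $P_{k-1}\leqslant P_k$, is essential; this reconstructs the completeness statement attributed to \cite{Rapat1991SM}.
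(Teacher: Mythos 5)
Your proof is correct, and it supplies an argument that the paper itself does not give: in the Supplemental Material this lemma is simply imported from \cite{Rapat1991SM} with no proof, so there is no ``paper's route'' to compare against step by step. Your construction --- take $\sigma_k=\sup_{\x\in\mathcal S}P_k(\x)$, note finiteness from the existence of an upper bound and monotonicity from $x_k^{\downarrow}\geqslant 0$, then pass to the least concave majorant $(g_k)$ and read off $\BigVee\mathcal S=(g_k-g_{k-1})_k$ --- is exactly the standard one behind Bapat's completeness theorem, and it is the same mechanism the paper uses concretely when it flattens the non-concave profile $(s_k)$ into $\ttt$ via the process of \cite{Cicalese2002SM}. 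The two points you flag as the crux are indeed the right ones and you handle both: the upper bounds of $\mathcal S$ are precisely the vectors whose Lorenz profiles are concave majorants of $(\sigma_k)$, which forces the least concave majorant (not $(\sigma_k)$ itself) to be the supremum's profile; and the non-decreasing nature of $(\sigma_k)$ with $\sigma_0=0$ guarantees the concave hull has non-negative, non-increasing increments, so the resulting vector lies in $\mathbb{R}^n_{+,\geqslant}$ and uniqueness follows from antisymmetry of $\prec_w$ on ordered vectors. One small remark: the well-definedness of the least concave majorant as the pointwise infimum of all concave majorants deserves a half-sentence (the pointwise infimum of concave sequences is concave), but this is routine and does not affect the argument.
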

Therefore, the completeness of majorization lattice follows from lemma 2 and lemma 3 immediately.
\begin{proposition}
  Majorization lattice is complete: for any subset $\mathcal{S}$ of $\mathcal{P}^{n}$, it has both a infimum and supremum in $\mathcal{P}^{n}$.
\end{proposition}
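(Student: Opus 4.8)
The plan is to deduce the proposition directly from the weak-majorization infimum and supremum of Lemmas~2 and~3, the only genuine work being to bridge weak majorization $\prec_{w}$ on $\mathbb{R}^{n}_{+}$ with ordinary majorization $\prec$ on the simplex $\mathcal{P}^{n}$. First I would record the elementary but crucial observation that for two vectors with equal component sums---in particular any two probability vectors---weak majorization collapses to majorization: $\x \prec_{w} \y$ demands $\sum_{j=1}^{k} x_{j}^{\downarrow} \leqslant \sum_{j=1}^{k} y_{j}^{\downarrow}$ for all $1 \leqslant k \leqslant n$, but the $k=n$ inequality reads $1 \leqslant 1$ and is automatic, so $\x \prec_{w} \y \iff \x \prec \y$ whenever both sum to $1$.

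Next I would check the hypotheses of the two lemmas for a nonempty $\mathcal{S} \subseteq \mathcal{P}^{n} \subset \mathbb{R}^{n}_{+}$. Lemma~2 applies verbatim. For Lemma~3 the set of upper bounds is nonempty, because $\mathbf{l} = (1,0,\ldots,0)$ majorizes---hence weakly majorizes---every probability vector and is therefore an upper bound of $\mathcal{S}$.

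The central step is to show that the extrema $\BigWedge\mathcal{S}$ and $\BigVee\mathcal{S}$ produced by the lemmas actually lie back in $\mathcal{P}^{n}$. For the infimum: on the one hand $\mathbf{u} = (1/n,\ldots,1/n) \prec_{w} x$ for every $x \in \mathcal{S}$, so property (ii) of Lemma~2 gives $\mathbf{u} \prec_{w} \BigWedge\mathcal{S}$, whence $1 = \sum_{j} u_{j} \leqslant \sum_{j} (\BigWedge\mathcal{S})_{j}$; on the other hand, fixing any $x \in \mathcal{S}$, property (i) gives $\BigWedge\mathcal{S} \prec_{w} x$, so $\sum_{j} (\BigWedge\mathcal{S})_{j} \leqslant \sum_{j} x_{j} = 1$. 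Hence $\BigWedge\mathcal{S}$ sums to exactly $1$, and since Lemma~2 already places it in $\mathbb{R}^{n}_{+,\geqslant}$, it is a genuine element of $\mathcal{P}^{n}$. The same bracketing handles the supremum: $x \prec_{w} \BigVee\mathcal{S}$ for any $x \in \mathcal{S}$ forces $\sum_{j} (\BigVee\mathcal{S})_{j}^{\downarrow} \geqslant 1$, while $\BigVee\mathcal{S} \prec_{w} \mathbf{l}$ forces $\sum_{j} (\BigVee\mathcal{S})_{j}^{\downarrow} \leqslant 1$, so the total is $1$; moreover $\BigVee\mathcal{S} \prec_{w} \mathbf{l}$ makes every partial sum at most $1$, so the smallest component equals $1 - \sum_{j=1}^{n-1} (\BigVee\mathcal{S})_{j}^{\downarrow} \geqslant 0$ and all components are nonnegative. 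Passing to its nonincreasing rearrangement, $\BigVee\mathcal{S} \in \mathcal{P}^{n}$.

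Finally I would translate: since every vector involved now lies in $\mathcal{P}^{n}$, the equivalence from the first step converts the $\prec_{w}$-infimum conditions (i)--(ii) of Lemma~2 into exactly the greatest-lower-bound conditions under $\prec$ in the complete-lattice definition, and Lemma~3 likewise yields the least upper bound, establishing completeness. The one point that the text's ``immediately'' quietly relies on---and the main obstacle to spell out---is precisely this return-to-simplex verification: the lemmas guarantee the extrema only as elements of $\mathbb{R}^{n}_{+}$ under weak majorization, and one must confirm that the unit total is preserved so that $\prec_{w}$ may legitimately be replaced by $\prec$ and the extrema remain in $\mathcal{P}^{n}$.
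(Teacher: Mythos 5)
Your proposal is correct and follows essentially the same route as the paper: invoke the weak-majorization infimum/supremum lemmas on $\mathcal{S}\subset\mathcal{P}^{n}\subset\mathbb{R}^{n}_{+}$, use $(1,0,\ldots,0)$ to guarantee a nonempty set of upper bounds, and sandwich the extrema between $(1/n,\ldots,1/n)$ and $(1,0,\ldots,0)$ to conclude they lie in $\mathcal{P}^{n}$. You merely make explicit the step the paper compresses into ``which is equivalent to say'' --- namely that the sandwiching forces unit total sum and nonnegativity, so that $\prec_{w}$ collapses to $\prec$ on the simplex.
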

\begin{proof}
Taking any subset $\mathcal{S}$ of $\mathcal{P}^{n}$ we get that $\mathcal{S} \subset \mathcal{P}^{n} \subset \mathbb{R}^{n}_{+}$. The existence of $(1,0,\ldots,0)$ implies that the set of upper bounds of $\mathcal{S}$ is nonempty. Therefore, $\mathcal{S}$ must have a  infimum $\BigWedge\mathcal{S}$ and supremum $\BigVee\mathcal{S}$. We now prove that both $\BigWedge\mathcal{S}$ and $\BigVee\mathcal{S}$ belongs to $\mathcal{P}^{n}$. Since $\mathcal{P}^{n}$ is bounded by $(1/n,1/n,\ldots,1/n)$ and $(1,0,\ldots,0)$, hence for any $x \in \mathcal{S}$ we have
\begin{align}
(1/n,1/n,\ldots,1/n) \prec x \prec (1,0,\ldots,0).
\end{align}
Therefore
\begin{align}
(1/n,1/n,\ldots,1/n) \prec \BigWedge\mathcal{S} \prec x \prec \BigVee\mathcal{S} \prec (1,0,\ldots,0). \quad \forall x\in\mathcal{S}
\end{align}
which is equivalent to say $\BigWedge\mathcal{S}$, $\BigVee\mathcal{S}\in\mathcal{P}^{n}$. We thus prove the completeness of majorization lattice.
\end{proof}
From the completeness of $\mathcal{P}^{n}$, we can always find the unique infimum $\BigWedge\mathcal{Q}$ and supremum $\BigVee\mathcal{Q}$ for the feasible set $\mathcal{Q}$. Now we are in position to prove that $\BigWedge\mathcal{Q}=\rr$ and $\BigVee\mathcal{Q}=\ttt$.

\begin{thm}
For pre-testing with outcome probability distribution $\p = (c_j)_{j=1}^n$, the outcome $\q$ of post-testing is bounded by $\rr$ and $\ttt$, and they are tight under majorization; That is
\begin{align}
  \BigWedge\mathcal{Q} = \rr \prec \q \prec 
  \ttt = \BigVee\mathcal{Q}.
\end{align}
\end{thm}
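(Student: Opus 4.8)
The plan is to establish both equalities $\BigWedge\mathcal{Q}=\rr$ and $\BigVee\mathcal{Q}=\ttt$ by a two-sided squeeze that invokes antisymmetry of the majorization order together with the partial-sum (Lorenz-curve) characterization: $\x \prec \y$ holds exactly when $\sum_{j=1}^k x_j^{\downarrow} \leqslant \sum_{j=1}^k y_j^{\downarrow}$ for every $k$, with equality at $k=n$. Throughout I would use that $S(M,\p)$ is compact, being a bounded and closed set of positive semidefinite matrices with prescribed diagonal entries, so that $\mathcal{Q}$ is compact and every extremum defining $r_k$ and $s_k$ is attained; this guarantees that the infimum and supremum supplied by the completeness of the majorization lattice (Proposition above) are genuine elements of $\mathcal{P}^n$ against which I can compare $\rr$ and $\ttt$.

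For the lower bound I would proceed as follows. Theorem~1 already gives $\rr \prec \q$ for all $\q \in \mathcal{Q}$, so $\rr$ is a lower bound and hence $\rr \prec \BigWedge\mathcal{Q}$ by the defining property of the infimum. For the reverse, since $\BigWedge\mathcal{Q} \prec \q$ for every $\q$, I would write $\sum_{j=1}^k (\BigWedge\mathcal{Q})_j^{\downarrow} \leqslant \sum_{j=1}^k q_j^{\downarrow} = \max_{\mathcal{I}_k}\Tr N_{\mathcal{I}_k}\,\rho$ for the state $\rho$ realising $\q$, an inequality valid for every $\rho \in S(M,\p)$; minimising the right-hand side over $\rho$ then yields $\sum_{j=1}^k (\BigWedge\mathcal{Q})_j^{\downarrow} \leqslant r_k = \sum_{j=1}^k R_j^{\downarrow}$, which is precisely $\BigWedge\mathcal{Q} \prec \rr$. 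Antisymmetry closes the case $\rr = \BigWedge\mathcal{Q}$. The conceptual reason this half is clean is that the pointwise infimum of the concave curves $\mathcal{L}(\q)$ is again concave, so no correction is required.

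For the upper bound I would run the mirror-image argument, but here the flatness step becomes essential. First, because the flatness process replaces the possibly non-concave curve $\{(k,s_k)\}$ by its least concave majorant, the curve $\mathcal{L}(\ttt)$ satisfies $\sum_{j=1}^k q_j^{\downarrow} = \max_{\mathcal{I}_k}\Tr N_{\mathcal{I}_k}\,\rho \leqslant s_k \leqslant \sum_{j=1}^k T_j$ for all $k$ and all $\q$, so $\ttt$ is an upper bound and $\BigVee\mathcal{Q} \prec \ttt$. For the reverse I would use $\q \prec \BigVee\mathcal{Q}$ to get $s_k = \max_{\q}\sum_{j=1}^k q_j^{\downarrow} \leqslant \sum_{j=1}^k (\BigVee\mathcal{Q})_j^{\downarrow}$, meaning the concave curve $\mathcal{L}(\BigVee\mathcal{Q})$ lies above every point $(k,s_k)$; since $\mathcal{L}(\ttt)$ is by construction the least concave majorant of those points, any competing concave majorant must dominate it, giving $\sum_{j=1}^k T_j \leqslant \sum_{j=1}^k (\BigVee\mathcal{Q})_j^{\downarrow}$, that is $\ttt \prec \BigVee\mathcal{Q}$. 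Antisymmetry then gives $\ttt = \BigVee\mathcal{Q}$.

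The main obstacle I anticipate is the supremum direction, and specifically the clean identification of the flatness process with the least-concave-majorant operation: one must verify both that the output $\ttt$ is a bona fide probability vector in nonincreasing order and that it is minimal among all concave majorants of $\{(k,s_k)\}$. Establishing this minimality rigorously, rather than merely observing that averaging the offending block restores concavity, is the crux, since it is exactly what forces $\mathcal{L}(\BigVee\mathcal{Q})$ to dominate $\mathcal{L}(\ttt)$ and thereby pins down the supremum. The lower-bound half, by contrast, is essentially automatic once the identity $r_k = \min_{\rho}\max_{\mathcal{I}_k}\Tr N_{\mathcal{I}_k}\,\rho$ from Theorem~1 is invoked.
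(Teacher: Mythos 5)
Your treatment of the infimum is essentially the paper's own argument: $\rr\prec\BigWedge\mathcal{Q}$ because $\rr$ is a lower bound and $\BigWedge\mathcal{Q}$ is the greatest one, and the reverse relation follows by bounding the partial sums of $\BigWedge\mathcal{Q}$ by $\min_{\rho\in S(M,\p)}\max_{\mathcal{I}_k}\Tr N_{\mathcal{I}_k}\,\rho=r_k$; combined with the fact (established in the proof of Theorem~1) that $\rr$ is already in nonincreasing order, antisymmetry closes this half. That part is fine and matches the paper.

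The supremum half, however, contains a genuine gap that you yourself flag but do not fill. Your route is to identify the flatness process with the least-concave-majorant operation on the points $\{(k,s_k)\}_{k=0}^n$ and then conclude $\ttt\prec\BigVee\mathcal{Q}$ because $\mathcal{L}(\BigVee\mathcal{Q})$ is a concave majorant of those points. This is a legitimate and arguably cleaner framing, but the statement ``$\mathcal{L}(\ttt)$ is the least concave majorant, hence dominated by any competing concave majorant'' is precisely the content that must be proved, and your proposal defers it as ``the crux.'' The paper supplies exactly this missing piece by a direct contradiction argument: assuming $\sum_{s=1}^{l}y_s<t_l$ for a smallest such index $l$ (where $\BigVee\mathcal{Q}=(y_1,\ldots,y_n)$ and $t_k=\sum_{s=1}^{k}T_s$), it rules out $l<i$ and $l\geqslant j$ using $t_l=s_l\leqslant\sum_{s=1}^{l}y_s$ there, and for $l\in\{i,\ldots,j-1\}$ it uses the minimality of $l$ to deduce $y_l<a$, hence $\sum_{s=l+1}^{j}y_s<(j-l)a$ and $\sum_{s=1}^{j}y_s<t_l+(j-l)a=t_j=s_j$, contradicting $s_j\leqslant\sum_{s=1}^{j}y_s$. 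Note also a second unaddressed point your proposal glosses over: you assert $s_k\leqslant\sum_{m=1}^{k}T_m$ for all $k$ (i.e.\ that flattening only raises the partial sums inside the averaged block), which likewise requires the defining choice of $i$ and $j$ and is implicit in the same case analysis. Until the minimality of the flattened curve among concave majorants (or, equivalently, the contradiction argument above) is actually carried out, the direction $\ttt\prec\BigVee\mathcal{Q}$ — and with it the equality $\ttt=\BigVee\mathcal{Q}$ — is not established.
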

\begin{proof}
We start by proving $\BigWedge\mathcal{Q} = \rr$. From the definition of the infimum $\BigWedge\mathcal{Q}$ we have $\rr\prec \BigWedge\mathcal{Q}$. Let $\BigWedge\mathcal{Q} := \left(x_{1},\ldots,x_{n}\right)$, then it is immediate to observe that $\BigWedge\mathcal{Q} \prec \q$ for any $\q \in \mathcal{Q}$; that is
\begin{align}
\sum\limits_{s=1}^{k} x_{s} \leqslant 
\max\limits_{\mathcal{I}_{k}} \Tr N_{\mathcal{I}_{k}}\,\rho,
\quad \forall k\in \left\{1,\ldots,n\right\} \quad\text{and}\quad \forall \rho\in S(M,\p)
\end{align}
which leads to
\begin{align}
\sum\limits_{s=1}^{k} x_{s} \leqslant 
\min\limits_{\rho\in S(M,\p)} \max\limits_{\mathcal{I}_{k}} \Tr N_{\mathcal{I}_{k}}\,\rho = r_{k},
\quad \forall k\in \left\{1,\ldots,n\right\}
\end{align}
and hence one has $\BigWedge\mathcal{Q} \prec \rr$. Since both $\BigWedge\mathcal{Q}$ and $\rr$ (see the proof of theorem \ref{thm1}) are arranged in nonincreasing order, $\BigWedge\mathcal{Q} = \rr$ holds.  

On the other hand, in order to prove that $\ttt$ is indeed the supremum $\BigVee\mathcal{Q}$, we need only to prove that $\ttt \prec \BigVee\mathcal{Q}$. Suppose now $\BigVee\mathcal{Q}$ has the form $\left(y_{1},\ldots,y_{n}\right)$, and assume, by contradiction, that there exists an index $l$ such that
\begin{align}
\sum\limits_{s=1}^{l} y_{s} < t_{l},
\end{align}
where $t_{k} := \sum_{s=1}^{k} T_{s}$ ($k=1,\ldots,n$). Without loss of generality we can assume $l$ is the smallest index for which above inequality holds, and then we have $\sum\limits_{s=1}^{l-1} y_{s} \geqslant t_{l-1}$, which implies $y_{l}<a$. 

If $l<i$, the summations of $\BigVee\mathcal{Q}$ with the upper index smaller than $l+1$ yields
\begin{align}
t_{l} = s_{l} =  \max\limits_{\mathcal{I}_{l}} \max\limits_{\rho}\left\{\,\Tr N_{\mathcal{I}_l}\,\rho \,\big|\ \rho \in S(M,\p)\,\right\} \leqslant \sum\limits_{s=1}^{l} y_{s} 
<t_{l} = s_{l}.
\quad \forall l\in \left\{1,\ldots,i-1\right\}
\end{align}
Hence, under the standing hypothesis on $l$, we must have $l\geqslant i$.

Consider now $l>j-1$, from the flatness process we have $t_{l} = s_{l}$. Therefore, for index $l$ it holds that
\begin{align}
t_{l} = s_{l} =  \max\limits_{\mathcal{I}_{l}} \max\limits_{\rho}\left\{\,\Tr N_{\mathcal{I}_l}\,\rho \,\big|\ \rho \in S(M,\p)\,\right\} \leqslant \sum\limits_{s=1}^{l} y_{s} 
<t_{l} = s_{l},
\quad \forall l\in \left\{j,\ldots,n\right\}
\end{align}
that is a contradiction. Thus we conclude that $l\in \left\{i,\ldots,j-1\right\}$.

Finally, for all $l = i, i+1, \ldots, j-1$, one has that
\begin{align}
t_{j} =s_{j} =  \max\limits_{\mathcal{I}_{j}} \max\limits_{\rho}\left\{\,\Tr N_{\mathcal{I}_j}\,\rho \,\big|\ \rho \in S(M,\p)\,\right\} \leqslant \sum\limits_{s=1}^{j} y_{s} = \sum\limits_{s=1}^{l} y_{s} + \sum\limits_{s=l+1}^{j} y_{s} <
t_{l} +  \sum\limits_{s=l+1}^{j} y_{s}.
\end{align}
For $l<s\leqslant j$, we also have $a>y_{l}\geqslant\ldots\geqslant y_{j}$. As an immediate consequence, we obtain $\sum_{s=l+1}^{j} y_{s}<(j-l)a$. Hence
\begin{align}
 \sum\limits_{s=1}^{j} y_{s} <
t_{l} + (j-l)a = t_{j}.
\end{align}
So that $t_{j}<t_{j}$, which is clearly a contradiction. Therefore, $\ttt \prec \BigVee\mathcal{Q}$. Consequently, $\ttt , \BigVee\mathcal{Q} \in \mathcal{P}^{n}$ implies that $\ttt = \BigVee\mathcal{Q}$, which completes the proof.
\end{proof}

\section{Example 1: Uncertainty relation between two measurements}

In this part, we derive the analytical result of our majorization bounds for two measurements on a qubit state.
Without loss of generality, we consider two sets of qubit measurements 
\begin{equation}
\begin{aligned}
  M  & = \left\{ \ket{u_1},\ket{u_2}\right\}\quad \text{with} \quad \ket{u_1} = \ket{0}, \ket{u_2} = \ket{1},\\
  N  & = \left\{\ket{v_1}, \ket{v_2}\right\} \quad \text{with} \quad \ket{v_1} = \cos \theta\ket{0} - \sin \theta \ket{1}, \ket{v_2} = \sin \theta \ket{0} + \cos \theta \ket{1}, \theta \in [0,\pi/2].
\end{aligned}
\label{qubit measurement definition}
\end{equation}

\begin{proposition}\label{qubit proposition}
Suppose the qubit measurements $M$, $N$ are defined above.
If the outcome probability distribution under the measurement $M$ is $\p = (\lambda, 1-\lambda)$ with $\lambda \in (0,1/2)$, then the outcome $\q$ under the measurement $N$ is bounded as $\rr \prec \q \prec \bm{s}$ with $\bm{s} = (s_{1}, 1-s_{1})$, $\rr = (r_{1}, 1-r_{1})$ and
\begin{align}
  s_{1}  = \begin{cases}
    \big(\sqrt{\lambda} \sin \theta + \sqrt{1-\lambda} \cos \theta\big)^2, & \theta \in \big[0,\frac{\pi}{4}\big),\\[25pt]
    \big(\sqrt{1-\lambda} \sin \theta + \sqrt{\lambda} \cos \theta\big)^2, & \theta \in \big[\frac{\pi}{4},\frac{\pi}{2}\big],
  \end{cases}\quad
  r_{1} = \begin{cases}
    (\sqrt{1-\lambda}\sin \theta - \sqrt{\lambda} \cos\theta)^2, & \cot(2\theta) \in \Big(-\infty, - \frac{2\sqrt{\lambda(1-\lambda)}}{1-2\lambda}\Big),
    \\[5pt]
    \hspace{2cm}\frac{1}{2}, & \cot(2\theta) \in \Big[- \frac{2\sqrt{\lambda(1-\lambda)}}{1-2\lambda}, \frac{2\sqrt{\lambda(1-\lambda)}}{1-2\lambda}\Big],\\[5pt]
    \big(\sqrt{\lambda}\sin \theta - \sqrt{1-\lambda} \cos\theta\big)^2, & \cot(2\theta) \in \Big(\frac{2\sqrt{\lambda(1-\lambda)}}{1-2\lambda},\infty\Big).
  \end{cases}
\end{align}
\end{proposition}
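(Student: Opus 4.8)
The plan is to parametrize the feasible set $S(M,\p)$ explicitly and thereby reduce both semidefinite programs of Theorem~\ref{thm1} to a single one-parameter affine optimization. Since $M$ is the computational basis, the constraints $\langle u_j|\rho|u_j\rangle = c_j$ merely fix the diagonal of $\rho$, so every $\rho\in S(M,\p)$ has the form
\begin{align}
\rho = \begin{pmatrix} \lambda & \beta \\ \bar\beta & 1-\lambda \end{pmatrix},
\end{align}
and positivity $\rho\geqslant 0$ is equivalent to $|\beta|^2\leqslant\lambda(1-\lambda)$. Writing $\mu:=\sqrt{\lambda(1-\lambda)}$, the feasible set is thus the disc $|\beta|\leqslant\mu$.

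First I would compute the overlap $q_1=\langle v_1|\rho|v_1\rangle$ as a function of $\beta$. A direct expansion gives $q_1 = \lambda\cos^2\theta+(1-\lambda)\sin^2\theta-\sin(2\theta)\,\mathrm{Re}(\beta)$ and $q_2=1-q_1$, so only $x:=\mathrm{Re}(\beta)$ enters, and since the disc constraint lets $x$ range over all of $[-\mu,\mu]$, the problem is to optimize an affine function of $x$ on this interval. Evaluating $q_1$ and $q_2$ at the endpoints $x=\pm\mu$ and completing the square yields the perfect-square forms $(\sqrt{1-\lambda}\sin\theta+\sqrt\lambda\cos\theta)^2$ and $(\sqrt\lambda\sin\theta+\sqrt{1-\lambda}\cos\theta)^2$ for the extremal values of $q_1$ and $q_2$.

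For the upper bound I would invoke that, for $n=2$, $s_1=\max_\rho\max(q_1,q_2)$, so $s_1$ is simply the larger of the two endpoint maxima just computed; their difference changes sign exactly at $\theta=\pi/4$, producing the stated two-branch expression. For the lower bound, $r_1=\min_\rho\max(q_1,q_2)=\min_{x\in[-\mu,\mu]}\max\big(q_1(x),1-q_1(x)\big)$. Since $t\mapsto\max(t,1-t)$ is minimized at $t=\tfrac12$, the minimum equals $\tfrac12$ precisely when the affine image of $q_1$ on $[-\mu,\mu]$ contains the value $\tfrac12$; otherwise it is attained at whichever endpoint drives $q_1$ closest to $\tfrac12$.

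The crux is the achievability condition for $r_1$. Solving $q_1(x)=\tfrac12$ and using the simplification $\lambda\cos^2\theta+(1-\lambda)\sin^2\theta-\tfrac12 = -\tfrac12(1-2\lambda)\cos(2\theta)$ recasts solvability on $[-\mu,\mu]$ as $|\cot(2\theta)|\leqslant \tfrac{2\sqrt{\lambda(1-\lambda)}}{1-2\lambda}$, which is exactly the middle case $r_1=\tfrac12$. Outside this interval the sign of $\cos(2\theta)$ (equivalently whether $\theta\lessgtr\pi/4$) picks the relevant endpoint, and the endpoint value is rewritten via the identity $1-(\sqrt\lambda\cos\theta+\sqrt{1-\lambda}\sin\theta)^2=(\sqrt\lambda\sin\theta-\sqrt{1-\lambda}\cos\theta)^2$ to obtain the two remaining branches. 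Finally, because majorization is a total order for $n=2$, the vectors $\bm{s}=(s_1,1-s_1)$ and $\rr=(r_1,1-r_1)$ are automatically optimal and no flatness process is needed. I expect the main obstacle to be purely the bookkeeping of this case analysis for $r_1$---correctly matching the sign of $\cos(2\theta)$ to the chosen endpoint and verifying the square-completion identities---rather than any conceptual difficulty.
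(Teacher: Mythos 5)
Your proposal is correct and follows essentially the same route as the paper's proof: parametrize $S(M,\p)$ by the (real part of the) off-diagonal entry $x\in[-\sqrt{\lambda(1-\lambda)},\sqrt{\lambda(1-\lambda)}]$, reduce both SDPs to affine optimization in $x$, obtain $s_1$ from the endpoint maxima with the sign change at $\theta=\pi/4$, and obtain $r_1$ by locating where the two branches cross (equivalently where $q_1=\tfrac12$ is attainable), which yields exactly the paper's condition $(\lambda-\tfrac12)\cot 2\theta\in[-\sqrt{\lambda(1-\lambda)},\sqrt{\lambda(1-\lambda)}]$. The only cosmetic difference is that the paper phrases the lower bound as minimizing $\gamma_1\vee\gamma_2$ and justifies realness of $\rho$ by the realness of $\ket{v_s}$, whereas you use $q_2=1-q_1$ and note directly that only $\mathrm{Re}(\beta)$ enters the objective.
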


\begin{proof}
The proof can be given by straightforward calculations.
According to Theorem 1, we have 
\begin{align}
  s_{1} = \max \{k_1,k_2\}, \ \text{with}\ \ k_s =  \underset{\rho}{\text{maximize}} &\ \Tr \ket{v_s}\bra{v_s} \, \rho \\
   \text{subject to} &\ \ \rho \geqslant 0,\, \langle0|\rho|0\rangle = \lambda,\, \langle1|\rho|1\rangle = 1-\lambda. \label{condition rho 1}
\end{align}
Since $\ket{v_s}$ are real vectors, the optimal solution can be always taken at a real operator $\rho$. From the condition~\eqref{condition rho 1}, let $\rho = \lambda \ket{0}\bra{0} + (1-\lambda)\ket{1}\bra{1} + x (\ket{0}\bra{1}+\ket{1}\bra{0})$. We have
\begin{align}
  k_1 = \underset{x}{\text{maximize}} &\ (1-2\lambda ) \sin^2\theta+\lambda - x\sin 2\theta\\
  \text{subject to} &\ -\sqrt{\lambda(1-\lambda)} \leqslant x \leqslant \sqrt{\lambda(1-\lambda)}.
\end{align}
Thus $k_1 = \big(\sqrt{1-\lambda} \sin \theta + \sqrt{\lambda} \cos \theta\big)^2$. 
Similarly, we have
$k_2 =  \big(\sqrt{\lambda} \sin \theta + \sqrt{1-\lambda} \cos \theta\big)^2$. Then we have
  \begin{align}
  s_{1} = \begin{cases}
    \big(\sqrt{\lambda} \sin \theta + \sqrt{1-\lambda} \cos \theta\big)^2, & \theta \in \big[0,\frac{\pi}{4}\big),\\[5pt]
    \big(\sqrt{1-\lambda} \sin \theta + \sqrt{\lambda} \cos \theta\big)^2, & \theta \in \big[\frac{\pi}{4},\frac{\pi}{2}\big].
  \end{cases}
\end{align}
Again based on Theorem 1, we have
\begin{align}
  r_{1} = \underset{\rho}{\text{minimize}} &\ \gamma\\
  \text{subject to} &\ \gamma \geqslant \Tr \ket{v_1}\bra{v_1}\bigcdot \rho,\ \gamma \geq \Tr \ket{v_2}\bra{v_2}\bigcdot \rho,\\
  &\ \rho \geqslant 0,\, \langle0|\rho|0\rangle = \lambda,\, \langle1|\rho|1\rangle = 1-\lambda.
\end{align}
Let $\rho = \lambda \ket{0}\bra{0} + (1-\lambda)\ket{1}\bra{1} + x (\ket{0}\bra{1}+\ket{1}\bra{0})$. We have
\begin{align}
  r_{1}= \underset{x}{\text{minimize}}  &\ \gamma_1 (x) \vee \gamma_2 (x)\\
  \text{subject to} &\ -\sqrt{\lambda(1-\lambda)} \leqslant x \leqslant \sqrt{\lambda(1-\lambda)},
\end{align}
with $\gamma_1(x)=(1-2\lambda ) \sin ^2 \theta +\lambda- x\sin 2\theta$ and $\gamma_2(x) = (1-2\lambda) \cos^2 \theta + \lambda + x \sin 2\theta$ and  $\gamma_1 (x) \vee \gamma_2 (x)$ denotes the  maximum function  between $\gamma_1$ and $\gamma_2$. Note that $\gamma_2(x) - \gamma_1(x) = 2\sin 2\theta \big[x - (\lambda-\frac12) \cot 2\theta\big]$.
If $(\lambda-\frac12) \cot(2\theta) \leqslant -\sqrt{\lambda(1-\lambda)}$, we have $\gamma_2\geqslant \gamma_1$ for all feasible $x$, then 
\begin{align}
r_{1} = \gamma_2\left(-\sqrt{\lambda(1-\lambda)}\right) = \left(\sqrt{\lambda}\sin \theta - \sqrt{1-\lambda} \cos\theta\right)^2.
\end{align}
If $(\lambda-\frac12) \cot(2\theta) \geqslant \sqrt{\lambda(1-\lambda)}$, we have $\gamma_2 \leqslant \gamma_1$ for all feasible $x$, then 
\begin{align}
  r_{1} = \gamma_1\left(\sqrt{\lambda(1-\lambda)}\right) = \left(\sqrt{1-\lambda}\sin \theta - \sqrt{\lambda} \cos\theta\right)^2.
\end{align}
If $ - \sqrt{\lambda(1-\lambda)}\leqslant (\lambda-\frac12) \cot(2\theta) \leqslant \sqrt{\lambda(1-\lambda)}$, the value $r_{1}$ is taken at the intersection point of $\gamma_1(x) = \gamma_2(x)$ and we will always have $r_{1} = \frac{1}{2}$.

\end{proof}

\begin{remark}
  Note that for the probability vectors $(\lambda,1-\lambda)$ and $(1-\lambda,\lambda)$, we will have exactly the same result of $s_{1}$ and $r_{1}$. 
\end{remark}

\begin{figure}[t]
\centering
\begin{tikzpicture}
  \node[inner sep=0pt] at (-12,2.2) {\includegraphics[width=3.6cm]{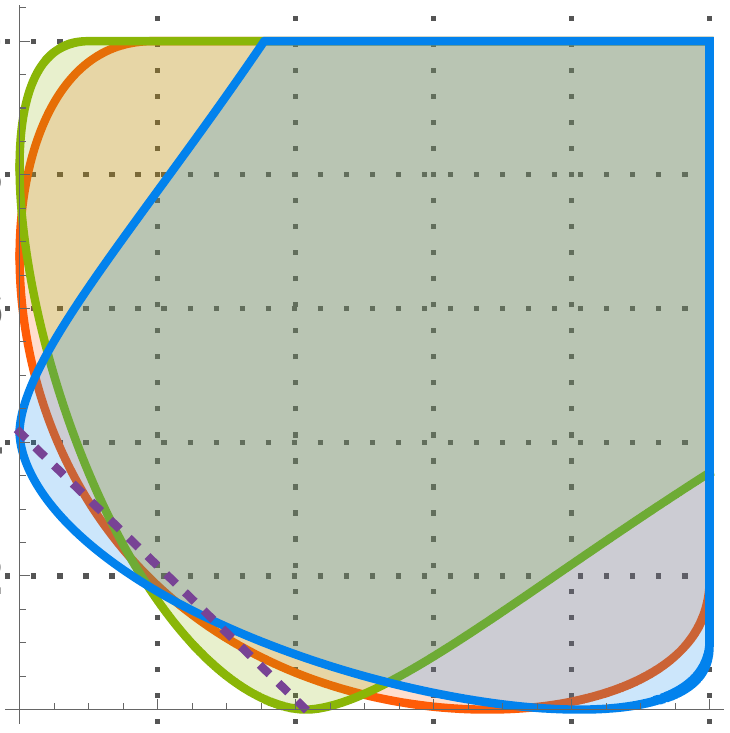}};
  \node[inner sep=0pt] at (-7,2.2) {\includegraphics[width=3.6cm]{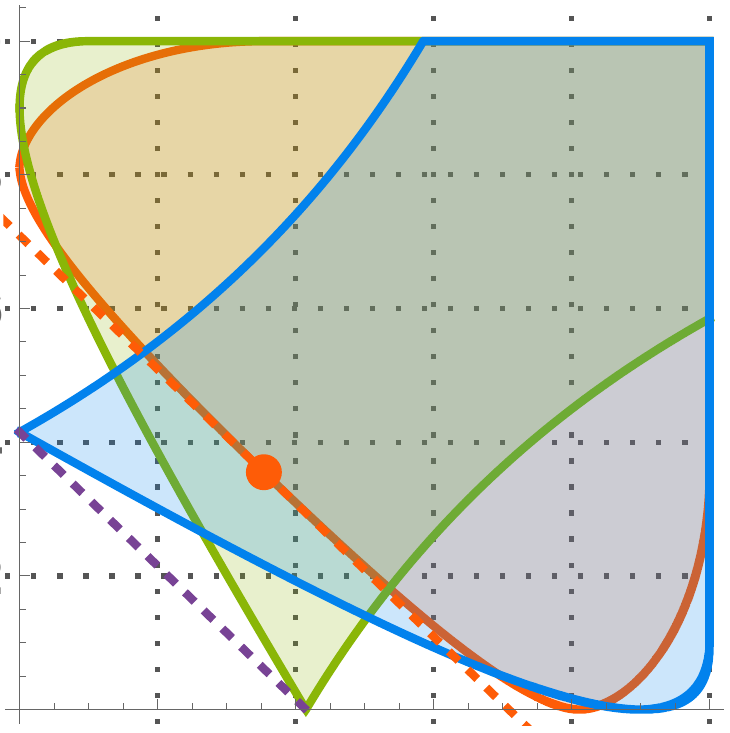}};
  \node[inner sep=0pt] at (-2,2.2) {\includegraphics[width=3.6cm]{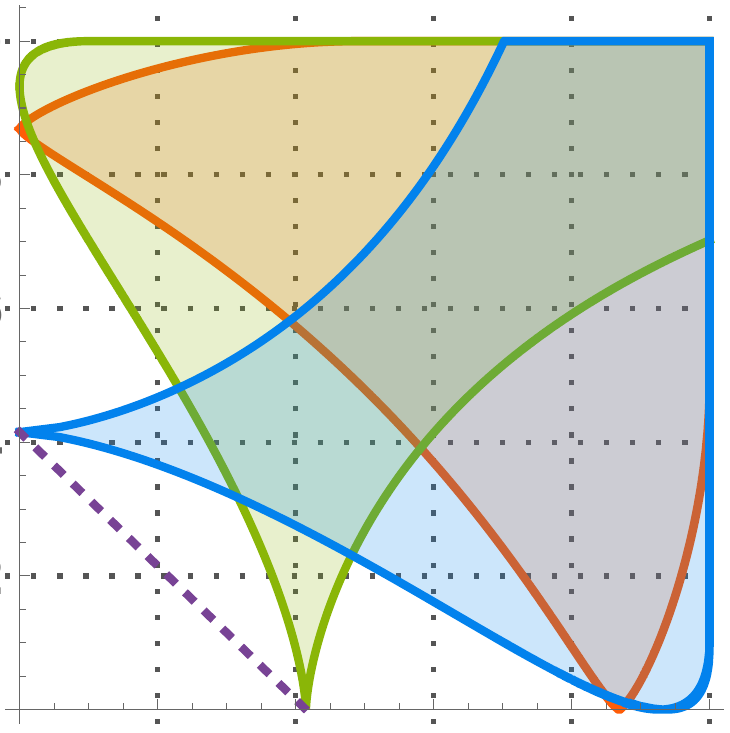}};
  \node at (-12,-0.2) {\small (a) $1/\alpha + 1/\beta = 1$};
  \node at (-7,-0.2) {\small (b) $1/\alpha + 1/\beta = 2$};
  \node at (-2,-0.2) {\small (c) $1/\alpha + 1/\beta = 3$};

  \draw[colorone,very thick] (0.6,3.3) --  (1.4,3.3);
  \node at (2.6,3.3) {\small $(\alpha,\beta)=\left(\frac{2}{c},\frac{2}{c}\right)$};

  \draw[colortwo,very thick] (0.6,2.5) --  (1.4,2.5);
  \node at (2.6,2.5) {\small $(\alpha,\beta)=\left(\infty,\frac{1}{c}\right)$};

  \draw[colorthree,very thick] (0.6,1.7) --  (1.4,1.7);
  \node at (2.6,1.7) {\small $(\alpha,\beta)=\left(\frac{1}{c},\infty\right)$};

  \draw[colorfour,line width=0.5mm,dotted] (0.6,0.9) --  (1.4,0.9);
  \node at (2.6,0.9) {\small MU bound};

  \node at (-13,0.25) {\footnotesize $0.2$};
  \node at (-12.35,0.25) {\footnotesize $0.4$};
  \node at (-11.65,0.25) {\footnotesize $0.6$};
  \node at (-10.95,0.25) {\footnotesize $0.8$};
  \node at (-10.27,0.25) {\footnotesize $1$};

  \node at (-14,1.15) {\footnotesize $0.2$};
  \node at (-14,1.8) {\footnotesize $0.4$};
  \node at (-14,2.5) {\footnotesize $0.6$};
  \node at (-14,3.15) {\footnotesize $0.8$};
  \node at (-14,3.8) {\footnotesize $1$};

  \node at (-8,0.25) {\footnotesize $0.2$};
  \node at (-7.35,0.25) {\footnotesize $0.4$};
  \node at (-6.65,0.25) {\footnotesize $0.6$};
  \node at (-5.95,0.25) {\footnotesize $0.8$};
  \node at (-5.27,0.25) {\footnotesize $1$};

  \node at (-9,1.15) {\footnotesize $0.2$};
  \node at (-9,1.8) {\footnotesize $0.4$};
  \node at (-9,2.5) {\footnotesize $0.6$};
  \node at (-9,3.15) {\footnotesize $0.8$};
  \node at (-9,3.8) {\footnotesize $1$};

  \node at (-3,0.25) {\footnotesize $0.2$};
  \node at (-2.35,0.25) {\footnotesize $0.4$};
  \node at (-1.65,0.25) {\footnotesize $0.6$};
  \node at (-0.95,0.25) {\footnotesize $0.8$};
  \node at (-0.27,0.25) {\footnotesize $1$};

  \node at (-4,1.15) {\footnotesize $0.2$};
  \node at (-4,1.8) {\footnotesize $0.4$};
  \node at (-4,2.5) {\footnotesize $0.6$};
  \node at (-4,3.15) {\footnotesize $0.8$};
  \node at (-4,3.8) {\footnotesize $1$};

\end{tikzpicture}
\caption{(colour online) Uncertainty regions with different parameters $1/\alpha + 1/\beta = c$; The red regions are uncertainty regions $\mathcal{R} \left( \mathrm{H}_{\alpha} , \mathrm{H}_{\beta} \right)$ with orders $(\alpha,\beta)=\left(2/c,2/c\right)$, the green regions stand for uncertainty regions with orders $(\alpha,\beta)=\left(\infty,1/c\right)$, the blue regions represent uncertainty regions with orders $(\alpha,\beta)=\left(1/c,\infty\right)$ and the dotted lines describe MU bound $\mathrm{H}_{\alpha} (M) + \mathrm{H}_{\beta} (N) \geqslant \log \left( 4/3 \right)$. Here, the horizontal axis represents the numerical value of $\mathrm{H}_{\alpha}$ and vertical axis stands for the numerical value of $\mathrm{H}_{\beta}$. (a) Uncertainty regions with $1/\alpha + 1/\beta = 1$; (b) Uncertainty regions with $1/\alpha + 1/\beta = 2$; (c) Uncertainty regions with $1/\alpha + 1/\beta = 3$.}
\label{MU}
\end{figure}

To illustrate the connection of our uncertainty regions with previously known results, we consider the most well-known Maassen and Uffink's (MU) entropic uncertainty relation \cite{Maassen1988SM}, which states that the R\'enyi entropies of measurement outcomes of $M$ and $N$ for any quantum state $\rho$ satisfy
\begin{align}
\label{exmu}
\mathrm{H}_{\alpha} (M) + \mathrm{H}_{\beta} (N) \geqslant -2\log c_{1},\quad \text{with} \quad c_{1} := \max_{j,k} | \langle u_{j} | v_{k} \rangle |,
\end{align}
where $\mathrm{H}_{\alpha} (M)$ and $\mathrm{H}_{\beta} (N)$ are the R\'enyi entropies of the probability distributions $\p$ and $\q$ with order $\alpha$ and $\beta$ respectively, and the orders $\alpha$ and $\beta$ satisfy $1/\alpha + 1/\beta = 2$.  Consider the measurements $M = \left\{ |0\rangle , |1\rangle \right\}$ and $N = \left\{ \frac{1}{2} |0 \rangle - \frac{\sqrt{3}}{2} |1\rangle , \frac{\sqrt{3}}{2} |0 \rangle + \frac{1}{2} |1\rangle \right\}$. Then the maximum overlap $c_1$ between $M$ and $N$ is $\sqrt{3}/2$ and hence MU relation holds for $1/\alpha + 1/\beta = 2$ with the lower bound $\log \left( 4/3 \right)$.

In this qubit case we have $\mathcal{R} \left( f , g \right)=\widetilde{\mathcal{R}} \left( f , g \right)$, and from Proposition~\ref{qubit proposition} we can study a family of uncertainty regions  $\mathcal{R} \left( \mathrm{H}_{\alpha} , \mathrm{H}_{\beta} \right)$ with $1/\alpha + 1/\beta = c$ for any $c > 0$. We first consider a family of uncertainty regions $\mathcal{R} \left( \mathrm{H}_{\alpha} , \mathrm{H}_{\beta} \right)$ with $1/\alpha + 1/\beta = 1$ in Fig. (\ref{MU}a) where MU bound losses its efficacy while our framework provides a full description for uncertainty regions with different orders. In the case $1/\alpha + 1/\beta = 2$ or $3$, MU bound is optimal for all allowable orders $\alpha$ and $\beta$. However, it is clearly not optimal for a specific pair of $\left( \alpha , \beta \right)$ except for the two extreme cases $(\alpha,\beta) = (1/2,\infty)$ and $(\infty,1/2)$. In Fig. (\ref{MU}b), for instance,  the uncertainty region with $\left( \alpha , \beta \right) = \left( 2/c , 2/c \right)$ is depicted in red, where the optimal uncertainty relation lower bound is given by a tangent line with slope $-1$ on the lower-left boundary which outperforms the MU bound. 

Actually the MU bound for qubit measurements is only tight in two extreme cases in general. For the qubit measurements $M$ and $N$ defined in~\eqref{qubit measurement definition}, the Maassen-Uffink bound is given by $-2\log c_1$ with
$c_1 = \cos \theta$ if $\theta \in \big[0,\frac{\pi}{4}\big)$ and $c_1 = \sin \theta$ if $\theta \in \big[\frac{\pi}{4},\frac{\pi}{2}\big]$.
When comparing with the Maassen-Uffink bound, we only need to consider the majorization upper bound $\bm{s}$, which corresponds to the lower-half boundary in the region plots. For the lower-half boundary
$\big(H_\alpha(\p), H_\beta(\bm{s})\big)$,
consider the zero point on the horizontal axis, i.e., $H_\beta(\bm{s}) = 0$. We have $\lambda = \cos^2 \theta$, when $\theta \in (0,\pi/4)$ and $\lambda = \sin^2 \theta$, when $\theta \in (\pi/4,\pi/2)$. Only when $\a \to \infty$, $H_\a(\p) \to H_{\infty}(\p) = -\log \max\{\lambda, 1-\lambda\} = -2 \log c_1$. That is, the zero point on the horizontal axis reaches the MU bound if and only if $\a = \infty$. Similar argument holds for $\beta$. Thus the MU bound for qubit measurements is only tight in the two extreme cases $(\alpha,\beta)=(1/2,\infty)$ or $(\infty,1/2)$. These observations indicate that our framework is generally more informative than the MU bound.

\section{Example 2: Uncertainty relation between multiple measurements}

Since the proof of the MU bound mainly relies on a version of {\it Riesz theorem} \cite{Riesz1926SM}, it would be difficult to generalize their result to multiple measurements or general $\alpha$, $\beta$ beyond the relation $1/\alpha + 1/\beta = 2$, even for the qubit case. 
As an demonstration of our framework to multiple measurements with general uncertainty measures, we consider the triple measurements $M$, $N$ (used in example 1) and $T = \left\{ \frac{\sqrt{2}}{2} |0 \rangle + \frac{\sqrt{2}}{2} |1\rangle , \frac{\sqrt{2}}{2} |0 \rangle - \frac{\sqrt{2}}{2} |1\rangle \right\}$. Denote $\bm{w}$ as the outcome probability distribution from measurement $T$. Then its R\'enyi entropy of order $\gamma$ is given by $\mathrm{H}_{\gamma} (T) := \mathrm{H}_{\gamma} (\bm{w})$. 
From Proposition~\ref{qubit proposition}, for any given probability $\p = (\lambda, 1-\lambda)$ with $\lambda\in (0,1/2)$, the majorization upper bound for $N$ and $T$ are respectively given by $\ttt^N = (t_1^N,1-t_1^N)$ and $\ttt^T=(t_1^T,1-t_1^T)$ with
\begin{align}
  t_1^N = \frac{\left(\sqrt{3}\sqrt{1-\lambda}+\sqrt{\lambda}\,\right)^2}{4}, \quad t_1^T = \frac{\left(\sqrt{1-\lambda}+\sqrt{\lambda}\,\right)^2}{2}.
\end{align}
For given $\alpha$, $\beta$, $\gamma$, the optimization
\begin{align}
  b:=\underset{\lambda}{\text{minimize}} \  H_{\alpha}(\p) + H_{\beta}(\ttt^N) + H_{\gamma}(\ttt^T)\quad 
  \text{subject to}\  0 < \lambda < \frac12,
\end{align}
leads to a state-independent lower bound for the uncertainty relation $H_{\alpha}(M) + H_{\beta}(N) + H_{\gamma}(T) \geq b$. 
Implementing this optimization via Mathematica function ``NMinimize'', we obtain the numerical results listed in the following table.

\setlength{\tabcolsep}{1em}
\begin{table}[H]
   \label{tab}
   \normalsize 
   \centering 
   \begin{tabular}{c c c c c c} 
   \toprule[\heavyrulewidth]\toprule[\heavyrulewidth]
   \textbf{Parameters $(\alpha,\beta,\gamma)$} & $(1,1,1)$ & $(1,2,3)$ & $(1,1,\infty)$ & $(\infty,1,\infty)$ & $(\infty,\infty,\infty)$\\
   \textbf{min $\mathrm{H}_{\alpha} (M) + \mathrm{H}_{\beta} (N) + \mathrm{H}_{\gamma} (T)$} & 1.15898 & 0.957202 & 0.903285 & 0.50165 & 0.474238\\ 
   \bottomrule[\heavyrulewidth] 
   \end{tabular}
   \caption{Numerical results for uncertainty relation $\mathrm{H}_{\alpha} (M) + \mathrm{H}_{\beta} (N) + \mathrm{H}_{\gamma} (T)$ consisting of three measurements $M$, $N$ and $T$. The parameters $(\alpha,\beta,\gamma)$ in the list are only chosen for demonstration. They could be any possible combinations of $\alpha$, $\beta$ and $\gamma$ in general.} 
\end{table}


\section{Example 3: Comparison with direct-sum and tensor-product bounds}

In this part we compare our results with the majorization uncertainty relations formulated in \cite{Gour2013SM,Rudnicki2013SM,Rudnicki2018SM,Rudnicki2014SM}. Explicit examples are provided to demonstrate how the obtained results in this work outperform previous ones.

Given quantum measurements $M = \{\ket{u_j}\}_{j=1}^n$ and $N = \{\ket{v_k}\}_{k=1}^n$ with $n$ possible classical outcomes, denote their joint set as $R := (\ket{r_1},\ket{r_2},\cdots,\ket{r_{2n}}) = (\ket{u_1},\cdots,\ket{u_n},\ket{v_1},\cdots,\ket{v_n})$ which contains $2n$ elements. Define
$x_k := \max_{\mathcal I_k} \|\sum_{s\in \mathcal I_k} \ket{r_s}\bra{r_s}\,\|_{\infty}$,
where the maximization is taken over all possible index subset $\mathcal I_k$ of $\{1,2,\cdots,2n\}$ with cardinality $k$.
The direct-sum upper bound is defined as \cite{Rudnicki2013SM}
\begin{align}
  \mathbf{w}_{\oplus}:= (x_1,x_2-x_1,\cdots,x_{2n} - x_{2n-1}) = (x_1,x_2-x_1,\cdots,x_{n+1} - x_{n}, \underbrace{0,0, \ldots, 0}_{\text{$n-1$ $~$times}}),
\end{align}
where the equation holds since $x_k = 2$ for $k \geqslant n+1$ \cite{Gour2013SM,Rudnicki2013SM}. Similarly, the direct-product upper bound is defined as \cite{Gour2013SM}
\begin{align}
  \mathbf{w}_{\otimes}:=\frac14 \Big(x_2^2,x_3^2-x_2^2,\cdots,x_{n+1}^2 - x_{n}^2,\underbrace{0,0, \ldots, 0}_{\text{$n^2-n$ $~$times}}\Big).
\end{align}
Denote the trivial lower and upper bounds as 
\begin{align}
  \mathbf{u} := \left(\frac{1}{n},\frac{1}{n},\cdots,\frac{1}{n}\right)\quad \text{and} \quad \mathbf{l}:= (1,0,\cdots,0).
\end{align}
In Fig.~\ref{mc}, we demonstrate that 
\begin{align}
  \mathbf{u}\oplus \mathbf{u} \prec \p \oplus \rr \prec & \p \oplus \q \prec \p \oplus \ttt \prec \mathbf{w}_\oplus \prec \mathbf{l} \oplus \mathbf{l},\\
  \mathbf{u}\otimes \mathbf{u} \prec \p \otimes \rr \prec & \p \otimes \q \prec \p \otimes \ttt \prec \mathbf{w}_\otimes \prec \mathbf{l} \otimes \mathbf{l},
\end{align}
where the probability vectors $\rr$ and $\ttt$ are obtained from Theorem 1. It is clear that our majorization upper bounds $\p \oplus \ttt$ and $\p \otimes \ttt$ provide the tighter estimation for $\p \oplus \q$ and $\p \otimes \q$ respectively. The shaded area depicts the feasible set of Lorenz curves given from $\mathcal{Q(M,N,\p)}$, whose boundaries are established by our majorization lower (purple) and upper (blue) bounds.


\begin{figure}[H]
\centering
\begin{tikzpicture}

  \draw[line width = 0mm, fill = gray!80, opacity = 0.5] (-6.85,-2.28) -- (-5.87,-0.3) -- (-3.8,2.2) -- (-2.8,2.6) -- (-1.8,2.8) -- (-1,2.78) -- (-1.7,2.6) -- (-4.8,0) -- (-6.81,-2.28);

  \draw[line width = 0mm, fill = gray!80, opacity = 0.5] (1.1,-2.4) -- (2.45,1.6) -- (3.8,2.4) -- (4.6,2.78) -- (5.3,2.8) -- (7.2,2.8) -- (5.25,2.45) -- (1.1,-2.4);

  \node[inner sep=0pt] at (-4,0) {\includegraphics[width=7cm]{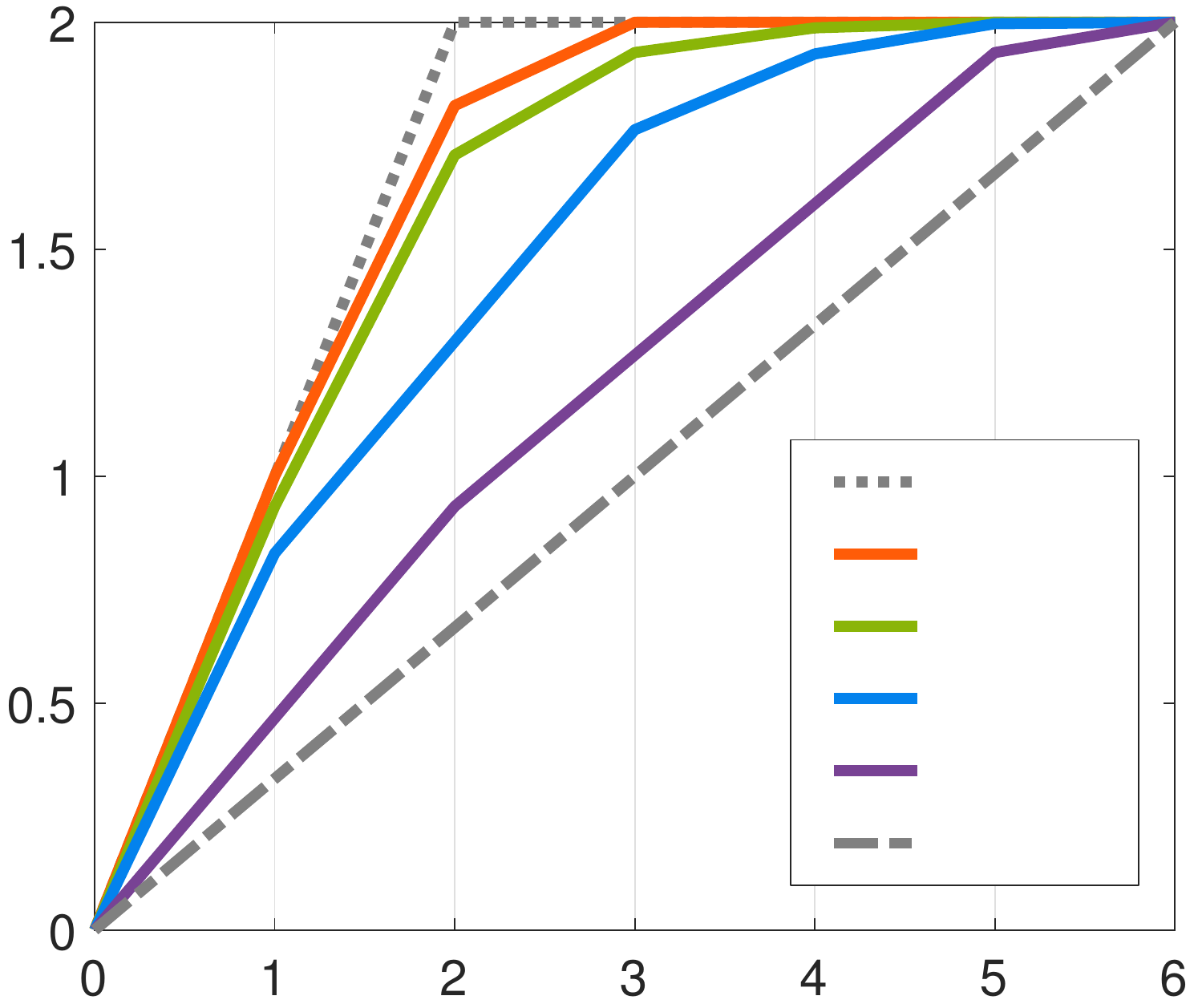}};
   \node[inner sep=0pt] at (4,0) {\includegraphics[width=7cm]{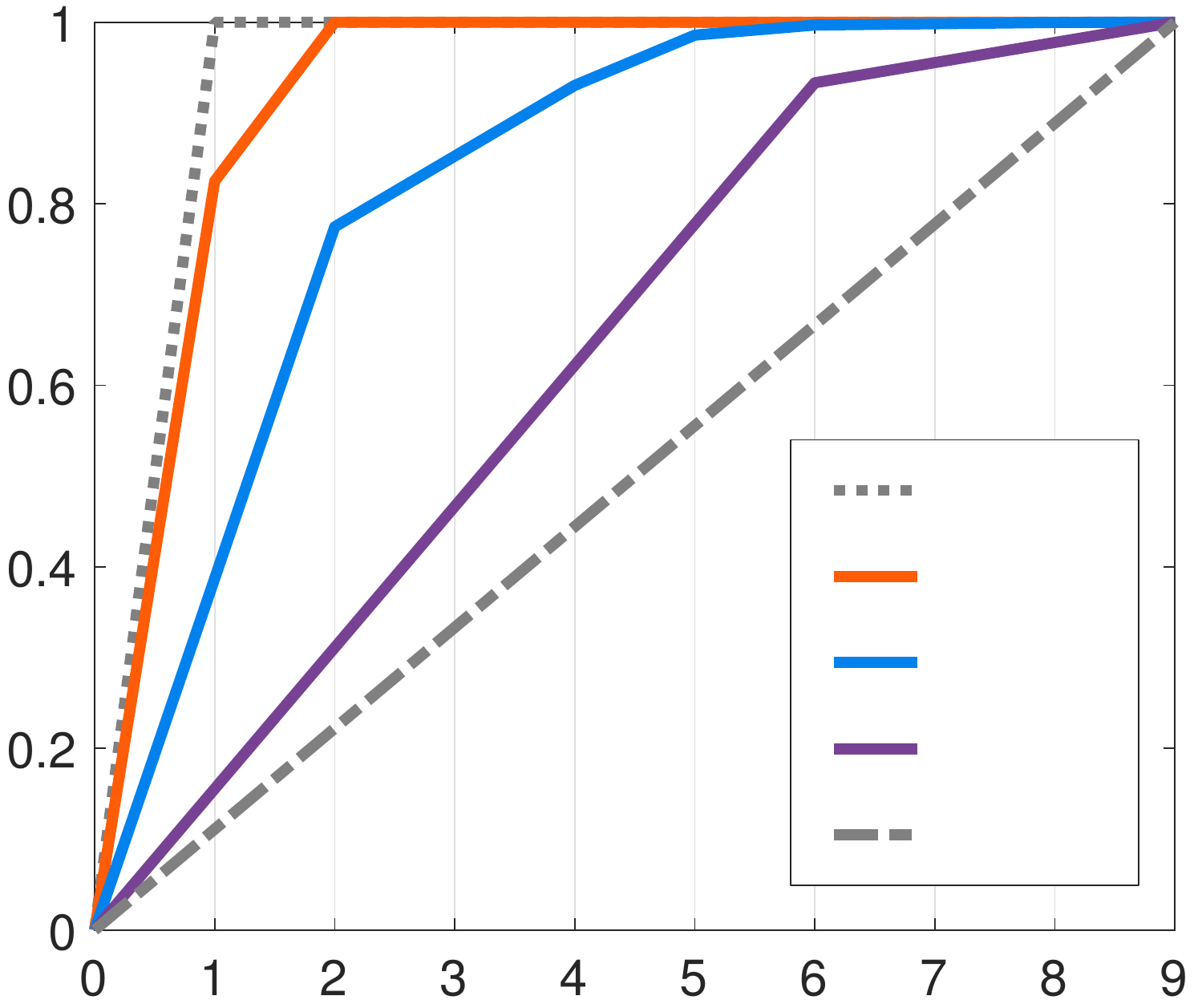}};


  \node[] at (-1.5,0.1) {$\mathbf{l} \oplus \mathbf{l}$};  
  \node[] at (-1.5,-0.3) {$\mathbf{w}_\oplus$};  
  \node[] at (-1.5,-0.7) {$\mathbf{w}_{\oplus} (\rho)$};  
  \node[] at (-1.45,-1.1) {$\p \oplus \ttt$};  
  \node[] at (-1.45,-1.55) {$\p \oplus \rr$};  
  \node[] at (-1.5,-1.95) {$\mathbf{u} \oplus \mathbf{u}$}; 

  \node[] at (-4,-3.3) {(a) Direct-sum comparison.};

  \node[] at (6.5,0.1) {$\mathbf{l} \otimes \mathbf{l}$};  
  \node[] at (6.5,-0.4) {$\mathbf{w}_\otimes$};  
  \node[] at (6.55,-0.9) {$\p \otimes \ttt$};  
  \node[] at (6.55,-1.45) {$\p \otimes \rr$};  
  \node[] at (6.5,-1.9) {$\mathbf{u} \otimes \mathbf{u}$};

  \node[] at (4,-3.3) {(b) Direct-product comparison.};

\end{tikzpicture}
\caption{(color online) Comparison of several known bounds via Lorenz curves. The example is taken as the probability $\p = (1,7,7)/15$, measurements $M = \{\,(1,0,0),(0,1,0),(0,0,1)\,\}$ and $N = \big\{\,\big(1/\sqrt{3},1/\sqrt{2},1/\sqrt{6}\big),\big(1/\sqrt{3},0,-\sqrt{2}/\sqrt{3}\big),\big(1/\sqrt{3},-1/\sqrt{2},1/\sqrt{6}\big)\,\big\}$.}
\label{mc}
\end{figure}

In the case that the spectrum of the quantum state $\rho$ is known, the direct-sum majorization bound can be improved further~\cite{Rudnicki2018SM}. Denote $\overrightarrow{\lambda(\rho)}$ and $\overrightarrow{\lambda(R\left(\mathcal I_k\right))}$ as the spectrum of $\rho$ and $R\left(\mathcal I_k\right):=\sum_{s\in \mathcal I_k} \ket{r_s}\bra{r_s}$ respectively (arrange in non-increasing order). Define $y_k := \max_{\mathcal I_k} \overrightarrow{\lambda(\rho)} \bigcdot \overrightarrow{\lambda(R\left(\mathcal I_k\right))}$, where the symbol $\bigcdot$ stands for inner product between vectors. Using the method presented in \cite{Rudnicki2014SM} one can show that, for the probability vectors $\p$ and $\q$ derived from measurements $M$ and $N$ on $\rho$, we get
\begin{align}
\p \oplus \q \prec \mathbf{w}_{\oplus} (\rho),\quad \text{with} \quad \mathbf{w}_{\oplus} (\rho):= (y_1,y_2-y_1,\cdots,y_{2n} - y_{2n-1}).
\end{align}
The same example in Fig.~(\ref{mc}a) demonstrates that the performance of our result can be still tighter than $\mathbf{w}_{\oplus} (\rho)$ where $\overrightarrow{\lambda(\rho)}$ is taken as the spectrum of $\rho = \frac{1}{15}\left(\begin{smallmatrix}   1 &  0 & 0\\
    0 & 7 & 7\\
    0 & 7 & 7
  \end{smallmatrix}\right)$. This indicates that the spectrum of a density operator is not necessary more useful than other classical outcomes of the same state, even though the former is usually regarded as more difficult to be obtained.


\section{Practical Resource Theories without Complete Tomography}

In quantum resource theories (QRTs) \cite{Chitambar2019SM}, one of the most central research topics is to study the conversion between different resource objects under certain constraints (or free operations) \cite{Plenio2007SM,Horodecki2009SM,Streltsov2017SM,Bennett1996PSM,Bennett1996CSM,Rains1999RSM,Rains1999BSM,Horodecki2000SM,Rains2001ASM,Waeldchen2016SM,Chitambar2018SM,Fang2017SM,Chitambar2016SM,Regula2018SM,Fang2018SM,Lami2019SM,Brandao2011SM,Gour2017SM,Zhao2018SM,Zhao2019SM,Wang2018SM,Liu2019SM}. In the recent study of resource theories, it is usually assumed that the density operator of our resource state is already known, which is not completely  practical. This is because quantum tomography is the main method used to assess the matrix representation of a quantum state, but the resources needed in achieving quantum tomography is exponential in the device size \cite{Sugiyama2013SM}. In particular, it is notoriously hard to determine the density matrix in high-dimensional Hilbert spaces and that is why we need to consider a more general framework in determining quantum state transformation with only partial knowledge of the state. 

We first consider QRT of entanglement. Let $|\psi\rangle^{AB}=\sum_{j=1}^{n}\sqrt{x_{j}}|j\rangle^{A}|j\rangle^{B}$ and $|\phi\rangle^{AB}=\sum_{j=1}^{n}\sqrt{y_{j}}|j\rangle^{A}|j\rangle^{B}$ be two  bipartite pure states, and their corresponding Schmidt vectors are defined as $\x := \left(x_{1}, \ldots, x_{n}\right)$ and $\y := \left(y_{1}, \ldots, y_{n}\right)$. By Nielsen's theorem \cite{Nielsen1999SM}, the necessary and sufficient condition for state transformation under local operations and classical communication (LOCC) is given as
\begin{align}
\psi \xrightarrow{\text{LOCC}} \phi \quad\Leftrightarrow\quad 
\x \prec \y  \quad(\text{Entanglement}).
\end{align}
This majorization criterion for entanglement transformation is fundamental and important since it plays an important roles in both deterministic and nondeterministic LOCC transformations, and gives rise to the concept of resource catalyst. Moreover, majorization criterion can also be found in QRT of coherence. For pure states $|\psi\rangle=\sum_{j=1}^{n}\sqrt{x_{j}}e^{i\psi_{j}}|j\rangle$ and $|\phi\rangle=\sum_{j=1}^{n}\sqrt{y_{j}}e^{i\phi_{j}}|j\rangle$, their probability amplitudes are denoted as $\x := \left(x_{1}, \ldots, x_{n}\right)$ and $\y := \left(y_{1}, \ldots, y_{n}\right)$ respectively. Similarly we have \cite{Winter2016SM,Zhu2017SM}
\begin{align}
\psi \xrightarrow[\text{IO}]{\text{SIO}} \phi \quad\Leftrightarrow\quad 
\x \prec \y  \quad(\text{Coherence}).
\end{align}
Here SIO stands for strictly incoherent operations and IO denotes incoherent operations. Inspired by these criteria, we wil focus on our study on QRTs for which
\begin{align}\label{QRT-M}
\psi \xrightarrow{\text{free}} \phi \quad\Leftrightarrow\quad 
\x\left(\psi\right) \prec \x\left(\phi\right) \quad(\text{Majorization-Based}),
\end{align}
where $\x\left(\psi\right)$ is a probability vector associated with $\psi$ (e.g. Schmidt vector). 

Let us now relate our complementary information relation with majorization-based QRTs \cite{Bosyk2019SM,Bosyk2017SM}, such as entanglement and coherence. 
If the pre-testing measurement $M$ can be implemented on our resource pure state $\psi$ via a free operation and outcomes a probability vector $\p$, then based on $\p$ it is possible to infer some useful results of resource conservation. More specifically, if we can derive the probability vector $\x\left(\psi\right)$ from post-testing with measurement $N$, then based on Theorem 2 we obtain the following result.

\begin{proposition}
If a pure state $\psi$ with free pre-testing indicates an outcome probability distribution $\p = (c_j)_j$, then a quantum state $\phi_{2}$ can be convert to $\psi$ by free operations if $\x\left(\phi_{2}\right) \prec \rr$  and $\psi$ can be transformed into $\phi_{1}$ whenever $\ttt \prec \x\left(\phi_{1}\right)$; that is
\begin{align}\label{QRT-M1}
\phi_{2} \xrightarrow{\text{free}} \psi \quad\Leftarrow\quad 
\x\left(\phi_{2}\right) \prec \rr,\quad \text{and} \quad \psi \xrightarrow{\text{free}} \phi_{1} \quad\Leftarrow\quad 
\ttt \prec \x\left(\phi_{1}\right).
\end{align}
\end{proposition}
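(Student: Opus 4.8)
The plan is to reduce both implications to the transitivity of the majorization order, combining the conversion criterion~\eqref{QRT-M} with the two-sided bound supplied by Theorem~2. First I would record the observation that makes Theorem~2 applicable here. Since the free pre-testing with $M$ on the resource state $\psi$ returns the outcome distribution $\p$, the density operator $|\psi\rangle\langle\psi|$ lies in $S(M,\p)$; consequently the post-testing outcome under $N$, which by hypothesis coincides with the resource vector $\x(\psi)$, is a member of the feasible set $\mathcal{Q}(N|M,\p)$. Theorem~2 then yields at once the sandwiching $\rr \prec \x(\psi) \prec \ttt$.

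With this in hand the two claims follow by chaining majorization relations. For the first implication I would combine the assumption $\x(\phi_{2}) \prec \rr$ with the lower bound $\rr \prec \x(\psi)$; because majorization is a partial order, hence transitive, this gives $\x(\phi_{2}) \prec \x(\psi)$, and the ``if'' direction of~\eqref{QRT-M} immediately delivers $\phi_{2} \xrightarrow{\text{free}} \psi$. Symmetrically, for the second implication I would combine the upper bound $\x(\psi) \prec \ttt$ with the assumption $\ttt \prec \x(\phi_{1})$ to obtain $\x(\psi) \prec \x(\phi_{1})$, whence $\psi \xrightarrow{\text{free}} \phi_{1}$, again by~\eqref{QRT-M}.

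There is no serious technical obstacle: the substance of the statement is entirely carried by Theorem~2, and what remains is a two-line transitivity argument. The only point deserving explicit care — which I would flag in the write-up — is the identification of $\x(\psi)$ with an element of $\mathcal{Q}$, that is, the standing assumption that the post-testing measurement $N$ is chosen so that its outcome on $\psi$ reproduces the relevant resource vector (the Schmidt vector in the entanglement setting, the amplitude vector in the coherence setting). I would also note that only the sufficiency (``if'') direction of~\eqref{QRT-M} is invoked, which is precisely why the conclusions are stated with the single arrow ``$\Leftarrow$'' rather than as equivalences.
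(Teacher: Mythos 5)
Your proposal is correct and follows exactly the route the paper intends: the paper gives no separate proof of this proposition, simply asserting that it follows ``based on Theorem 2,'' and your write-up supplies precisely that argument --- place $\x(\psi)$ in $\mathcal{Q}(N|M,\p)$ so that Theorem~2 gives $\rr \prec \x(\psi) \prec \ttt$, then chain with the hypotheses by transitivity and invoke the sufficiency direction of the majorization-based conversion criterion. Your explicit flagging of the identification of $\x(\psi)$ with a post-testing outcome is a welcome clarification of an assumption the paper only states informally.
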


%
Note that there is no general answers to the question whether $\psi$ can be convert to $\phi$ based on limited information gain from the pre-testing measurement; that is, there are three possible answers for a one-shot state transformation question: ``yes'', ``no'' and ``lack of information''.

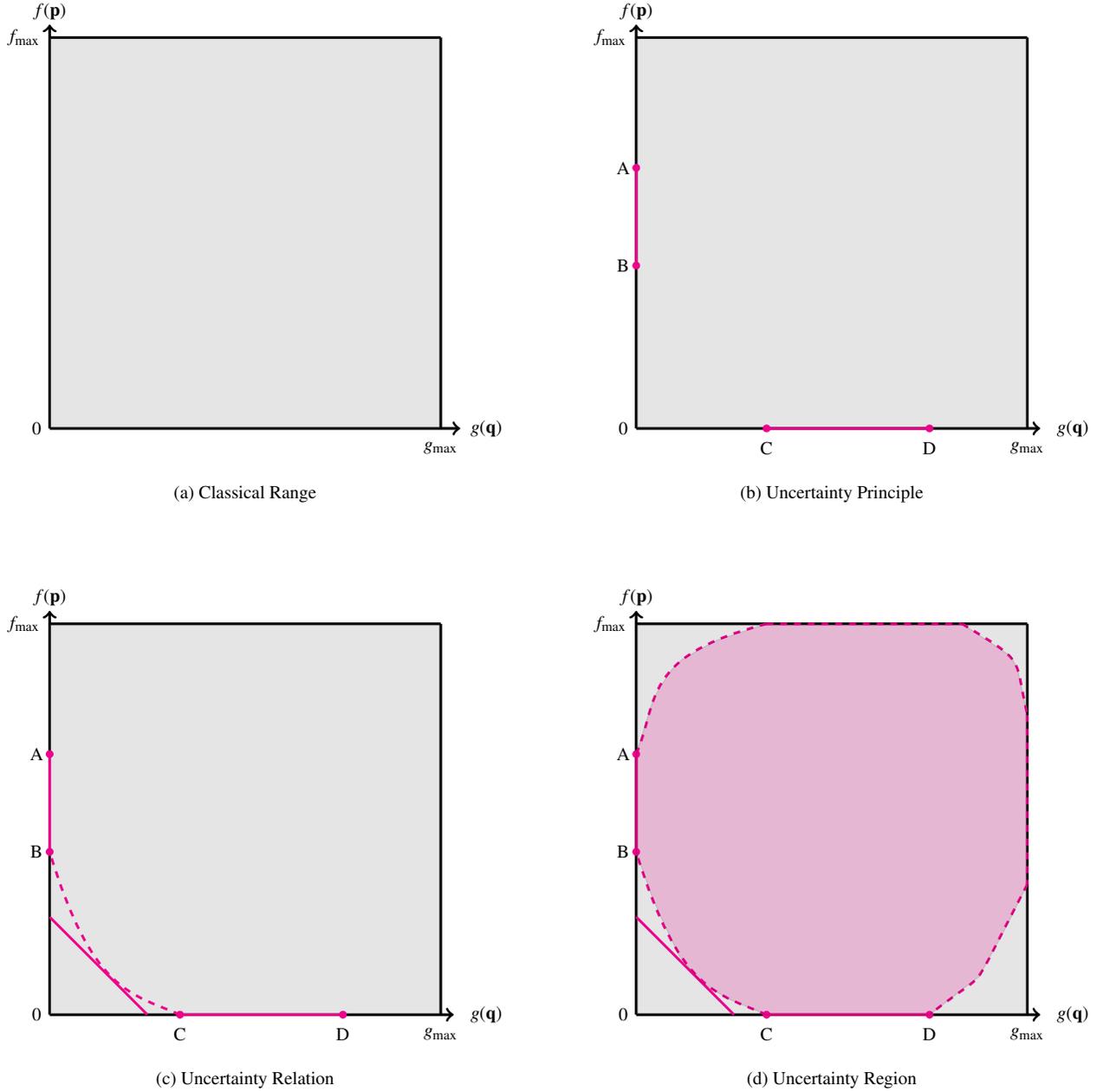
\begin{figure}[h]
\centering
\begin{tikzpicture}
    
\draw[very thick,->] (-9,-3) -- (-2.7,-3);
\draw[very thick,->] (-9,-3) -- (-9,3.2);
\draw[very thick] (-9,3) -- (-3,3);
\draw[very thick] (-3,-3) -- (-3,3);

\draw[fill=gray,opacity=0.2] (-9,-3) -- (-3,-3) -- (-3,3) -- (-9,3);

 \node at (-9,3.4) {\footnotesize $f(\p)$};
 \node at (-2.3,-3) {\footnotesize $g(\q)$};
 \node at (-9.2,-3) {\footnotesize $0$};
 
  \node at (-9.4,3) {\footnotesize $f_{\max}$};
  \node at (-3,-3.3) {\footnotesize $g_{\max}$};
  \node at (-6,-4) {\footnotesize (a) Classical Range};
  
\draw[very thick,->] (0,-3) -- (6.2,-3);
\draw[very thick,->] (0,-3) -- (0,3.2);
\draw[very thick] (0,3) -- (6,3);
\draw[very thick] (6,-3) -- (6,3); 

\draw[fill=gray,opacity=0.2] (0,-3) -- (6,-3) -- (6,3) -- (0,3);

  \node at (-0.4,3) {\footnotesize $f_{\max}$};
  \node at (-0.2,1) {\footnotesize A};
  \node at (-0.2,-0.5) {\footnotesize B};
  \node at (6,-3.3) {\footnotesize $g_{\max}$};
  \node at (2,-3.3) {\footnotesize C};
  \node at (4.5,-3.3) {\footnotesize D};
  \node at (0,3.4) {\footnotesize $f(\p)$};
 \node at (6.7,-3) {\footnotesize $g(\q)$};
  \node at (-0.2,-3) {\footnotesize $0$};
  \node at (3,-4) {\footnotesize (b) Uncertainty Principle};
  
\draw[line width=0.4mm,magenta] (0,-0.5) -- (0,1);
\draw[line width=0.4mm,magenta] (4.5,-3) -- (2,-3);
\filldraw [magenta] (0,-0.5) circle [radius=1.5pt];
\filldraw [magenta] (0,1) circle [radius=1.5pt];
\filldraw [magenta] (4.5,-3) circle [radius=1.5pt];
\filldraw [magenta] (2,-3) circle [radius=1.5pt];

\draw[very thick,->] (-9,-12) -- (-2.8,-12);
\draw[very thick,->] (-9,-12) -- (-9,-5.8);
\draw[very thick] (-9,-6) -- (-3,-6);
\draw[very thick] (-3,-12) -- (-3,-6); 

\draw[fill=gray,opacity=0.2] (-9,-12) -- (-3,-12) -- (-3,-6) -- (-9,-6);

  \node at (-9.4,-6) {\footnotesize $f_{\max}$};
  \node at (-9.2,-8) {\footnotesize A};
  \node at (-9.2,-9.5) {\footnotesize B};
  \node at (-3,-12.3) {\footnotesize $g_{\max}$};
  \node at (-7,-12.3) {\footnotesize C};
  \node at (-4.5,-12.3) {\footnotesize D};
  \node at (-9,-5.6) {\footnotesize $f(\p)$};
 \node at (-2.3,-12) {\footnotesize $g(\q)$};
  \node at (-9.2,-12) {\footnotesize $0$};
  \node at (-6,-13) {\footnotesize (c) Uncertainty Relation};
  
\draw[line width=0.4mm,magenta] (-9,-9.5) -- (-9,-8);
\draw[line width=0.4mm,magenta] (-4.5,-12) -- (-7,-12);
\filldraw [magenta] (-9,-9.5) circle [radius=1.5pt];
\filldraw [magenta] (-9,-8) circle [radius=1.5pt];
\filldraw [magenta] (-4.5,-12) circle [radius=1.5pt];
\filldraw [magenta] (-7,-12) circle [radius=1.5pt];

\draw[line width=0.4mm,magenta] (-7.5,-12) -- (-9,-10.5);
\draw[line width=0.4mm,magenta,dashed] (-7,-12) .. controls (-7.75,-11.675) and (-8.35,-11.65) .. (-9,-9.5);

\draw[very thick,->] (0,-12) -- (6.2,-12);
\draw[very thick,->] (0,-12) -- (0,-5.8);
\draw[very thick] (0,-6) -- (6,-6);
\draw[very thick] (6,-12) -- (6,-6); 

\draw[fill=gray,opacity=0.2] (0,-12) -- (6,-12) -- (6,-6) -- (0,-6);

  \node at (-0.4,-6) {\footnotesize $f_{\max}$};
  \node at (-0.2,-8) {\footnotesize A};
  \node at (-0.2,-9.5) {\footnotesize B};
  \node at (6,-12.3) {\footnotesize $g_{\max}$};
  \node at (2,-12.3) {\footnotesize C};
  \node at (4.5,-12.3) {\footnotesize D};
  \node at (0,-5.6) {\footnotesize $f(\p)$};
 \node at (6.7,-12) {\footnotesize $g(\q)$};
  \node at (-0.2,-12) {\footnotesize $0$};
  \node at (3,-13) {\footnotesize (d) Uncertainty Region};
  
\draw[line width=0.4mm,magenta] (0,-9.5) -- (0,-8);
\draw[line width=0.4mm,magenta] (4.5,-12) -- (2,-12);
\filldraw [magenta] (0,-9.5) circle [radius=1.5pt];
\filldraw [magenta] (0,-8) circle [radius=1.5pt];
\filldraw [magenta] (4.5,-12) circle [radius=1.5pt];
\filldraw [magenta] (2,-12) circle [radius=1.5pt];

\draw[line width=0.4mm,magenta] (1.5,-12) -- (0,-10.5);
\draw[line width=0.4mm,magenta,dashed] (2,-12) .. controls (1.25,-11.675) and (0.65,-11.65) .. (0,-9.5);

\draw[line width=0.4mm,magenta,dashed] (0,-8) .. controls (0.35,-7.2) and (0.05,-6.5) .. (2,-6);
\draw[line width=0.4mm,magenta,dashed] (2,-6) -- (5,-6);
\draw[line width=0.4mm,magenta,dashed] (5,-6) .. controls (6,-6.6) and (5.8,-6.5) .. (6,-7.4);
\draw[line width=0.4mm,magenta,dashed] (6,-7.4) -- (6,-10);
\draw[line width=0.4mm,magenta,dashed] (6,-10) .. controls (5,-11.9) and (5.5,-11.2) .. (4.5,-12);

\draw[fill=magenta,opacity=0.2] (0,-9.5) -- (0,-8) -- (0.25,-7.2) -- (0.35,-7) -- (0.55,-6.7) -- (0.7,-6.55) -- (1,-6.38) -- (1.25,-6.25) -- (2,-6) -- (5,-6) -- (5.75,-6.5) -- (5.85,-6.7) -- (6,-7.4) -- (6,-10) -- (5.65,-10.7) -- (5.32,-11.3) -- (5.29,-11.39) -- (4.5,-12) -- (2,-12) -- (1.4,-11.75) -- (1,-11.5) -- (0.665,-11.1) -- (0.38,-10.5) -- (0.25,-10.2) -- (0,-9.5) -- (0,-9.5);
  
\end{tikzpicture}
\caption{(color online) A schematic diagram depicts the classical range of $[0, f_{\max}] \times [0, g_{\max}]$, Heisenberg uncertainty principle, uncertainty relation $f(\p)+g(\q)\geqslant b$, and uncertainty region $\mathcal R(f,g)$.}
\label{classical}
\end{figure}

\section{Visualization of Uncertainty Principle, Uncertainty Relations, and Uncertainty Regions}

So far we have mostly talked about the rigorous mathematical formulation of uncertainty and complementarity, however a visualization that can help the reader understand the connections among them is also preferred. To start with, let us outline the trade-off between measurements in classical world. We still consider a protocol of black box testings, including pre-testing and post-testing, collect their probability distribution into vectors $\p$ and $\q$. Therefore, for given non-negative Schur concave functions $f$ and $g$, the uncertainty of $\p$ and $\q$ are bounded as 
\begin{align}
&f_{\max}:=f(1/n,1/n, \ldots, 1/n)\geqslant f(\p) \geqslant 0,\notag\\
&g_{\max}:=g(1/n,1/n, \ldots, 1/n)\geqslant g(\q) \geqslant 0.
\end{align}
Thus the classical region is the Cartesian product of $[0, f_{\max}]$ and $[0, g_{\max}]$, i.e. in classical world we have $\mathcal R(f,g) = [0, f_{\max}] \times [0, g_{\max}]$ with $\times$ stands for Cartesian product, as shown in Fig. \ref{classical} (a).

In quantum mechanics Heisenberg uncertainty principle refers to the situation, where for two incompatible measurements $M$ and $N$ there exists no sharp value for both of them. Equivalently, when the measurement outcome of $M$ appears with certainty, i.e. $f(\p)=0$, then the measurement outcome of $N$ becomes indeterministic. That means the range of $g(\q)$, denoted by $[C,D]$, does not contain the original point $o:=(0,0)$. By exchanging their roles, we have $0 \notin [B,A]$ with $[B,A]$ stands for the range of $f(\p)$ when $g(\q)=0$. The above discussion shows that uncertainty principle is equivalent to say $o \notin [B,A]$ and $o \notin [C,D]$, thus we obtained a picture illustrated by Fig. \ref{classical} (b). One can also ask, how about uncertainty relations? Let us emphasize that uncertainty relation $f(\p)+g(\q)\geqslant \min_{\rho}\{f(\p)+g(\q)\}:=b$ is tangent to the connection line of points $B$ and $C$ in Fig. \ref{classical} (c). Finally, we can delineate the boundary of uncertainty region $\mathcal R(f,g)$ in Fig. \ref{classical} (d). As argued in the main text, uncertainty region as illustrated in Fig. \ref{classical} (d) in more informative than both uncertainty principle and uncertainty relation of form $f(\p)+g(\q)\geqslant b$.

An interesting questions about uncertainty region is whether we can study the boundary of uncertainty region through all the tangent lines of $\mathcal R(f,g)$. Unfortunately, the answer is negative since the uncertainty region may not be a convex set in general. Explicit examples are given in Fig. \ref{MU} (b) and (c) in this  Supplemental Material. Therefore this is another reason why the considerations of the statistics set $\mathcal R := {\BigCup}_\rho\, \big\{ (\p(\rho),\q(\rho))\big\}$ in $\mathbb R^n \times \mathbb R^n$ and our complementary information principle are needed.

\end{document}